\newif\iftwocolumn
\newif\ifdraft
\newcommand{\eqnref}[1]{Eq.\ (\ref{#1})}
\newcommand{\figref}[1]{Fig.\ \ref{#1}}
\newcommand{\secref}[1]{Sec.\ \ref{#1}}
\newcommand{\bra}[1]{\langle #1|}
\newcommand{\ket}[1]{|#1\rangle}
\newcommand{\braket}[2]{\langle #1|#2\rangle}
\newtheorem{lemma}{Lemma}
\newtheorem{thm}{Theorem}
\newtheorem{conjecture}{Conjecture}
\theoremstyle{definition}
\newtheorem{condn}{Condition}
\newtheorem{rmthm}[thm]{Theorem}
\newtheorem{defn}{Definition}
\newcommand{\dist}{\operatorname{dist}}
\let\tensor\@undefined
\newcommand{\Tr}{\operatorname{Tr}}
\newenvironment{tikztensor}{\begin{tikzpicture}[baseline=-0.3em]\tikzset{yscale=-1}}{\end{tikzpicture}}
\newcommand{\cA}{\mathcal{A}}
\newcommand{\cB}{\mathcal{B}}
\newcommand{\cC}{\mathcal{C}}
\newcommand{\cD}{\mathcal{D}}
\newcommand{\cG}{\mathcal{G}}
\newcommand{\cH}{\mathcal{H}}
\newcommand{\cP}{\mathcal{P}}
\newcommand{\cU}{\mathcal{U}}
\newcommand{\cV}{\mathcal{V}}
\newcommand{\wtPsi}{\widetilde{\Psi}}
\newcommand{\tcG}{\widetilde{\mathcal{G}}}
\newcommand{\wtrho}{\widetilde{\rho}}
\begin{document}
\title{Symmetry protection of measurement-based quantum computation in ground states}
\ifdraft
    \author{Dominic V. Else \and Stephen D. Bartlett \and Andrew C. Doherty}
\else
    \author{Dominic V. Else}
    \affiliation{Centre for Engineered Quantum Systems, School of Physics,
    The University of Sydney, Sydney, NSW 2006, Australia}
    \author{Stephen D. Bartlett}
    \affiliation{Centre for Engineered Quantum Systems, School of Physics,
    The University of Sydney, Sydney, NSW 2006, Australia}
    \author{Andrew C. Doherty}
    \affiliation{Centre for Engineered Quantum Systems, School of Physics,
    The University of Sydney, Sydney, NSW 2006, Australia}
\fi
\ifdraft
\maketitle
\fi
\begin{abstract}
The two-dimensional cluster state, a universal resource for
measurement-based quantum computation, is also the gapped ground state of a
short-ranged Hamiltonian. Here, we examine the effect of perturbations to
this Hamiltonian. We prove that, provided the perturbation is sufficiently small
and respects a certain symmetry, the perturbed ground state remains a universal
resource. We do this by characterizing
the operation of an adaptive measurement protocol throughout a suitable 
symmetry-protected quantum phase, relying on generic properties of the phase
rather than any analytic control
over the ground state.
\end{abstract}
\ifdraft
\else
\maketitle
\fi
\section{Introduction}
A quantum computer relies on quantum entanglement to achieve
computational speedups. In the traditional, circuit-based model for quantum
computation, the required entanglement is built up throughout the course of the
computation through application of entangling gates coupling two or more
qubits at a time. 
Alternatively, in the model of
\emph{measurement-based quantum computation} (MBQC)
\cite{raussendorf-prl-2001,briegel_mbqc_natphys},
universal quantum computation is achieved solely through
single-particle operations (specifically, single-particle measurements) on a
fixed entangled resource state, independent of the quantum algorithm being
performed.

Since the initial discovery that the 2-D cluster state
is a universal resource
for MBQC \cite{raussendorf-prl-2001}, much effort has been devoted to
characterizing other universal resource states.
Many of the
universal resource states so far identified
\cite{raussendorf-prl-2001,correlation_space_prl,*correlation_space_pra,chen_tricluster,raussendorf_aklt1,*miyake_aklt,darmawan_phase} have been \emph{projected entangled
pair states} (PEPS) \cite{peps} of small bond dimension.
The tensor network structure of these states facilitates the analysis of
measurements, which might otherwise be an intractable problem.
Another advantage of such states is that under appropriate
conditions \cite{peps_injectivity}, they 
are unique (possibly gapped) ground states of local frustration-free Hamiltonians on
spin lattices. This suggests a method of constructing the resource state by cooling an
appropriate interacting spin system
\cite{verstraete_cirac_2004,jennings_etal_2009}.

However, if we wish to adopt this viewpoint of the resource state for MBQC as
the ground state of a quantum spin system, it would be too restrictive to
confine ourselves to states in which the effect of measurements can be
determined analytically from the tensor-network structure. A generic local Hamiltonian, or even an
arbitrarily small generic local perturbation to a PEPS parent Hamiltonian, will
not have such a property.
Therefore, it is
desirable to develop an understanding of MBQC in ground states of spin systems that does not
rely on analytic control of the ground state. For this reason, there has been an
interest in relating MBQC to forms of \emph{quantum
order} which, as parameters of the Hamiltonian are varied, can disappear only at a
quantum phase transition
\cite{doherty-bartlett-prl-2008,bartlett_renormalization,else_schwarz_bartlett_doherty_symmetry}.

In this paper, we will use such a connection between MBQC and quantum
order to give a precise characterization of
the operation of MBQC in the ground states of a large class of perturbations to
the 2-D cluster model. This will allow us to give a rigorous proof that such
perturbed ground states remain universal resources for MBQC provided that the
perturbation is sufficiently small. Our proof relies in part on an extension
of the 
the relationship introduced in \cite{else_schwarz_bartlett_doherty_symmetry} 
between MBQC and \emph{symmetry-protected topological (SPT)
order}
\cite{gu_wen_2009,chen_gu_wen,schuch},
a form of quantum order characterizing quantum systems which cannot be
smoothly deformed into a product state while a certain symmetry is enforced.
If the perturbation to the 2-D cluster model respects an appropriate 
symmetry, then the perturbed ground state will still possess non-trivial SPT order, and we
will show that this gives us sufficient information about the ground state to characterize the
implications of the perturbation for MBQC. 
Our result therefore
holds independently of any analytic solution for the perturbed ground state.

Our proof of universality is in the same
spirit as \cite{nielsen_cluster_faulttolerance}. There, it was shown that,
whereas measurements on the cluster state simulate quantum circuits,
measurements on a noisy cluster state simulate the same
circuits, but with added noise. Here, our task is 
complicated by the highly correlated nature of the ``errors'' in the resource
state that from result
a change in the Hamiltonian. Nevertheless, we will show how to exploit the additional
structure resulting from SPT order to establish an effective noise model for ground
states of appropriate perturbed cluster models. Therefore, universal quantum computation can be
achieved (for sufficiently small perturbations, corresponding to sufficiently
weak noise in the effective circuit model) by choosing a measurement protocol
which simulates a \emph{fault-tolerant}
quantum circuit. The universality is then a consequence of the \emph{threshold
theorem} \cite{aharonov_ben_or} for fault-tolerant
quantum computation with noisy quantum circuits.

\subsection{Summary of results}
\label{summary_of_results}
Our ultimate goal in this paper is to prove the universality for a MBQC of a
class of perturbations of the 2-D cluster state. However, in order to reach this
goal, most of this paper will be devoted to
a further elucidation of the relationship
between SPT order and MBQC. For simplicity of presentation, we will first
explore this relationship in one-dimensional systems.
It has already
been shown that in a class of quantum phases characterized by SPT order, the structure implied by SPT order
leads to the perfect operation of the identity gate in MBQC
\cite{else_schwarz_bartlett_doherty_symmetry}. Here, we
consider the 1-D cluster model, which lies in the simplest of the SPT phases
considered in \cite{else_schwarz_bartlett_doherty_symmetry},
and characterize the
operation of non-trivial (i.e., not the identity) gates in the presence
of a perturbation which respects the symmetry protecting this SPT phase. We obtain the following:
\begin{thm}[Effective noise model in one dimension]
\label{effective_noise_model}
Consider a measurement protocol which in the exact 1-D cluster model
would simulate a sequence of gates. In the perturbed resource state, the same
measurement protocol simulates the same gate sequence, but with additional
noise associated with each non-trivial gate. So long as the non-trivial gates are
sufficiently separated from each other by identity gates, this effective noise
has no correlations between different time steps, i.e.\ it is
Markovian.
\end{thm}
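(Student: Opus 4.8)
The plan is to exploit the matrix-product-state (MPS) structure of the perturbed ground state together with the constraint that the on-site symmetry imposes on its tensors. Since the perturbed state is the unique gapped ground state, it admits an injective MPS description; let $A^i$ denote its (blocked) tensors acting on the bond, or \emph{correlation}, space $\cH_C$. The defining feature of the SPT phase is that the on-site symmetry $u_g$ acts on the physical index as conjugation by a \emph{projective} representation $V_g$ on $\cH_C$, i.e.\ $\sum_i (u_g)_{ji} A^i = V_g A^j V_g^{-1}$ up to a phase. For the symmetry protecting the $1$-D cluster model the nontrivial factor system forces $V_g$ to contain the two-dimensional Pauli projective representation, so the correlation space decomposes as $\cH_C \cong \cH_L \otimes \cH_J$ with $V_g = P_g \otimes Q_g$, where $P_g$ is the irreducible projective (logical) representation on the qubit $\cH_L$ and $Q_g$ is an ordinary (linear) representation on the ``junk'' space $\cH_J$.

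First I would treat the identity gates. The symmetry-adapted measurement basis used for the identity gate is permuted only by a byproduct under each $u_g$, so the covariance above together with Schur's lemma applied to the irreducible factor $P_g$ forces the single-site measurement operator to \emph{factorize}, $A[\alpha] = P_{g_\alpha} \otimes B_\alpha$ on $\cH_L \otimes \cH_J$. This is precisely the perfect-identity-gate property established in \cite{else_schwarz_bartlett_doherty_symmetry}: the logical factor receives only a byproduct $P_{g_\alpha} \in \{P_g\}$, which is tracked and corrected by adapting later measurements, so the logical information propagates unitarily and without error. The junk factor evolves by $B_\alpha$, and the central claim is that products $B_{\alpha_n}\cdots B_{\alpha_1}$ collapse (in direction, and exponentially in $n$) onto a rank-one map. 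I would obtain this from injectivity: primitivity of the MPS transfer operator should give a strictly positive gap between the top two Lyapunov exponents of the outcome-dependent operators $B_\alpha$, so after sufficiently many identity gates the junk is driven to a fixed pure state, decoupled from $\cH_L$ and independent of its own history.

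With these two facts I would analyze a nontrivial gate and assemble the effective channel. To implement a gate other than the identity one rotates the measurement basis away from the symmetry-adapted one; the operator $A[\alpha]$ then no longer factorizes and couples $\cH_L$ to $\cH_J$, generating exactly the correlations that constitute the noise. Defining the effective logical channel by tracing out $\cH_J$ (which carries no logical information and is discarded at readout), the picture is: the incoming junk is in the fixed state produced by the preceding identity gates; the nontrivial gate applies the coupling $A[\alpha]$, creating logical--junk entanglement; and the following identity gates reset the junk, flushing that entanglement into an effective noise channel on $\cH_L$ alone. In the exact cluster limit $\cH_J$ is absent and $A[\alpha]$ reduces to the ideal gate times a byproduct, so the deviation is precisely the noise attached to that gate. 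Because the junk is reset both before and after each nontrivial gate, this noise channel depends only on the local tensor data at that gate together with the junk fixed state, hence is uncorrelated with the noise at any other gate, provided the gates are separated by enough identity gates for the reset to take effect. This is the asserted Markovian effective noise model.

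The main obstacle is the junk-reset step: controlling the convergence of the random products $B_{\alpha_n}\cdots B_{\alpha_1}$ to a rank-one map and quantifying its rate, since this is what fixes the meaning of ``sufficiently separated'' and bounds the residual inter-step correlations. This requires translating the primitivity/injectivity of the transfer operator into a Lyapunov gap for the outcome-dependent operators $B_\alpha$ (a statement about products of random matrices), and then verifying that the resulting junk contraction genuinely decouples $\cH_L$ without disturbing its logical marginal, so that the protected unitary evolution on $\cH_L$ and the contractive relaxation on $\cH_J$ can be cleanly separated. The remaining steps---covariance of the tensors, factorization via Schur's lemma, and correctability of the byproducts $P_{g_\alpha}$---are then comparatively routine.
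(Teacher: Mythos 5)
Your skeleton matches the paper's strategy at a high level: the symmetry forces the measurement operators at identity-gate sites to factorize into a Pauli byproduct on a protected subsystem and a residual operator on a junk subsystem, the identity gate is therefore perfect on the protected factor, non-trivial gates couple the two factors and generate the noise, and the intervening identity gates are what decorrelate successive gates. The genuine gap is in how you justify that last, decisive step. You assert that injectivity (primitivity of the transfer channel) yields a strictly positive Lyapunov gap for the outcome-dependent junk operators $B_\alpha$, so that products $B_{\alpha_n}\cdots B_{\alpha_1}$ collapse to a rank-one map and the junk is ``driven to a fixed pure state, independent of its own history.'' That implication is false. Take $\widetilde{A}[i]\propto U_i$ with unitaries $U_i$ whose conjugation action is mixing: the transfer channel $\widetilde{\cA}(\sigma)=\sum_i\widetilde{A}[i]\sigma\widetilde{A}[i]^{\dagger}$ is then primitive and gapped (so the MPS is injective with a finite correlation length), yet every product $B_{\alpha_n}\cdots B_{\alpha_1}$ remains proportional to a unitary, all Lyapunov exponents coincide, and no trajectory ever purifies. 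Even when trajectory-wise purification does occur, the limiting rank-one map has a random, outcome-dependent range, not a fixed one, so the ``reset to a fixed state'' picture does not hold in the form you need it.

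What actually delivers Markovianity is an \emph{averaged} statement, and this is the route the paper takes. First, the adaptive byproduct corrections — which you dismiss as routine — are shown, via the symmetry condition on the tensor, to be equivalent to operators inserted directly on the bond, converting the adaptive protocol into a non-adaptive ``dual process'' of unitary couplings between single sites of a dual state $\ket{\widetilde{\Psi}}$ (built from the junk tensors $\widetilde{A}$) and an ancilla; this reduction is what licenses any locality argument at all, since the raw protocol acts non-locally on all sites to the right of each measurement. Identity-gate sites then couple trivially and are traced out, and the noise at gate $k$ depends only on the reduced state $\widetilde{\rho}_k$ of the dual state at the corresponding site. The spectral gap of $\widetilde{\cA}$ (not a Lyapunov gap for individual Kraus products) gives $\|\widetilde{\rho}-\bigotimes_k\widetilde{\rho}_k\|_1\le m\,\mathcal{O}[\exp(-R/\widetilde{\xi})]$ for separation $R$, whence the noise superoperators $\mathcal{E}_k$ at different gates are independent up to an exponentially small error. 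Your proposal would need to be repaired by replacing the trajectory-wise ``junk reset'' with this channel-level decay-of-correlations argument, and by making the adaptive-to-non-adaptive reduction explicit.
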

The proof of Theorem \ref{effective_noise_model} will be divided into two stages. First, in Section
\ref{section_mps} we will establish Theorem \ref{effective_noise_model} for ground states which are \emph{pure
finitely-correlated states} (pFCS), a special case of \emph{matrix-product
states} (MPS).
For such states,
both the manifestations of SPT order \cite{chen_gu_wen,schuch}, and the effect of measurements
\cite{correlation_space_prl,correlation_space_pra} can be understood straightforwardly in terms of the tensor-network
structure. The ideas leading to Theorem \ref{effective_noise_model}
can thus be understood most
directly in this context. Second, in Section \ref{sec_general} we will prove Theorem
\ref{effective_noise_model} for arbitrary ground states within the SPT phase.

The extension of these ideas to the 2-D cluster model will be considered in
Section \ref{section_2d}. We will construct an appropriate symmetry group, such that the
following result is satisfied for symmetry-respecting perturbations.
\begin{thm}[Effective noise model in two dimensions]
\label{effective_noise_model_2d}
Consider a measurement protocol which in the exact 2-D cluster model
would simulate a sequence of gates. In the perturbed resource state, the same
measurement protocol simulates the same gate sequence, but with additional
noise associated with each gate. So long as the non-trivial gates are
sufficiently separated from each other by identity gates, this effective noise has no correlations
between different time steps, or between different gates taking place at the
same
time step, i.e.,\ it is local and
Markovian.
\end{thm}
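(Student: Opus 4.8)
The plan is to reduce the two-dimensional problem to the one-dimensional result already established in Theorem \ref{effective_noise_model}. In the 2-D cluster model, the measurement-based simulation of a quantum circuit proceeds by measuring the physical qubits column by column: sweeping through the columns advances the simulated circuit in time, while the distinct rows within a column carry the separate logical qubits (``wires'') of the circuit. In this picture the bonds along each row generate the single-qubit gates, exactly as in the 1-D cluster model, whereas the vertical bonds between neighbouring rows generate the entangling two-qubit gates that couple adjacent wires. My first step would therefore be to make this decomposition precise for the perturbed state, and to verify that the symmetry group constructed in Section \ref{section_2d} restricts, on each individual wire, to the symmetry that protects the 1-D cluster phase.

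Given such a restriction, the second step is to apply Theorem \ref{effective_noise_model} wire by wire. Because the protecting symmetry acts in an on-site fashion and the perturbation respects it, each wire inherits the 1-D SPT structure, so the noise attached to the single-qubit gates along any fixed row has no correlations between different time steps, provided those gates are separated by enough identity gates. This immediately supplies the temporal (Markovian) part of the claim. What remains, and what constitutes the genuinely two-dimensional content, is to control the entangling gates and to show that noise generated at gates occurring in the same time step but on different wires is uncorrelated, which is the \emph{locality} asserted in the theorem.

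For this I would argue that the vertical entangling bonds act only between adjacent rows and only at isolated space-time locations, so that the ``error'' produced at each gate is supported near that location in the correlation space. The key is then a decoupling estimate: the finite correlation length of the gapped, symmetry-respecting ground state, together with the tensor-product form of the symmetry action, should imply that error operators supported at spatially separated gate locations factorize up to exponentially small corrections. This is the two-dimensional analogue of the fixed-point, or ``forgetting'', mechanism that yields Markovianity in 1-D, now deployed along the spatial direction as well as the temporal one; separating the non-trivial gates by identity gates in space then plays the same insulating role that temporal separation plays in Theorem \ref{effective_noise_model}.

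The main obstacle I anticipate is precisely this spatial decoupling. In one dimension the identity gates drive the correlation-space state back to a symmetry-determined fixed point, cleanly erasing correlations in time; in two dimensions the ground state carries genuine two-dimensional entanglement, and one must show that the entangling bonds do not transmit correlated errors horizontally between distant wires. Making this rigorous will require combining the SPT constraints on the vertical bonds with a clustering argument for the 2-D ground state, and carefully tracking how the protecting symmetry factorizes the correlation-space degrees of freedom across wires, so that the per-gate noise channels can be shown to be independent.
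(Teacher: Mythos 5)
Your overall shape --- reduce to the 1-D machinery and then use clustering for spatial independence --- matches the paper's, but the central reduction as you state it does not work, and two concrete ingredients are missing. The step ``apply Theorem \ref{effective_noise_model} wire by wire'' fails because a generic symmetry-respecting perturbation couples neighbouring rows: the perturbed ground state is not a tensor product over wires, so no individual row is the gapped ground state of a 1-D Hamiltonian to which Theorem \ref{effective_noise_model} could be applied. The paper instead treats an entire $N_v\times 2$ \emph{column} as a single supersite of one quasi-1D chain, protected by the extensive symmetry $(Z_2\times Z_2)^{\times N}$. Maximal non-commutativity of the tensor-product factor system and Lemma \ref{uniqueness_lemma} then give a unique $2^N$-dimensional irreducible projective representation, i.e.\ an $N$-qubit protected subsystem for the whole column at once; the decomposition into one logical qubit per row is an identification made \emph{inside} that protected subsystem, not an assumption that the physical rows decouple. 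The vertical locality you worry about is then obtained not by a new 2-D clustering argument from scratch but from two facts: the coupling $\mathcal{D}_1=\sum_i\ket{i}\bra{i}\otimes V(g_i)$ factorizes over chains, so the dual-process couplings $\cG_k$ are supported only near the rows their gate acts on, and the dual Hamiltonian $\widetilde H=\mathcal{F}(H)$ is local on the original 2-D lattice, so the factorization bound of Appendix \ref{appendix_factorization} applies to the dual state over the well-separated regions $Q_k$ in both directions.

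The second missing ingredient is the relation between the actual 2-D cluster model and this quasi-1D model. On the square lattice every column is coupled to its neighbours by $CZ$ bonds, so the lattice does not decompose into rows carrying 1-D cluster states plus isolated entangling bonds. The paper handles this by conjugating with the local unitary $\mathcal{U}=\prod_{(i,j)\in L}(CZ)_{ij}$, which maps the 2-D cluster Hamiltonian to decoupled 1-D chains, transports the perturbation and the symmetry accordingly, and converts the quasi-1D measurement protocol into a single-qubit adaptive protocol on the original lattice (after first measuring the redundant qubits in the $Z$ basis). Without this step your decomposition into wires and isolated vertical bonds does not describe the model the theorem is about, and the symmetry you need the perturbation to respect is not identified. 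So the proposal correctly anticipates the difficulty (spatial decoupling) but lacks the constructions --- the columnar supersite with extensive symmetry, the locality of the dual Hamiltonian, and the $CZ$ duality $\mathcal{U}$ --- that actually resolve it.
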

Combined with the existing results on fault tolerance in the circuit model
\cite{aharonov_ben_or}, Theorem \ref{effective_noise_model_2d} will imply the
main result of this paper:
\begin{thm}
\label{mainresult}
For sufficiently small symmetry-respecting perturbations, the perturbed ground
state remains a universal resource for measurement-based quantum computation.
\end{thm}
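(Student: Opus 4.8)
The plan is to reduce Theorem~\ref{mainresult} to the threshold theorem for fault-tolerant quantum computation \cite{aharonov_ben_or}, with Theorem~\ref{effective_noise_model_2d} serving as the bridge. The symmetry-respecting hypothesis is exactly what places the perturbed ground state in the same SPT phase as the exact 2-D cluster state, so that Theorem~\ref{effective_noise_model_2d} applies. Given an arbitrary target quantum circuit $C$, I would first fix a fault-tolerant encoding of $C$ into a larger circuit $C_{\mathrm{FT}}$ over the standard gate set, laid out so that its non-trivial gates are separated by the identity gates demanded by the hypothesis of Theorem~\ref{effective_noise_model_2d}. I then choose the measurement protocol on the \emph{exact} cluster state that simulates $C_{\mathrm{FT}}$; such a protocol exists because the exact 2-D cluster state is universal \cite{raussendorf-prl-2001}. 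Running the same protocol on the perturbed ground state, Theorem~\ref{effective_noise_model_2d} guarantees that it simulates $C_{\mathrm{FT}}$ subject to noise that is local and Markovian, which is precisely the noise model under which the threshold theorem is formulated.

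The key quantitative step is to show that the strength of this effective noise can be driven below the fault-tolerance threshold $p_{\mathrm{th}}$ by taking the perturbation sufficiently small. Here I would argue that each per-gate error channel supplied by Theorem~\ref{effective_noise_model_2d} deviates from the ideal gate by an amount controlled by the perturbation: as the perturbation is switched off the perturbed ground state returns to the exact cluster state, the protocol recovers its ideal action, and each effective error channel tends to the identity. Making this quantitative --- for instance by bounding the diamond-norm distance of each channel from the identity by a function of the perturbation strength that vanishes in the unperturbed limit --- yields a perturbation scale below which the per-gate error rate lies beneath $p_{\mathrm{th}}$.

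With the noise both local/Markovian and below threshold, the threshold theorem applies directly: the encoded circuit $C_{\mathrm{FT}}$ reproduces the output of the ideal circuit $C$ to any prescribed accuracy, with only polynomial overhead in the number of gates and qubits. Since $C$ was an arbitrary quantum circuit, the perturbed ground state can simulate any quantum computation to arbitrary accuracy, which is exactly the assertion that it is a universal resource for MBQC, establishing Theorem~\ref{mainresult}.

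I expect the main obstacle to lie in the quantitative control of the second step: confirming that the effective noise produced by Theorem~\ref{effective_noise_model_2d} genuinely matches the structural assumptions of the threshold theorem being invoked (a well-defined completely positive trace-preserving error map attached to each gate, with the stated locality) and establishing a uniform, perturbation-dependent bound on its strength. The locality and Markovian structure are handed to us by Theorem~\ref{effective_noise_model_2d}, but turning ``sufficiently small perturbation'' into ``below threshold'' requires making the implicit constants explicit and verifying that they do not deteriorate as the circuit size grows.
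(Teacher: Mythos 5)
Your proposal is correct and follows essentially the same route as the paper: the authors likewise combine the two-dimensional effective noise model (made quantitative as a per-gate diamond-norm bound $\|\mathcal{E}_{t,A}-\mathcal{I}\|_{\Diamond}\leq cJ$ on the noise superoperators, proportional to the local perturbation strength $J$) with the Aharonov--Ben-Or threshold theorem to conclude universality. The quantitative step you flag as the main obstacle is exactly where the paper invests its technical effort, via gap stability and a quasiadiabatic-continuation bound showing that local perturbations change local reduced states of the dual state only by $O(J)$.
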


\section{The effective noise model construction: finitely-correlated states}
\label{section_mps}
In this section, we will prove our effective noise model result, Theorem
\ref{effective_noise_model}, for a restricted class of ground
states of infinite one-dimensional chains. Specifically, we consider
\emph{pure finitely-correlated states} (pFCS) \cite{fcs,pure_fcs}. 
A pFCS can be considered as
the thermodynamic limit of the translationally-invariant  matrix-product
states (MPS) $\ket{\Psi_N}$ generated by the MPS tensor $A$, on finite chains of $N$ sites with periodic boundary
conditions, e.g.\
\begin{equation}
\label{psi_mps}
\ket{\Psi_5} = 
\begin{tikztensor}
\mps[boundary=periodic]{(0,0)}{5}{$A$},
\end{tikztensor}
\end{equation}
(here and throughout this paper, we use a graphical notation to represent the
contraction of tensors, e.g.\ see \cite{vidal_tensor_network,peps_degeneracy_and_topology}).
The MPS tensor $A$ must satisfy an
additional condition known as \emph{injectivity}, which is related to the
exponential decay of the correlation functions; each of the finite-chain states
$\ket{\Psi_N}$ (for sufficiently large $N$) is then the unique gapped ground
state of a local frustration-free Hamiltonian \cite{perezgarcia_mps}.

We have several motivations for considering this class of ground states. First,
it is widely believed that pFCS capture the essential physics of gapped ground
states of infinite
one-dimensional translationally-invariant spin chains in general. (Note that, although the theorem
regarding the efficient approximation of ground states of finite spin chains by
MPS \cite{mps_faithfully,hastings_area_law} could be regarded as supporting this
belief, we cannot use this theorem to draw any rigorous conclusions for our
purposes here, since it does not hold that the MPS tensor $A$ can be kept
fixed independently of the system size for a constant accuracy.) Second, the ideas leading to our effective
noise model result find their simplest and most physically meaningful
expression in this context.
Finally, the proof presented here will play a dual role in
our paper, as it can also be applied to arbitrary quantum states, provided that they
satisfy a few extra criteria in common with pFCS. Thus, in order to establish
the effective noise model result for
general ground states, which we do in Section \ref{sec_general}, it will suffice to provide a
separate proof of these criteria.

The outline of this section is as follows. We begin in Secs.\
\ref{cluster_1d_no_perturbations} and
\ref{spto_fcs} by
reviewing the properties of the 1-D cluster model and the nature of the SPT
phase in which it is contained. In \secref{sec_dual_state}, we review the results of
\cite{else_schwarz_bartlett_doherty_symmetry} regarding the structure shared by pFCS ground states throughout the
whole SPT phase.
In Sec.~{\ref{sec_dual}},
we prove a key result: the standard adaptive measurement protocol acting on a
ground state in the phase is equivalent to a non-adaptive \textit{dual process}
acting on a `topologically disentangled' version of the ground state, which we
refer to as the \textit{dual state}.
In \secref{sec_cluster_dual}, we give a characterization of the dual process in the case that the
original resource state is the exact cluster state. Finally, in 
\secref{subsec_effective_noise_model}
we exploit the short range of the correlations in pFCS to construct the effective
noise model for any pFCS ground states within the SPT phase, establishing
Theorem \ref{effective_noise_model} for the case of pFCS ground states.

\subsection{The 1-D cluster model in the absence of perturbations}
\label{cluster_1d_no_perturbations}
\begin{figure}
\input{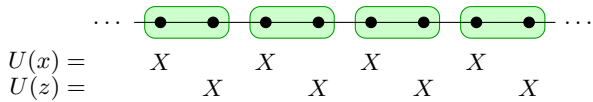}
\caption{\label{fig_1d_cluster} The generators of the on-site $Z_2 \times Z_2$ symmetry
of the 1-D cluster model. Each dot is a qubit, and the shaded areas
constitute two-qubit sites.}
\end{figure}

Here we recall the properties of the 1-D cluster model in the absence of
perturbations. The Hamiltonian is
\begin{equation}
\label{cluster_hamiltonian}
H = -\sum_{i} Z_{i-1} X_i Z_{i+1},
\end{equation}
where $X_i$ denotes the Pauli $X$ operator acting on the $i$-th site, and
similarly for $Z_i$. With appropriate boundary conditions, the system has a
unique ground state (the \emph{cluster state}), and an energy gap of 2,
independent of the system size.

This model has a global $Z_2
\times Z_2$ symmetry generated by the symmetry operations 
$\prod_{i\,\mathrm{even}} X_i$ and $\prod_{i
\, \mathrm{odd}} X_i$.  We
consider this symmetry to be \emph{on-site}, which is to say it acts on
states as a unitary representation $U(g)$ of the symmetry group $G = Z_2 \times
Z_2 = \{ 1, x, y, z \}$ (with $y = xz$), such that $U(g)$ acts 
as $U(g) = [u(g)]^{\otimes N}$, where $N$ is the number of sites
(we group qubits into two-qubit sites in order to ensure this condition is
satisfied; see \figref{fig_1d_cluster}).
As we will see in \secref{spto_fcs}, the cluster model lies in a nontrivial SPT phase with
respect to this symmetry, so that the cluster state cannot be smoothly deformed
into a product state without breaking the symmetry \cite{son-cluster-2011}.

The 1-D cluster state can be represented as a pFCS 
\cite{correlation_space_pra}. For our purposes we will take the MPS tensor
$A_\cC$ to have the form
\iftwocolumn
  \begin{align}
  \label{cluster_mps}
  A_{\mathcal{C}}[++] &= \mathbb{I}, & A_{\mathcal{C}}[+-] &= X, \nonumber \\
  A_{\mathcal{C}}[-+] &= Z, & A_{\mathcal{C}}[--] &= XZ = -iY.
  \end{align}
\else
  \begin{equation}
  \label{cluster_mps}
  A_{\mathcal{C}}[++] = \mathbb{I}, \quad A_{\mathcal{C}}[+-] = X, \quad
  A_{\mathcal{C}}[-+] = Z, \quad A_{\mathcal{C}}[--] = XZ = -iY.
  \end{equation}
\fi
This is expressed with respect to a particular basis for a two-qubit site, where
$\ket{\pm} = (\ket{0} \pm \ket{1})/\sqrt{2}$. Here,
and throughout this paper, we use the notation $A[\psi]$ to refer to
the linear operator obtained from the MPS tensor $A$ by interpreting
\begin{equation*}
\begin{tikztensor}
\mps{(0,0)}{1}{$A$}
\measurementop{(0,1)}{$\psi^{*}$}
\end{tikztensor}
\end{equation*}
as a linear operator (acting on states from the right), where
\begin{equation*}
\begin{tikztensor}
\draw(0,0) -- (0.5,0); \measurementop{(0,0)}{$\psi^{*}$}
\end{tikztensor}
\end{equation*}
denotes the tensor obtained by complex conjugation from the rank-1 tensor
corresponding to the state $\ket{\psi}$.

The MPS representation \label{cluster_state_mps} for the cluster state plays a
crucial role in the correlation space picture
\cite{correlation_space_prl,correlation_space_pra} 
for the operation of the cluster state as a quantum
computational wire \cite{gross-eisert-2010-webs}.
When a projective measurement is performed on a site, giving
the outcome $\ket{\psi}$, this is interpreted as inducing an evolution $A[\psi]$
on a ``correlation system''. In the case of the cluster state, for any
qubit rotation $U$ about the $x$- or $z$-axis, one can find a product basis $\{
\ket{\alpha} \}$ for a two-qubit site, such that
\begin{equation}
A_\cC[\alpha] = B_\alpha U,
\end{equation}
where $B_\alpha$ is an outcome-dependent unitary byproduct operator. This
byproduct can be accounted for by adjusting the basis for
future measurements depending on the outcome of the current one.

\subsection{Symmetry-protected topological order in finitely-correlated states}
\label{spto_fcs}
Here, we will review the results of \cite{chen_gu_wen,schuch} on the
manifestation of SPT order in pFCS, and demonstrate that the 1-D cluster model
indeed lies in a nontrivial SPT phase with respect to the $Z_2 \times Z_2$
symmetry.

Consider some ground state which is invariant under the on-site representation
$U(g) = [u(g)]^{\otimes N}$ of some symmetry group $G$, and which can be
represented as a pFCS, as in \eqnref{psi_mps}. 
The
tensor $A$ can be taken to obey a symmetry condition
\cite{string_order_symmetries,schuch}
\begin{equation}
\label{A_symmetry_condn_nographical}
A[u(g)^{\dagger} \ket{\psi}] = \beta(g) W(g)^{\dagger} A[\psi] W(g),
\end{equation}
where $\beta(g)$ is a one-dimensional linear representation of the symmetry
group $G$, 
and $W(g)$ is a \emph{projective} unitary representation of the
symmetry group $G$. This means that 
\begin{equation}
W(g_1) W(g_2) = \omega(g_1, g_2) W(g_1 g_2)
\end{equation}
for some function $\omega$, called the \emph{factor system} of the projective
representation, which maps pairs of group elements to phase factors. By blocking
sites, we can ensure that $\beta(g) = 1$ (however, for simplicity we will assume
that $\beta(g) = 1$ without blocking).
\eqnref{A_symmetry_condn_nographical} can then be represented graphically as
\begin{equation}
\label{symmetry_fractionalization}
\begin{tikztensor}
\tensor{(0,0)}{$A$}
\tensorconnect{(-1,0)}{(0,0)}
\tensorconnect{(0,0)}{(1,0)}
\tensorconnectvertical{(0,0)}{(0,2)}
\tensor{(0,1)}{\footnotesize $u(g)$};
\end{tikztensor}
\; = \;
\begin{tikztensor}
\tensor{(0,0)}{$A$}
\tensorconnect{(-2,0)}{(0,0)}
\tensorconnect{(0,0)}{(2,0)}
\tensorconnectvertical{(0,0)}{(0,1)}
\tensor[fattening=0.2]{(-1,0)}{\footnotesize $W(g)$}
\tensor[fattening=0.2]{(1,0)}{\footnotesize $W(g)^{\dagger}$}
\end{tikztensor}.
\end{equation}
Observe that
$W(g)$ can be multiplied by a $g$-dependent phase factor without affecting
\eqnref{symmetry_fractionalization}; a set of factor systems related by
such a transformation is referred to as a \emph{cohomology class}. The arguments of
\cite{chen_gu_wen,schuch} show that two such pFCS ground states
correspond to the same cohomology class if and only if they are in the same
symmetry-protected phase. Nontrivial cohomology classes [those not containing
the trivial factor system $\omega(g_1,g_2) = 1$] correspond to phases with
nontrivial SPT order.

As an example, consider the cluster model, and its $Z_2
\times Z_2$ symmetry. The on-site
representation $u(g)$ of the symmetry is generated by
\begin{align}
\label{ug}
u(x) &= X \otimes \mathbb{I}, \\
u(z) &= \mathbb{I} \otimes X.
\label{ug2}
\end{align}
and the MPS tensor is given by \eqnref{cluster_mps}. It can be shown that
the symmetry condition
\eqnref{symmetry_fractionalization} is satisfied with the projective representation $W =
V_\cP$, where $V_\cP$ is the \emph{Pauli representation}
\begin{equation}
\label{pauli_projective}
V_\cP(1) = \mathbb{I}, \quad V_\cP(x) = X, \quad V_\cP(z) = Z, \quad V_\cP(y) = Y.
\end{equation}
This projective representation has nontrivial cohomology class, so that the
cluster model lies in a non-trivial symmetry-protected phase.

\subsection{Symmetry-respecting perturbations to the cluster state}
\label{sec_dual_state}
Suppose we now consider a perturbation to the cluster Hamiltonian
\eqnref{cluster_hamiltonian}, such that the perturbed model still
respects the $Z_2 \times Z_2$ symmetry and admits a pFCS ground state.
Unless the perturbation is large enough
to induce a phase transition, the MPS tensor $A$ corresponding to the perturbed
pFCS ground state should still satisfy the symmetry condition
\eqnref{symmetry_fractionalization}, for some projective representation
$W(g)$
with the same factor system as the Pauli projective representation $V$
[\eqnref{pauli_projective}].

The general form of the MPS satisfying these symmetry conditions was established
in \cite{else_schwarz_bartlett_doherty_symmetry}. Here we will briefly review
the relevant results from \cite{else_schwarz_bartlett_doherty_symmetry}. We observe that the Pauli representation
satisfies a property which we refer to as \emph{maximal non-commutativity}:
\begin{defn}
A projective representation $W(g)$ of an abelian group $G$ is called
\emph{maximally non-commutative} if the subgroup 
$Z_{W}(G) \equiv \{ g \in G : W(g)$ commutes with $W(h)$ for all
$h \in G \}$
(which we can think of as the ``projective centre'' of $G$) is the trivial
subgroup.
\end{defn}
\noindent
Notice that the subgroup $Z_{W}(G)$ is actually determined by the factor system
$\omega$, since $W(g) W(h) = \omega(g,h) W(gh) = \omega(g,h) \omega(h,g)^{-1}
W(h) W(g)$. Furthermore, it is the same for all factor systems within a given
cohomology class. Much of the discussion in this paper can be applied to any SPT
phase characterized by a \emph{finite} abelian symmetry group and a
maximally non-commutative cohomology class.

An important consequence of maximal non-commutativity of a factor system is
\cite{else_schwarz_bartlett_doherty_symmetry}
\begin{lemma}
\label{uniqueness_lemma}
For each maximally non-commutative factor system $\omega$ of a finite abelian
group $G$, there exists a
unique (up to unitary equivalence) irreducible projective representation
with factor system $\omega$. The dimension of this irreducible representation is
$\sqrt{|G|}$.
\end{lemma}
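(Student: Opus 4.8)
The plan is to pass from projective representations to ordinary modules over the \emph{$\omega$-twisted group algebra} $\mathbb{C}^\omega[G]$. This is the $|G|$-dimensional associative algebra with basis $\{e_g : g \in G\}$ and multiplication $e_g e_h = \omega(g,h)\, e_{gh}$; a projective representation $W$ with factor system $\omega$ is precisely the same data as a module over $\mathbb{C}^\omega[G]$ (send $e_g \mapsto W(g)$), and irreducible projective representations correspond to simple modules. First I would record that, because $G$ is finite and we work over $\mathbb{C}$, this algebra is semisimple: averaging an arbitrary inner product over the finite set $\{W(g)\}$ produces an invariant inner product, so every invariant subspace has an invariant complement, which is the Maschke argument. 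By the Wedderburn decomposition we then have $\mathbb{C}^\omega[G] \cong \bigoplus_{j=1}^{r} M_{d_j}(\mathbb{C})$, where the $d_j$ are the dimensions of the $r$ inequivalent irreducible projective representations with factor system $\omega$, and taking dimensions gives $\sum_{j=1}^{r} d_j^2 = |G|$.

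The crux is to show $r = 1$. The number $r$ of inequivalent irreducibles equals the dimension of the centre of $\mathbb{C}^\omega[G]$, since the centre of each block $M_{d_j}(\mathbb{C})$ is one-dimensional. So I would compute this centre directly. Writing a general element as $a = \sum_g a_g e_g$ and imposing $e_h a = a e_h$ for all $h$, the commutativity of $G$ lets me compare coefficients of $e_{hg}$ on the two sides:
\begin{equation*}
a_g\, \omega(h,g) = a_g\, \omega(g,h) \quad \text{for all } g,h \in G.
\end{equation*}
Hence $a_g$ can be nonzero only when $\omega(h,g) = \omega(g,h)$ for every $h$, i.e.\ when $W(g)$ commutes with every $W(h)$ [using $W(g)W(h) = \omega(g,h)\omega(h,g)^{-1} W(h)W(g)$], which is exactly the condition $g \in Z_W(G)$. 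Thus the centre is spanned by $\{e_g : g \in Z_W(G)\}$ and has dimension $|Z_W(G)|$, so $r = |Z_W(G)|$. Maximal non-commutativity is the statement $Z_W(G) = \{1\}$, giving $r = 1$: there is a unique irreducible projective representation with factor system $\omega$, up to isomorphism.

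Existence comes for free from the same decomposition, since the (nonzero) regular module of $\mathbb{C}^\omega[G]$ carries factor system $\omega$ and contains at least one simple summand. With $r=1$ the single term in $\sum_{j} d_j^2 = |G|$ forces $d_1 = \sqrt{|G|}$, which is the claimed dimension. Finally, to upgrade module-isomorphism to \emph{unitary} equivalence I would invoke the invariant inner product constructed above to make each irreducible unitary, and then note that any intertwiner of two irreducible unitary representations is proportional to a unitary by Schur's lemma and polar decomposition. The main obstacle is not any single step but assembling the right algebraic framework: once the problem is phrased through the twisted group algebra, semisimplicity and the Wedderburn count do the work, and the only genuinely $\omega$-dependent computation is identifying the centre with $Z_W(G)$.
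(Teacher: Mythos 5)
Your proof is correct and complete. One thing to note at the outset: this paper does not actually prove Lemma \ref{uniqueness_lemma} --- it states it as a known consequence of maximal non-commutativity and cites Ref.~\cite{else_schwarz_bartlett_doherty_symmetry} for the proof, so there is no in-paper argument to compare against line by line. Your route through the twisted group algebra $\mathbb{C}^\omega[G]$ is the standard structural one: Maschke gives semisimplicity (your averaging argument implicitly uses that the factor system takes values in $U(1)$, which is consistent with the paper's convention that $\omega$ maps pairs of group elements to phase factors, so this is fine), Wedderburn gives $\sum_j d_j^2 = |G|$, and the identification of the centre of $\mathbb{C}^\omega[G]$ with the span of $\{e_g : g \in Z_W(G)\}$ is exactly where maximal non-commutativity enters --- the coefficient comparison $a_g\,\omega(h,g) = a_g\,\omega(g,h)$ correctly uses commutativity of $G$ to match basis elements, and the condition $\omega(g,h)=\omega(h,g)$ for all $h$ is precisely $g \in Z_W(G)$ via the relation the paper itself records after the definition of $Z_W(G)$. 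The existence claim via the regular module and the upgrade from module isomorphism to unitary equivalence via Schur plus polar decomposition are both sound. The argument buys slightly more than the lemma asks for: it shows in general that the number of inequivalent irreducibles with factor system $\omega$ equals $|Z_\omega(G)|$, of which the lemma is the $|Z_\omega(G)|=1$ special case.
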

\noindent Specifically, the Pauli representation $V_\mathcal{P}$ of $Z_2 \times Z_2$ is the unique irreducible
projective representation corresponding to its factor system. In general,
throughout this paper, we
will use $V(g)$ to denote the unique irreducible projective representation for
the current factor system. A consequence of Lemma
\ref{uniqueness_lemma} is that,
for a tensor satisfying the symmetry condition \eqnref{symmetry_fractionalization}, 
the bond space decomposes as a tensor product of a $\sqrt{|G|}$-dimensional \emph{protected
subsystem} in which $W(g)$ acts irreducibly as $V(g)$ 
and a \emph{junk
subsystem} in which $W(g)$ acts trivially, i.e.
\begin{equation}
\label{symmetry_decomposition}
W(g) = V(g) \otimes \mathbb{I}
\end{equation}
Thus the tensor $A$ appearing the MPS representation of ground states in the
symmetry-protected phase satisfies the symmetry condition
\begin{equation}
\label{A_symmetry_condition}
\begin{tikzpicture}[yscale=-1,baseline=-1.1cm]
\tensor[height=2]{(0,0)}{$A$}
\tensorconnect[irrep]{(-1,0)}{(0,0)}
\tensorconnect[null]{(-1,1)}{(0,1)}
\tensorconnect[irrep]{(0,0)}{(1,0)}
\tensorconnect[null]{(0,1)}{(1,1)}
\tensorconnectvertical{(0,1)}{(0,3)}
\tensor{(0,2)}{\footnotesize $u(g)$};
\end{tikzpicture}
\; = \;
\begin{tikzpicture}[yscale=-1,baseline=-1.1cm]
\tensor[height=2]{(0,0)}{$A$}
\tensorconnect[irrep]{(-2,0)}{(0,0)}
\tensorconnect[null]{(-2,1)}{(0,1)}
\tensorconnect[irrep]{(0,0)}{(2,0)}
\tensorconnect[null]{(0,1)}{(2,1)}
\tensorconnectvertical{(0,1)}{(0,2)}
\tensor[fattening=0.15]{(-1,0)}{\footnotesize $V(g)$};
\tensor[fattening=0.15]{(1,0)}{\footnotesize $V(g)^{\dagger}$};
\end{tikzpicture}.
\end{equation}
Here we use a thick line
(\begin{tikzpicture}[baseline=-0.5ex]\draw[irrep] (0,0) --
(0.5,0);\end{tikzpicture})
to represent the protected subsystem, and a dashed line
(\begin{tikzpicture}[baseline=-0.5ex]\draw[null] (0,0) --
(0.5,0);\end{tikzpicture})
to represent the junk subsystem.
The protected subsystem enjoys several nice properties for storing and manipulating logical information in a quantum computation, as we now show.  

Suppose we perform a projective measurement on one site in a simultaneous
eigenbasis $\{ \ket{i} \}$ (which is $\ket{{+}{+}}, \ket{{+}{-}}, \ket{{-}{+}},
\ket{{-}{-}}$ for the
$Z_2 \times Z_2$ cluster state symmetry),
and obtain the outcome $\ket{i}$.
Then the
resulting state on the remaining sites is found by replacing the original MPS
tensor at the measured site by
\begin{equation}
\begin{tikzpicture}[yscale=-1,baseline=-1.1cm]
\tensor[height=2]{(0,0)}{$A$}
\tensorconnect[irrep]{(-1,0)}{(0,0)}
\tensorconnect[null]{(-1,1)}{(0,1)}
\tensorconnect[irrep]{(0,0)}{(1,0)}
\tensorconnect[null]{(0,1)}{(1,1)}
\tensorconnectvertical{(0,1)}{(0,2)}
\measurementop{(0,2)}{$i^{*}$}
\end{tikzpicture}
\end{equation}
Now we make use of another consequence of maximal non-commutativity
\cite{else_schwarz_bartlett_doherty_symmetry}:
\begin{lemma}
\label{lemma_that_must_not_be_named}
Let $u(g)$ be a linear on-site representation of a finite abelian symmetry group
$G$, and let $\omega$ be a maximally non-commutative factor system of $G$. Then
for each basis element $\ket{i}$ in a simultaneous eigenbasis $\{ \ket{i} \}$ of
$u(g)$, there exists a group element $g_i$ such that
\begin{equation}
\label{commutation_condition}
\chi_i(g) V(g) = V(g_i) V(g) V(g_i)^{\dagger}, \quad \forall g \in G
\end{equation}
for any projective representation $V(g)$ with factor system $\omega$, where
$\chi_i(g)$ is the scalar representation of $G$ such that $u(g) \ket{i} =
\chi_i(g) \ket{i}$.
\end{lemma}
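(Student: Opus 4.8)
The plan is to reduce the operator identity \eqnref{commutation_condition} to a statement about characters of $G$, and then solve it using the non-degeneracy of the commutator pairing implied by maximal non-commutativity. First I would introduce the commutator phase $\gamma(g,h) = \omega(g,h)\,\omega(h,g)^{-1}$. Since $G$ is abelian, the relations $V(g)V(h) = \omega(g,h)V(gh)$ and $V(h)V(g) = \omega(h,g)V(gh)$ give $V(g)V(h) = \gamma(g,h)\,V(h)V(g)$ for any projective representation $V$ with factor system $\omega$. Conjugating $V(g)$ by $V(g_i)$ and using unitarity then yields $V(g_i)V(g)V(g_i)^{\dagger} = \gamma(g_i,g)\,V(g)$. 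Hence the desired identity \eqnref{commutation_condition} holds if and only if $\chi_i(g) = \gamma(g_i,g)$ for all $g$. Because $\gamma$ depends only on $\omega$ (indeed only on its cohomology class), the element $g_i$ we find will work simultaneously for every $V$ sharing that factor system, exactly as the statement requires.

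It then remains to show that each eigenvalue character $\chi_i$ arises as $\gamma(g_i,\cdot)$ for some $g_i$. The map $g \mapsto \gamma(g,\cdot)$ is a homomorphism from $G$ into its Pontryagin dual $\widehat{G}$, since $\gamma$ is a bicharacter. Its kernel is precisely the set of $g$ that commute (projectively) with every $V(h)$, i.e.\ the projective centre $Z_W(G)$, which is trivial by maximal non-commutativity; the homomorphism is therefore injective. Because $G$ is finite abelian, $|\widehat{G}| = |G|$, so injectivity forces surjectivity, and every character of $G$ one-to-one equals $\gamma(g_i,\cdot)$ for a unique $g_i$. In particular the character $\chi_i$ defined by $u(g)\ket{i} = \chi_i(g)\ket{i}$ is of this form, and the corresponding $g_i$ is the required group element.

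The points that need care are bookkeeping rather than genuine obstacles. I would check that $\chi_i$ is a genuine element of $\widehat{G}$, which follows because $u$ is a linear (not merely projective) representation, so $\chi_i(g_1 g_2) = \chi_i(g_1)\chi_i(g_2)$. I would also verify that the kernel of $g \mapsto \gamma(g,\cdot)$ really coincides with $Z_W(G)$ as defined in the text, and that $\gamma$, and hence the whole argument, is insensitive to rescaling $W(g)$ by phases, so that identifying ``maximally non-commutative'' with ``non-degenerate commutator pairing'' is legitimate. The essential content is the finite-abelian self-duality of $G$ under the non-degenerate pairing $\gamma$; once that is in place the lemma is immediate.
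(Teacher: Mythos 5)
Your argument is correct and complete: reducing \eqnref{commutation_condition} to $\chi_i(g)=\gamma(g_i,g)$ via the commutator bicharacter $\gamma(g,h)=\omega(g,h)\omega(h,g)^{-1}$, and then observing that maximal non-commutativity makes $g\mapsto\gamma(g,\cdot)$ an injective (hence, by $|G|=|\widehat G|$, surjective) homomorphism $G\to\widehat G$, is exactly the right mechanism, and your side-checks (that $\chi_i$ is a genuine character because $u$ is linear, that $\ker$ of the pairing is $Z_W(G)$, and that $\gamma$ is a cohomology invariant so the same $g_i$ works for every $V$ with factor system $\omega$) close all the gaps. Note that this paper does not actually prove the lemma --- it is quoted from Ref.~\cite{else_schwarz_bartlett_doherty_symmetry} --- but your proof is the standard one based on the non-degeneracy of the commutator pairing and matches what that reference does.
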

\noindent For the example of the cluster state symmetry, we have
$g_{++} = 1$, $g_{+-} = x$, $g_{-+} = z$, $g_{--} = y$, as can
readily be verified directly.

As was shown in
\cite{else_schwarz_bartlett_doherty_symmetry}, Lemma
\ref{lemma_that_must_not_be_named} in conjunction with the symmetry condition
\eqnref{A_symmetry_condition}
implies the decomposition $A[i] = V(g_i) \otimes \widetilde{A}[i]$, represented
graphically as
\begin{equation}
\label{mps_tensor_decomposition}
\begin{tikzpicture}[yscale=-1,baseline=-1.1cm]
\tensor[height=2]{(0,0)}{$A$}
\tensorconnect[irrep]{(-1,0)}{(0,0)}
\tensorconnect[null]{(-1,1)}{(0,1)}
\tensorconnect[irrep]{(0,0)}{(1,0)}
\tensorconnect[null]{(0,1)}{(1,1)}
\tensorconnectvertical{(0,1)}{(0,2)}
\measurementop{(0,2)}{$i^{*}$}
\end{tikzpicture}
\; = \;
\begin{tikzpicture}[yscale=-1,baseline=-1.1cm]
\tensorconnect[irrep]{(-1,0)}{(0,0)}
\tensorconnect[null]{(-1,1)}{(0,1)}
\tensorconnect[irrep]{(0,0)}{(1,0)}
\tensorconnect[null]{(0,1)}{(1,1)}
\tensor{(0,1)}{$\widetilde{A}$}
\tensor[fattening=0.2]{(0,0)}{\footnotesize $V(g_i)$}
\tensorconnectvertical{(0,1)}{(0,2)}
\measurementop{(0,2)}{$i^{*}$}
\end{tikzpicture}
\end{equation}
for some tensor $\widetilde{A}$.
Another way of writing this result is that
\begin{equation}
\label{formula_for_A}
\begin{tikzpicture}[yscale=-1,baseline=-1.1cm]
\tensor[height=2]{(0,0)}{$A$}
\tensorconnect[irrep]{(-1,0)}{(0,0)}
\tensorconnect[null]{(-1,1)}{(0,1)}
\tensorconnect[irrep]{(0,0)}{(1,0)}
\tensorconnect[null]{(0,1)}{(1,1)}
\tensorconnectvertical{(0,1)}{(0,3)}
\end{tikzpicture}
\; = \;
\begin{tikzpicture}[yscale=-1,baseline=-1.1cm]
\tensorconnect[irrep]{(-1,0)}{(2,0)}
\draw (0,2) -| (1,0);
\bigcircle{(1,0)}{$\mathcal{V}$}
\tensorconnect[null]{(-1,1)}{(0,1)}
\tensorconnect[null]{(0,1)}{(2,1)}
\tensor{(0,1)}{$\widetilde{A}$}
\tensorconnectvertical{(0,1)}{(0,3)}
\smallfilledcircle{(0,2)}
\end{tikzpicture},
\end{equation}
where we have defined the tensor 
\begin{equation}
\label{Kdefn}
\begin{tikzpicture}[yscale=-1,baseline=-1.1cm]
\tensorconnect[irrep]{(0,0)}{(2,0)}
\draw (0,1) -| (1,0);
\bigcircle{(1,0)}{$\mathcal{V}$}
\tensorconnectvertical{(0,0)}{(0,2)}
\smallfilledcircle{(0,1)}
\end{tikzpicture}
\; = \quad
\sum_i \left(
\begin{tikzpicture}[yscale=-1,baseline=-1.1cm]
\tensorconnect[irrep]{(0,0)}{(2,0)}

\draw ($(-0.2,-\tensorpadding) + (0,1)$) rectangle (0.2,1cm-\tensorpadding+0.4cm);
\draw ($(0,-\tensorpadding) + (0,1) + (0,0.2)$) node {\small $i^{*}$};
\begin{scope}
\tikzset{yshift=0.5cm}
\draw ($(-0.2,-\tensorpadding) + (0,1)$) rectangle (0.2,1cm-\tensorpadding+0.4cm);
\draw ($(0,-\tensorpadding) + (0,1) + (0,0.2)$) node {\small $i$};
\end{scope}
\draw[midarrow] (0,1.6) -- (0,2.1);
\tensorconnectvertical{(0,0)}{(0,1)}
\tensor[fattening=0.2]{(1,0)}{\footnotesize $V(g_i)$}

\end{tikzpicture}
\right).
\end{equation}
Note that, from a quantum circuit perspective, this tensor can also be interpreted as a
unitary controlled operation $\sum_i \ket{i} \bra{i} \otimes V(g_i)$ coupling a
site to an ancilla particle; hence the
choice of notation.
Conversely, any MPS tensor of the form \eqnref{formula_for_A} for some tensor
$\widetilde{A}$ will satisfy the
symmetry condition \eqnref{A_symmetry_condition}. Following
\cite{tensor_network_global_symmetry}, we refer to the tensor $\widetilde{A}$ as
as the \emph{degeneracy
tensor}; and to the tensor of \eqnref{Kdefn}, which is determined entirely by the
symmetry, as the \emph{structural tensor}.

From \eqnref{mps_tensor_decomposition}, we see that, in the correlation space
picture, measuring in a simultaneous eigenbasis $\{ \ket{i}
\}$ leads to an evolution on the protected
subsystem of correlation space given by an outcome-dependent unitary $V(g_i)$; this evolution is determined by the symmetry (hence the same throughout the
SPT phase), and decoupled from the junk subsystem.  Viewing the unitaries $V(g_i)$ as outcome-dependent byproducts of the measurements, which can straightforwardly be accommodated in a deterministic evolution using the standard techniques of measurement-based quantum computation, we say that the
identity gate operates perfectly throughout the SPT phase.
However, the result of measurement in any other basis is not fixed by the
symmetry, and in general leads to the protected subsystem being coupled to the
junk subsystem, so that the operation of other measurement-based gates will not be a robust
property of the symmetry-protected phase.

\subsection{The dual picture for MBQC on a 1D resource state}
\label{sec_dual}
In order to deal with the randomness of measurement outcomes, the measurement
protocol for MBQC with the cluster state needs to be \emph{adaptive}: the outcome of
the measurement on one site will affect the measurement basis on other sites
arbitrarily far away. In analysing the effect of this protocol when acting on a
\emph{perturbed} resource state, we would like to make an
argument based on the locality of the perturbed Hamiltonian, but the non-local
adaptivity of the measurement protocol poses a difficulty.
Therefore, in this section, we develop an alternate characterization of the
effect of the cluster state adaptive measurement protocol acting on a ground
state in the symmetry-protected phase. We will show that this protocol is
equivalent to a \emph{dual process} acting on a
related state, which we call the \emph{dual state}. We will show that this dual process simply consists of a sequence of
unitary interactions between selected sites (those corresponding to the
locations of non-trivial gates) and an ancilla particle, with no adaptivity.

In our discussion of the dual process, we will represent a pFCS ground state on an infinite
chain as a formal tensor network
\begin{equation}
\ket{\Psi} = 
\begin{tikztensor}
\mps[boundary=dots]{(0,0)}{5}{$A$}
\end{tikztensor}.
\end{equation}
This is not, of course, the
mathematically rigorous way to describe pFCS, but we find it useful for
facilitating understanding. In Appendix \ref{dual_mps}, we will discuss how to formulate
similar arguments in the rigorous pFCS framework. Later on (in Section
\ref{sec_general}), we will also be
interested in finite chains; in that case, the arguments of this subsection can
be applied more directly, given appropriate boundary conditions [specifically,
the boundary conditions at the right edge should be as depicted in
\eqnref{mps_with_ends_19}].

The usefulness of the exact 1-D cluster state (with MPS tensor $A_\cC$) as a
quantum computational wire results from the fact that, for each gate $U$ in a certain set,
there exists a basis $\{ \ket{\alpha} \}$ such that
\begin{equation}
\label{clustergate}
\begin{tikztensor}
\mps[style=irrep]{(0,0)}{1}{$A_\cC$}
\measurementop{(0,1)}{$\alpha^{*}$}
\end{tikztensor}
=
\begin{tikztensor}
\tensorconnect[irrep]{(-1,0)}{(2,0)}
\tensor{(0,0)}{$U$}
\tensor{(0,0)}{$U$}
\tensor{(1,0)}{$B_\alpha$}
\end{tikztensor}
\end{equation}
where $B_\alpha$ is the outcome-dependent unitary byproduct operator. When we
measure one site projectively and obtain the outcome
$\ket{\alpha}$, the original MPS tensor $A$ is replaced at the measured site by
\eqnref{clustergate} in the tensor-network description of
the resultant state.

In the case of the exact cluster state, the effect of the byproduct operator can be accounted for by
adjusting the measurement basis for future measurements. This fact turns out to
be closely related to the nontrivial SPT order, as we now demonstrate.
Our discussion relies on the observation that, in the cluster state, the
byproduct operators are Pauli operators. That is to say, it is always the case
that $B_\alpha$ is a scalar multiple of $V(g_\alpha)$ for some $g_\alpha \in Z_2 \times Z_2$.
Hence, we can make use of the symmetry condition [which can be derived from
\eqnref{A_symmetry_condition}]
\begin{equation}
\label{tensor_symmetry_cluster}
\begin{tikzpicture}[yscale=-1,baseline=-0.6cm]
\tensor[height=1]{(0,0)}{$A_\cC$}
\tensorconnect[irrep]{(-2,0)}{(0,0)}
\tensorconnect[irrep]{(0,0)}{(1,0)}
\tensorconnectvertical{(0,0)}{(0,2)}
\tensor{(-1,0)}{$B_\alpha$};
\end{tikzpicture}
\; = \;
\begin{tikzpicture}[yscale=-1,baseline=-0.6cm]
\tensor[height=1]{(0,0)}{$A_\cC$}
\tensorconnect[irrep]{(-1,0)}{(0,0)}
\tensorconnect[irrep]{(0,0)}{(2,0)}
\tensorconnectvertical{(0,0)}{(0,2)}
\tensor{(1,0)}{$B_\alpha$}
\tensor{(0,1)}{$b_\alpha$}
\end{tikzpicture},
\end{equation}
where $b_\alpha = u(g_\alpha)$. Applying this condition repeatedly shows that
the byproduct operator can be displaced arbitrarily far to the right.
In our formal tensor-network picture for an infinite chain, we consider that this process is continued indefinitely, so that
the byproduct operator ``disappears out to infinity'', and is replaced with $b_\alpha$ acting on all
sites to the right of the one on which the measurement took place, i.e.
\iftwocolumn
    \begin{equation}
    \label{adaptive_cluster}
    \input{composite_0.fig}.
    \end{equation}
\else
    \begin{equation}
    \label{adaptive_cluster}
    \input{mps_with_ends_11.fig}
    =
    \input{mps_with_ends_12.fig}.
    \end{equation}
\fi
Hence, whenever we obtain the ``wrong'' outcome for a measurement (i.e.\ the
corresponding byproduct operator $B_\alpha$ is not the identity), we can recover
the ``correct'' resultant state 
by applying the correction $b_\alpha^{\dagger}$ to all the remaining sites on the right
(equivalently, we can simply adjust the measurement basis for measurements on
those sites).

Let us now examine what happens when we perform the same adaptive measurement
protocol on a resource state that is not the exact cluster state. 
Consider a pFCS ground state contained with the same SPT phase as the
cluster state, characterised by the Pauli representation of the group $Z2
\times Z2$. We will keep using the same measurement protocol as for the
exact cluster state. (Our argument could be generalised to any pFCS
ground state contained within any SPT phase characterized by a finite
abelian symmetry group $G$ and a maximally non-commutative cohomology
class, so long as the the adaptive correction appearing in the
measurement protocol takes the same form
as for the cluster state, i.e. application of $u(g_\alpha)$ to the
sites on the right for some group elements $g_\alpha \in G$.) The resource
state is then of the form
\begin{equation}
\label{mps_with_ends_0}
\ket{\Psi} = \, \input{mps_with_ends_0.fig} \; ,
\end{equation}
with the MPS tensor $A$ of the form \eqnref{formula_for_A}. We now repeat the
above argument, in reverse. We make use of the symmetry condition
\eqnref{A_symmetry_condition} in the form
\begin{equation}
\label{tensor_symmetry_cluster_2}
\begin{tikzpicture}[yscale=-1,baseline=-0.6cm]
\tensor[height=2]{(0,0)}{$A$}
\tensorconnect[irrep]{(-1,0)}{(0,0)}
\tensorconnect[irrep]{(0,0)}{(1,0)}
\tensorconnect[null]{(-1,1)}{(0,1)}
\tensorconnect[null]{(0,1)}{(1,1)}
\tensorconnectvertical{(0,1)}{(0,3)}
\tensor{(0,2)}{$b_\alpha^{\dagger}$}
\end{tikzpicture}
\; = \;
\begin{tikzpicture}[yscale=-1,baseline=-0.6cm]
\tensor[height=2]{(0,0)}{$A$}
\tensorconnect[irrep]{(-2,0)}{(0,0)}
\tensorconnect[irrep]{(0,0)}{(2,0)}
\tensorconnect[null]{(-2,1)}{(0,1)}
\tensorconnect[null]{(0,1)}{(2,1)}
\tensorconnectvertical{(0,1)}{(0,3)}
\tensor{(-1,0)}{$B_\alpha^{\dagger}$};
\tensor{(1,0)}{$B_\alpha$}
\end{tikzpicture},
\end{equation}
from which we obtain
\iftwocolumn
    \begin{equation}
    \label{mps_with_ends_4_5}
    \input{composite_1.fig}.
    \end{equation}
\else
    \begin{equation}
    \label{mps_with_ends_4_5}
    \input{mps_with_ends_5.fig}
    \; = \;
    \input{mps_with_ends_4.fig}.
    \end{equation}
\fi
Therefore, we have shown that the process we actually perform, i.e.\ applying the measurement-dependent correction
to the sites on the right of the one measured, is equivalent to a different
process, in which the measurement-dependent correction is applied in the
internal bond space of the MPS, as depicted in the right-hand side of
\eqnref{mps_with_ends_4_5}. In a physical system, of course, we do not have direct
access to the internal bonds of a tensor network state, so we could never
perform the latter process directly; nevertheless, the two are equivalent.

Following the measurement and the adaptive correction, which we think of as
being performed internally, as in the right-hand side of
\eqnref{mps_with_ends_4_5}, the outcome
of the measurement can be ``forgotten'', i.e.\ we describe the resultant state
of the system as the mixture of the
right-hand side of \eqnref{mps_with_ends_4_5} for all possible measurement
outcomes. Without affecting
the reduced state on the remaining unmeasured sites, for notational
convenience we replace this mixture with a coherent superposition, i.e.\
\begin{equation}
\label{mps_replacement}
\input{mps_with_ends_18.fig},
\end{equation}
where we have defined the 
tensor 
\begin{equation}
\label{diamond_defn}
\begin{tikzpicture}[yscale=-1,baseline=-1.1cm]
\tensorconnect[irrep]{(0,0)}{(2,0)}
\draw (0,1) -| (1,0);
\bigcircle{(1,0)}{$\mathcal{V}^{\dagger}$}
\draw (0,1) node[anchor=north west]{$k$};
\tensorconnectvertical{(0,0)}{(0,2)}
\smallfilleddiamond{(0,1)}
\end{tikzpicture}
\; = \quad
\sum_\alpha \left(
\begin{tikzpicture}[yscale=-1,baseline=-1.1cm]
\tensorconnect[irrep]{(0,0)}{(2,0)}

\draw ($(-0.2,-\tensorpadding) + (0,1)$) rectangle (0.2,1cm-\tensorpadding+0.4cm);
\draw ($(0,-\tensorpadding) + (0,1) + (0,0.2)$) node {\small $\alpha^{*}$};
\begin{scope}
\tikzset{yshift=0.5cm}
\draw ($(-0.2,-\tensorpadding) + (0,1)$) rectangle (0.2,1cm-\tensorpadding+0.4cm);
\draw ($(0,-\tensorpadding) + (0,1) + (0,0.2)$) node {\small $\alpha$};
\end{scope}
\draw[midarrow] (0,1.6) -- (0,2.1);
\tensorconnectvertical{(0,0)}{(0,1)}
\tensor{(1,0)}{$B_\alpha^{\dagger}$}

\end{tikzpicture}
\right),
\end{equation}
which we can also interpret as a unitary coupling $\sum_\alpha \ket{\alpha}
\bra{\alpha} \otimes B_\alpha^\dagger = \sum_\alpha \ket{\alpha} \bra{\alpha} \otimes
V(g_\alpha)^{\dagger}$ coupling a site to an ancilla particle. We now
use the $k$ index throughout the paper to distinguish the ``$\mathcal{G}$'''s
resulting from different measurement operations.
(The label $k$ refers to the site at which the measurement is being
performed; we include this label to reflect the dependence on the measurement
basis $\{ \ket{\alpha} \}$ and byproduct operators $B_\alpha$, which will in
general be different for each site at which a measurement is performed.)

Now, using the expression \eqnref{formula_for_A} for the MPS tensor $A$, we can
write
\begin{equation}
\label{A_correct_dual}
\begin{tikzpicture}[yscale=-1,baseline=-1.1cm]
\tensorconnect[irrep]{(-1,0)}{(2,0)}
\tensorconnectvertical{(0,1)}{(0,3)}
\draw (0,2) -| (1,0);
\smallfilleddiamond{(0,2)}
\bigcircle{(1,0)}{$\mathcal{V}^{\dagger}$}
\tensorconnect[null]{(-1,1)}{(0,1)}
\tensorconnect[null]{(0,1)}{(2,1)}
\tensor[height=2]{(0,0)}{$A$}
\draw (0,2) node[anchor=north west]{$k$};
\end{tikzpicture}
\; = \;
\begin{tikzpicture}[yscale=-1,baseline=-1.1cm]
\tensorconnect[irrep]{(-1,0)}{(2,0)}
\draw (0,2) -| (1,0);
\smallfilledcircle{(0,2)}
\tensorconnect[null]{(-1,1)}{(0,1)}
\tensorconnect[null]{(0,1)}{(2,1)}
\tensor{(0,1)}{$\widetilde{A}$}
\tensorconnectvertical{(0,1)}{(0,3)}
\tearDown{(1,0)}{$\cG_k$}
\tearRight{(0,2)}{}
\end{tikzpicture},
\end{equation}
where we have defined the tensor $\cG_k$ (which can also be interpreted as a
unitary coupling between a site and the ancilla particle)
according to
\begin{equation}
\cG_k = 
\label{gkdual}
\begin{tikzpicture}[yscale=-1,baseline=-0.5cm]
\draw[irrep] (-0.5,0) -- (0.5,0);
\draw (-0.5,1) -- (0.5,1);
\draw (0,0) -- (0,1);
\tearUp{(0,1)}{}
\tearDown{(0,0)}{$\cG_k$}
\end{tikzpicture}
\; = \;
\begin{tikzpicture}[yscale=-1,baseline=-0.5cm]
\draw[irrep] (-1.5,0) -- (0.5,0);
\draw (-1.5,1) -- (0.5,1);
\correctionop{(-1,1)}{(-1,0)}{$\mathcal{V}$}
\correctionopdiamond{(0,1)}{(0,0)}{$\mathcal{V}^{\dagger}$}
\draw (0,1) node[anchor=north west]{$k$};
\end{tikzpicture}.
\end{equation}

\begin{figure}
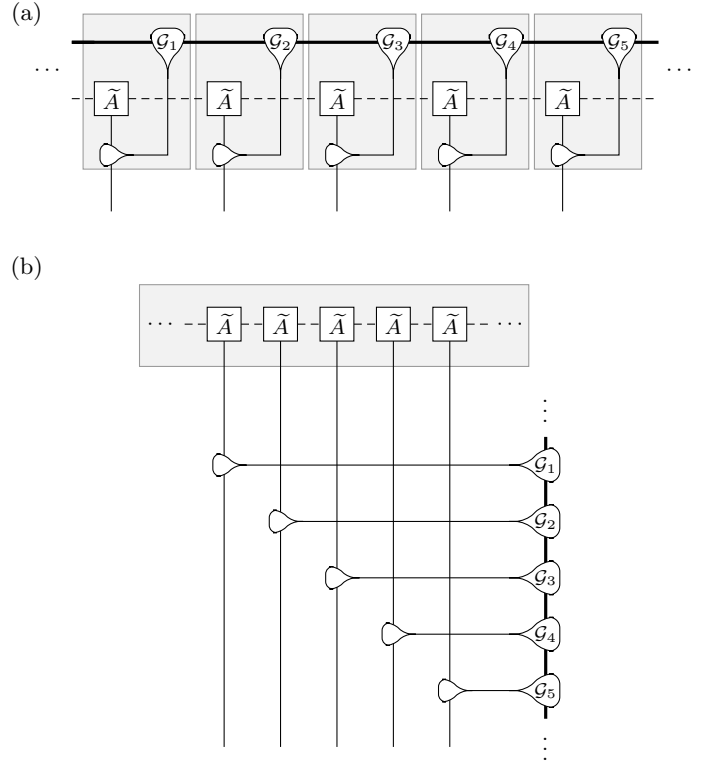

\iftwocolumn
    \input{composite_2.fig}
\else
    \subfloat[]{\input{mps_with_ends_16.fig}}
    \subfloat[]{\input{mps_with_ends_15.fig}}
\fi
\caption{\label{dual_process}(a) After the adaptive
measurement sequence, we can treat the resultant state as having the form shown.
This is equivalent (b) to building $\ket{\Psi}$ from the dual state
$\ket{\widetilde{\Psi}}$ [the shaded box; see \eqnref{dualstate_defn}]
by unitary couplings to an ancilla particle.}
\end{figure}
We are now in a position to define our dual process.
Suppose we perform a sequence of such adaptive measurements at successive
sites, which at each site is described by the insertion of the tensor
\eqnref{diamond_defn},
as in \eqnref{mps_replacement}. There will be a different
coupling $\mathcal{G}_k$ associated with each site $k$.
As shown in Figure \ref{dual_process}, we find that the
original adaptive measurement process, applied to the resource state
$\ket{\Psi}$, is equivalent to a \emph{dual process} applied to the \emph{dual
state} $\ket{\widetilde{\Psi}}$. The dual state $\ket{\widetilde{\Psi}}$ is defined to
be the state built from the degeneracy tensor $\widetilde{A}$, with the
structural tensor discarded:
\begin{equation}
\label{dualstate_defn}
\ket{\widetilde{\Psi}} = 
\begin{tikztensor}
\mps[boundary=dots,style=null]{(0,0)}{4}{$\widetilde{A}$}
\end{tikztensor}.
\end{equation}
The dual process comprises a series of consecutive unitary interactions
$\mathcal{G}_k$ between individual sites $k$ and an ancilla particle.

There are several reasons why this ``dual picture'' is a useful way to
understand the operation of MBQC in one-dimensional ground states. First, the
dual process lacks the long-range measurement adaptivity which is a
characteristic of the original adaptive measurement protocol. Second, the
perfect operation of the identity gate is automatically built in, because,
for sites $k$ at which the
adaptive measurement process at the given site is the one that corresponds in
the exact cluster state to the identity gate,
[i.e. the measurement basis is the simultaneous eigenbasis $\{ \ket{i} \}$ of the symmetry, and the byproduct operators
are $B_i = V(g_i)$, where the $g_i$ are the group elements appearing in
\eqnref{mps_tensor_decomposition}], the corresponding coupling is
trivial, 
$\mathcal{G}_k = \mathbb{I}$.

The final motivation for the dual picture is that the dual state on which it is
based has some physical significance in its
own right, and retains some key
properties of the the original resource state. For example, if the original MPS tensor $A$
generates a pFCS, then so does $\widetilde{A}$, and the respective correlation
lengths obey the inequality $\widetilde{\xi} \leq \xi$ (see Appendix
\ref{dual_mps}). Additionally, in Appendix \ref{kt_connection} we will show how our dual state can
be obtained from the original ground state through a generalization of the unitary that was introduced by
Kennedy and Tasaki \cite{kt,*kt2} to transform the SPT Haldane phase
\cite{haldane1,*haldane2,pollmann-arxiv-2009} into a local symmetry-breaking
phase; this unitary has recently been described
as a ``topological disentangler'' \cite{topological_disentangler}, and in some sense we can think of the dual
state as being a topologically disentangled version of the original resource
state.

\subsubsection{The dual process for initialization and readout in the 1-D
cluster state}
\label{dual_initialization_readout}
Above we only discussed measurement sequences corresponding
to unitary gates in correlation space. A complete scheme for using a 1-D
resource state as a quantum computational wire also includes measurement sequences 
corresponding to \emph{initialization} (i.e.\ discarding the current
state in correlation space and replacing it with a fixed state), and
\emph{readout} (i.e.\ making the state in correlation space available as the
physical state of one qubit). We
now describe briefly how the measurement protocols used on the 1-D cluster state
for these purposes can be accommodated in our framework.

\emph{Initialization}.---The initialization procedure involves measuring a site
in the computational basis $\{ \ket{00}, \ket{01}, \ket{10}, \ket{11} \}$. In
this basis, the MPS tensor $A_{\mathcal{C}}$ for the exact cluster state takes
the form
\begin{align}
\label{Ac_comp_first}
A_{\mathcal{C}}[00] &= \ket{0} \bra{0}, \\
A_{\mathcal{C}}[01] &= \ket{1} \bra{0}, \\
A_{\mathcal{C}}[10] &= \ket{0} \bra{1}, \\
A_{\mathcal{C}}[11] &= -\ket{1} \bra{1}.
\label{Ac_comp_last}
\end{align}
The randomness of measurement outcomes can therefore be accounted for by
applying the appropriate outcome-dependent correction operator in correlation
space following the measurement:
 $B_{00}^{\dagger} = B_{10}^{\dagger} = \mathbb{I}$,
$B_{01}^{\dagger} = B_{11}^{\dagger} = X$.
Since the correction operators are Pauli operators, the above discussion applies
without change.

\emph{Readout}.---The 
standard readout procedure for the cluster state involves measuring the
\emph{second} qubit of a two-qubit site in the computational basis, then
applying an outcome-dependent correction operator to the first qubit, which acts
as the output. Provided
that we are only interested in the final state of the output qubit, this
procedure is equivalent to a coherent correction operator
coupling the two
qubits in the site 
(specifically, it is a controlled-Z gate $C_Z = \ket{0}
\bra{0} \otimes \mathbb{I} + \ket{1} \bra{1} \otimes Z$). Carrying through a
similar argument to that given above for unitary gates,
\secref{sec_dual}, we obtain the same result, but with
the interaction $\mathcal{G}_k$ in the dual process
between the site $k$ in question and
the ancilla particle given by
\begin{equation}
\label{G_readout}
\cG_k = 
\begin{tikzpicture}[yscale=-1,baseline=-0.5cm]
\draw[irrep] (-0.5,0) -- (0.5,0);
\draw (-0.5,1) -- (0.5,1);
\draw (0,0) -- (0,1);
\tearUp{(0,1)}{}
\tearDown{(0,0)}{$\cG_k$}
\end{tikzpicture}
\; = \;
\begin{tikzpicture}[yscale=-1,baseline=-0.5cm]
\draw[irrep] (-1.5,0) -- (0.5,0);
\draw (-1.5,1) -- (0.5,1);
\correctionop{(-1,1)}{(-1,0)}{$\mathcal{V}$}
\tensor{(0,1)}{$C_Z$}
\end{tikzpicture}.
\end{equation}

\subsection{MPS of minimal bond dimension and the dual picture}
\label{sec_cluster_dual}
As an example of the general formalism introduced in \secref{sec_dual}, here we
will examine the form of the couplings $\mathcal{G}_k$
appearing in the dual process [\eqnref{gkdual}],
in the particular case that the resource state is an MPS with bond dimension $D
= \sqrt{|G|}$, where $G$ is the symmetry group characterizing the
symmetry-protected phase.
Given that the
dimension of the protected subsystem is $\sqrt{|G|}$ (by Lemma
\ref{uniqueness_lemma}), this is the
smallest possible value of $D$, and corresponds to the absence of a junk
subsystem (or, more precisely, a junk subsystem of dimension 1). In particular,
the 1D cluster
state is of this type. 
In general, the MPS tensor $A$ for such an MPS must be of
the form
\begin{equation}
A[i] = \widetilde{A}[i] V(g_i),
\end{equation}
where the $\widetilde{A}[i]$ here are \emph{scalars}.
It follows that the dual of such a state is a product state, with each site in
the state $\ket{\phi} = \sum_i
\widetilde{A}[i] \ket{i}$. (We choose the normalization for the MPS tensor $A$ so
that $\braket{\phi}{\phi} = 1$.)
Therefore, the effect of the dual
process acting on the dual state results from a series of
independent interactions of the form
\begin{equation}
\begin{tikzpicture}[yscale=-1,baseline=-0.5cm]
\draw[irrep] (-1,0) -- (1,0);
\draw (-1,1) -- (1,1);
\draw (0,0) -- (0,1);
\tearUp{(0,1)}{}
\tearDown{(0,0)}{$\cG_k$}
\draw (-1,1) node[left] {$\ket{\phi}$};
\end{tikzpicture}.
\end{equation}

We recall that, in the correlation space picture of quantum computational
wires, a quantum state can serve as a resource for executing a unitary gate $U$
if there exists some basis $\{ \ket{\alpha} \}$ such that 
\begin{equation}
\label{resource_U}
A[\alpha] = \beta_\alpha B_\alpha U,
\end{equation}
for some set of unitary byproduct operators $B_\alpha$ and scalars
$\beta_\alpha$. We will now show how this
property manifests itself in the dual picture, for the class of states considered
here. We make use of the representation for the MPS tensor $A$ as
\begin{equation}
\label{A_product_repn}
\begin{tikzpicture}[yscale=-1,baseline=-1.1cm]
\tensor{(0,0)}{$A$}
\tensorconnect[irrep]{(-1,0)}{(0,0)}
\tensorconnect[irrep]{(0,0)}{(1,0)}
\tensorconnectvertical{(0,0)}{(0,3)}
\end{tikzpicture}
\; = \;
\begin{tikzpicture}[yscale=-1,baseline=-1.1cm]
\tensorconnectvertical{(0,1)}{(0,2)}
\tensorconnect[irrep]{(-1,0)}{(2,0)}
\draw (0,2) -| (1,0);
\smallfilledcircle{(0,2)}
\bigcircle{(1,0)}{$\cV$}
\tensor{(0,1)}{$\phi$}
\tensorconnectvertical{(0,1)}{(0,3)}
\end{tikzpicture}.
\end{equation}
It follows that, at a site $k$ measured in the basis $\{ \ket{\alpha} \}$, with
the byproduct operators $B_\alpha$, we have
\begin{align}
\label{gate_dual_representation}
\begin{tikzpicture}[yscale=-1,baseline=-0.6cm]
\tensorconnectleft[irrep]{(0,0)}{(3,0)}
\tensorconnectleft[site]{(1,1)}{(3,1)};
\smalltensor{(1,1)}{$\phi$}
\draw (2,1) -- (2,0);
\tearUp{(2,1)}{$\cG_k$}
\tearDown{(2,0)}{$\cG_k$}
\end{tikzpicture}
\quad &= \quad
\begin{tikzpicture}[yscale=-1,baseline=-0.6cm]
\tensorconnect[irrep]{(0,0)}{(1,0)}
\tensorconnect[irrep]{(1,0)}{(2,0)}
\tensorconnectleft[irrep]{(2,0)}{(3,0)}
\tensorconnectupper[site]{(1,0)}{(1,1)};
\tensor[height=1]{(1,0)}{$A$}
\draw[site] (1,1) -- (2,1);
\draw[site] (2,1) -- (3,1);
\correctionopdiamond{(2,1)}{(2,0)}{$\cV^\dagger$}
\draw (2,1) node[anchor=north west] {$k$};
\end{tikzpicture} \\
\nonumber \\
&= \quad
\begin{tikzpicture}[baseline=-0.6cm,yscale=-1]
\tensorconnectleft[irrep]{(0,0)}{(3,0)}
\tensorconnectleft[site]{(1,1)}{(3,1)};
\smalltensor{(1,1)}{$\varphi$}
\tensor{(2,0)}{$U$}
\end{tikzpicture},
\end{align}
where $\ket{\varphi} = \sum_\alpha \beta_{\alpha} \ket{\alpha}$. (It can be shown
that our choice of normalization ensures that $\braket{\varphi}{\varphi} = 1$.) Here the first
equality follows from \eqnref{A_product_repn} and the definition of $\cG$;
and the second inequality follows by \eqnref{resource_U}.
Thus, we have shown that in the dual picture the gate $U$ simply acts on the ancilla
particle.

Next we will do a similar analysis for the initialization and readout procedures
specific to the 1-D cluster state.

\emph{Initialization}.---
Recall the discussion of initialization in \secref{sec_dual}. We make use of the
form of the MPS tensor $A_{\mathcal{C}}$ in the computational basis, Eqs.\
(\ref{Ac_comp_first}--\ref{Ac_comp_last}), multiplied by the appropriate normalization
factor as discussed above. Thus, for the site $k$ at which initialization takes
place, we find that
\begin{align}
\label{initialization_dual_representation}
\begin{tikzpicture}[yscale=-1,baseline=-0.6cm]
\tensorconnectleft[irrep]{(0,0)}{(3,0)}
\tensorconnectleft[site]{(1,1)}{(3,1)};
\smalltensor{(1,1)}{$\phi$}
\draw (2,1) -- (2,0);
\tearUp{(2,1)}{$\cG_k$}
\tearDown{(2,0)}{$\cG_k$}
\end{tikzpicture}
\quad &= \quad
\begin{tikzpicture}[yscale=-1,baseline=-0.6cm]
\tensorconnect[irrep]{(0,0)}{(1,0)}
\tensorconnect[irrep]{(1,0)}{(2,0)}
\tensorconnectleft[irrep]{(2,0)}{(3,0)}
\tensorconnectupper[site]{(1,0)}{(1,1)};
\tensor[height=1]{(1,0)}{$A_{\mathcal{C}}$}
\draw[site] (1,1) -- (2,1);
\draw[site] (2,1) -- (3,1);
\correctionopdiamond{(2,1)}{(2,0)}{$\cV^\dagger$}
\draw{(2,1)} node[anchor=north west]{$k$};
\end{tikzpicture} \\
\nonumber \\
&=
\begin{tikzpicture}[baseline=-0.6cm,yscale=-1]
\draw[irrep] (0,0) -- (1,0) -- (1,1) -- (2,1);
\tensorconnectleft[site]{(2,1)}{(3,1)}
\smalltensor{(2,0)}{$0$}
\tensorconnectleft[irrep]{(2,0)}{(3,0)}
\tensor{(2,1)}{$\Gamma$}
\end{tikzpicture},
\end{align}
where
$\Gamma = \frac{1}{\sqrt{2}}\left(\ket{00} \bra{0} + \ket{01}\bra{0} + \ket{10}
\bra{1} - \ket{11}\bra{1}\right)$ (thanks to our choice of normalization, we
find that $\Gamma$ is an isometry, i.e.\ $\Gamma^{\dagger} \Gamma =
\mathbb{I}$).
Therefore, applying the measurement sequence for initialization leads to the
ancilla system getting initialized in the state $\ket{0}$, as we would expect.

\emph{Readout}.---
From the definition of the operator $\cG_k$ in the case of sites $k$ at which
readout takes place
[\eqnref{G_readout}], we find that
(here we separate a site into its two constituent
qubits, each denoted by a thick line)
\begin{align}
\label{readout_dual_representation}
\begin{tikzpicture}[yscale=-1,baseline=-0.6cm]
\tensorconnectleft[irrep]{(0,0)}{(3,0)}
\tensorconnectleft[irrep_double]{(1,1)}{(3,1)};
\smalltensor{(1,1)}{$\phi$}
\draw (2,1) -- (2,0);
\tearUp{(2,1)}{$\cG_k$}
\tearDown{(2,0)}{$\cG_k$}
\end{tikzpicture}
\quad &= \quad
\begin{tikzpicture}[yscale=-1,baseline=-0.6cm]
\tensorconnect[irrep]{(0,0)}{(1,0)}
\draw[irrep_double] ($(1,0) + (0,\tensorpadding)$) -- (1,1) -- (3,1);
\tensorconnectleft[irrep]{(1,0)}{(3,0)}
\tensor[height=1]{(1,0)}{$A_{\mathcal{C}}$}
\tensor{(2,1)}{$C_Z$}
\end{tikzpicture}
\\ \nonumber
\\ &= \quad
\begin{tikzpicture}[baseline=-0.6cm,yscale=-1]
\draw[irrep] (-1,0) -- (0,0) -- (0,1.5) -- ($(2,1.5) - (\tensorpadding,0)$);
\tensor[height=2]{(1,0)}{$\ket{I}$}
\tensorconnect[irrep]{(1,0)}{(2,0)}
\tensorconnect[irrep]{(1,1)}{(2,1)}
\end{tikzpicture},
\end{align}
where $\ket{I} = (1/\sqrt{2}) \left(\ket{00} + \ket{11}\right)$ is the canonical
maximally-entangled state. 
Thus the state of the ancilla qubit indeed gets transferred onto the output
qubit.

\subsection{MBQC on a perturbed resource state simulates a noisy quantum circuit}
\label{subsec_effective_noise_model}
In the previous subsection, we saw how measurements on an MPS
of minimal bond dimension correspond to quantum gates.
Now we will consider what happens when we perform the same measurement
sequences on a \emph{perturbed} resource state, assuming that the perturbed state
remains within the same SPT phase.
We will find that measurements on such a perturbed cluster state simulate the
same quantum circuit, but with \emph{noisy} gates. The noise is described by application
of a completely positive, trace preserving (CPTP) noise superoperator following each gate.

\begin{figure}
\subfloat[]{\input{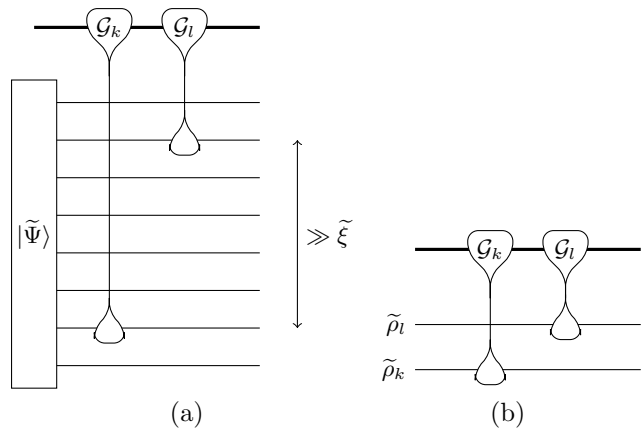}}
\subfloat[]{
\begin{tikzpicture}[yscale=-1]
\draw[irrep] (-1,0) -- (2,0);
\draw (-1,1) -- (2,1);
\draw (-1,1.6) -- (2,1.6);
\draw (-1,1) node[left] {$\wtrho_l$};
\draw (-1,1.6) node[left] {$\wtrho_k$};

\draw (0,1.6)--(0,0);
\tearUp{(0,1.6)}{$\cG_k$}
\tearDown{(0,0)}{$\cG_k$}

\draw (1,1)--(1,0);
\tearUp{(1,1)}{$\cG_l$}
\tearDown{(1,0)}{$\cG_l$}
\end{tikzpicture}
}
\caption{\label{dualsep}(a) As long as the locations of non-trivial gates (sites $k$
and $l$ in this diagram) are separated by a
distance much greater than the correlation length $\widetilde{\xi}$, the reduced
state on those locations will be approximately a product state, and the dual
process then reduces (b) to a sequence of independent interactions.}
\end{figure}

In Section \ref{sec_cluster_dual}, we were able to treat each gate independently in the
case of the unperturbed cluster state because the dual state was a product
state, $\ket{\widetilde{\Psi}} = \ket{\phi}^{\otimes N}$. This will no longer be true once we introduce
perturbations, but we still want to treat gates independently. 
Towards this end, we recall that for a site for which the corresponding sequence is that for the
identity gate, the associated coupling $\mathcal{G}_k$ in the dual
picture between that site and the ancilla particle is trivial. Therefore, such a
site can be
traced out from the beginning without affecting the final state of the output
qubit. That is to say, we only need to consider the reduced state $\wtrho =
\Tr_{\mathrm{trivial\ sites}}
\ket{\widetilde{\Psi}} \bra{\widetilde{\Psi}}$ on the remaining
sites, which are those corresponding to non-identity gates (we refer to these as
the \emph{non-trivial} sites). We are free to
choose our measurement protocol to
ensure that the distance between any two non-trivial sites is much greater than the
correlation length $\widetilde{\xi}$.  For pFCS, it is then straightforward to show that $\widetilde{\rho}$ is approximately a product state $\widetilde{\rho}_{\mathrm{prod}} = \bigotimes_k \widetilde{\rho}_k$ over the non-trivial sites, or more precisely
\begin{equation}
\label{pfcs_inequality}
\| \wtrho - \wtrho_{\mathrm{prod}} \|_1 \leq m f(R),
\end{equation}
where $\|.\|_1$ denotes the trace norm,
$m$ is the number of non-trivial sites, $R$ is the minimum distance between any two
non-trivial sites, and $f(R)$ is a function related to the transfer channel of the
pFCS, behaving asymptotically as $f(R) = \mathcal{O}[\exp(-R/\widetilde{\xi})]$ for large $R$,
where $\widetilde{\xi}$ is the correlation length associated with the pFCS.  

We first consider the case where $\wtrho = \wtrho_{\mathrm{prod}}$ exactly. 
Because $\wtrho$ is then a product state, we find, as in the previous
subsection, that
the dual process acting on the dual state is again effectively a sequence of
independent interactions, this time of the form
\begin{equation}
\label{rhok_evolution}
\begin{tikzpicture}[yscale=-1,baseline=-0.5cm]
\draw[irrep] (-1,0) -- (1,0);
\draw (-1,1) -- (1,1);
\couplingop{(0,0)}{(0,1)}{$\cG_k$}
\draw (-1,1) node[left] {$\wtrho_k$};
\end{tikzpicture}
\end{equation}
(see \figref{dualsep}).
Let us suppose that $\mathcal{G}_k$ results from the
measurement sequence corresponding to a unitary gate
$U_k$. Then, after tracing out the physical site, \eqnref{rhok_evolution}
corresponds to an evolution on the ancilla qubit described by the CPTP map
\begin{equation}
\mathcal{A}_k(\sigma) = \Tr_{\mathrm{physical\;site}}\left(
\mathcal{G}_k (\sigma \otimes \wtrho_k)
\mathcal{G}_k^{\dagger}\right)
\end{equation}
As we saw in \secref{sec_cluster_dual}, in the absence of perturbations to the
cluster state, $\wtrho_k = \ket{\phi}\bra{\phi}$ and $\mathcal{A}_k = \mathcal{U}_k$, where
$\mathcal{U}_k(\sigma) = U_k \sigma U_k^{\dagger}$. In general we can write
$\mathcal{A}_k = \mathcal{E}_k \circ \mathcal{U}_k$, where $\mathcal{E}_k$ is a
noise superoperator for which it is straightforward to show that
\begin{equation}
\label{Ek_inequality}
\|\mathcal{E}_k - \mathcal{I}\|_{\Diamond} \leq \| \wtrho_k -
\ket{\phi}\bra{\phi}\|_1,
\end{equation}
where $\|.\|_{\Diamond}$ is the diamond norm on superoperators
\cite{qc_with_mixed_states}.
The cases when $\mathcal{G}_k$ corresponds to initialization or readout are
analogous. Therefore we have shown (in the case $\wtrho =
\wtrho_{\mathrm{prod}}$) that the measurement protocol on the
perturbed cluster state reproduces the desired quantum circuit, except that each
gate (as well as the initialization and readout steps) is accompanied by some associated
noise. Furthermore, if the perturbation is sufficiently small,
then the reduced states $\widetilde{\rho}_k$ will be close to
$\ket{\phi}\bra{\phi}$ (see Appendix D for the proof), so that the noise
will be weak, in the sense that $\mathcal{E}_k$ is close to the identity
superoperator in the diamond norm.

In the general case, in which $\wtrho$ and $\wtrho_{\mathrm{prod}}$ are not equal,
but are $\epsilon$-close in the trace distance, we just need to observe that the
reduced state of the output qubit following the dual
process can be obtained from $\wtrho$ by application of some CPTP superoperator,
which we call $\mathcal{B}$. From the contractivity property of the trace distance, it follows
that $\| \cB(\wtrho) - \cB(\wtrho_{\mathrm{prod}}) \|_1 \leq \| \wtrho -
\wtrho_{\mathrm{prod}}\|_1 \leq \epsilon$. Therefore, the
effective noisy quantum circuit description correctly describes the final state
of the output qubit up to an accuracy $\epsilon$.
Note that, because the bound \eqnref{pfcs_inequality} depends on the number of non-trivial
gates $m$, it will be necessary to have the separation $R$ scale with $m$ in
order to obtain a fixed accuracy $\epsilon$, but only logarithmically; indeed,
the minimum separation
required to achieve an accuracy $\epsilon$ scales like
$R_{\mathrm{min}}/\widetilde{\xi} = \mathcal{O} [\log(m/\epsilon)]$.

\subsection{Summary of Section \ref{section_mps}}
In Section \ref{section_mps}, we have presented, within the context of pure
finitely-correlated states, the main ideas leading to our effective noise model
construction. Our discussion has hinged around the ``dual state'' which we
associated with each ground state carrying the appropriate SPT order. Loosely
speaking, we can think of the entanglement in SPT-ordered ground states as
comprising ``topological'' and ``non-topological'' components intertwined. The
topological component is fixed throughout the phase and is responsible for the
distinctive characteristics of the SPT phase, such as the degeneracy in the
entanglement spectrum \cite{pollmann-prb-2010}, the diverging localizable entanglement length
\cite{popp-le,venuti-2005-le-string}, and the perfect
operation of the identity gate. One can think of the dual state as being
obtained from the original ground state by a topological disentangler, ``separating out'' the topological
component of the entanglement and leaving only the non-topological component
\cite{topological_disentangler}.

In this paper, the importance of the dual state is due to the following fact,
which we established in \secref{sec_dual}: the cluster state adaptive
measurement protocol,
when applied to an SPT-ordered ground state, couples in a natural way
to the topological component of the entanglement, and the effect
is thus equivalent to a ``dual
process'' (with a simpler structure) acting on the dual state.
This result gives rise to an effective quantum circuit
description describing the outcome of the measurement protocol applied to any
SPT-ordered ground state [e.g.\ see \figref{dualsep}(a)]. The action of non-trivial
gates is determined by an interaction with a single site in the dual state, and
perturbations to the dual state give rise to noisy gates. As long
as the locations of non-trivial gates are sufficiently separated, the reduced
state on the sites relevant for the gate operation will be a product state, and
this corresponds to independent noise acting on each gate in the effective
circuit description.

\section{The effective noise model construction: general ground states}
\label{sec_general}
In this section, we will extend our characterization of the
effective noise model to any ground state within the symmetry-protected phase,
without reference to finitely correlated states.
Instead
of starting from scratch, we will build on the results of Section
\ref{section_mps}, as
follows.
We 
formulate a condition which we believe (on physical grounds)  to be satisfied for any system within the
symmetry-protected phase. We will show that this condition leads to a
construction
for the dual state of any ground state in the SPT phase, independently of the
pFCS formalism. Furthermore, given an exact MPS representation for the dual state
(which always exists, albeit possibly with a bond
dimension exponentially large in the system size), we show that one can construct a
corresponding MPS representation for the original ground state, such that the arguments of 
Section \ref{section_mps} can be applied without significant change.
In order to establish the approximate factorization condition
\eqnref{pfcs_inequality} in the case of
general ground states,
we will show that the dual state is (like the original resource state) the
gapped ground state of a local Hamiltonian, which can be
constructed in a straightforward way from the original Hamiltonian. This will allow us to
establish the approximate
factorization condition \eqnref{pfcs_inequality} without assuming that the
dual state has a pFCS structure.

\subsection{Symmetry-protected topological order and boundary conditions}
\label{spto_boundary_conditions}
Because we are considering general ground states, we can no longer make direct use of the characterization of SPT order
in finitely-correlated states of \cite{chen_gu_wen,schuch}. Instead, we adopt
the perspective in which SPT order is related to the fractionalized edge
modes associated with open boundary conditions \cite{pollmann-arxiv-2009}.
Our discussion will, out of necessity, be physically motivated rather than
mathematically rigorous,
but will suggest the formulation of the precise assumptions under which the rigorous
results of this paper can be proven.

Consider a 1-D
chain with open boundary conditions, with symmetry-respecting interactions 
such that, in the bulk, there is no symmetry-breaking and a finite energy gap
for excitations. It is still possible that the energy gap for \emph{edge}
excitations may be much smaller than the bulk gap (or even zero).
We denote by $\mathcal{P}$ the subspace comprising the
low-lying edge states. If the chain is sufficiently long, we expect that the
gap in the bulk should ensure that the edges are non-interacting, so that $\mathcal{P}$ decomposes
a tensor product of degrees of freedom
associated with the left and right
edges respectively, $\mathcal{P} = \mathcal{P}_l \otimes \mathcal{P}_r$, 
and the restriction $H_{\cP}$ of the Hamiltonian $H$ onto this subspace 
is a sum of terms acting on each edge: $H_\cP = h_l \otimes
\mathbb{I} + \mathbb{I} \otimes h_r$. 

For our purposes, we are interested in how the edge states transform
under the symmetry. Because
$\mathcal{P}$ is spanned by a set of energy eigenspaces, and the Hamiltonian
commutes with the representation of the symmetry $U(g)$, it follows that
$\mathcal{P}$ must be an invariant subspace for $U(g)$.
We write the operation of $U(g)$
restricted to $\mathcal{P}$ as $U_{\mathcal{P}}(g)$. We expect that, for
sufficiently long chains, the symmetry should act independently on the
respective edge states,
i.e.
\begin{equation}
\label{projective_fractionalization}
U_{\mathcal{P}}(g) = V_l(g) \otimes V_r(g).
\end{equation}
By assumption,
$U(g)$, and hence $U_{\mathcal{P}}(g)$, is a linear representation of the symmetry
group $G$. It
follows that $V_l(g)$ and $V_r(g)$ are in general \emph{projective} representations of $G$,
and if $V_l(g)$ has factor system $\omega$, then $V_r(g)$ must have factor
system $\omega^{-1}$. We are free to transform $V_r(g) \to \beta(g) V_r(g)$,
$V_l(g) \to \beta^{-1}(g) V_l(g)$ for any $g$-dependent phase factors $\beta(g)$ without affecting
\eqnref{projective_fractionalization}, but the cohomology class $[\omega]$ is
uniquely determined. Furthermore, it is intuitively clear that
any continuous symmetry-respecting variation in the Hamiltonian cannot change
the cohomology class $[\omega]$, except at a phase transition (where the gap
closes in the bulk, and the left- and right- edge modes need no longer be
non-interacting).
Therefore, we have an alternative characterization of the SPT phase
corresponding to a cohomology class $[\omega]$: it comprises the systems
where a left edge is associated with emergent edge states transforming 
projectively under the symmetry with cohomology class $[\omega]$.
This is a generalization of the well-known observation that
systems in the $SO(3)$-invariant Haldane phase have emergent spin-1/2 degrees of
freedom at the edges \cite{haldane_emergent_edge2,haldane_emergent_edge1}.

In non-trivial SPT phases, the edge interactions $h_l$ and $h_r$ (and therefore
the overall Hamiltonian $H$) will always
have degenerate ground states, due to the fact that non-trivial projective
representations cannot be one-dimensional. On the other hand, we expect, at
least in the case that the
symmetry group $G$ is abelian, that
a non-degenerate ground state can be recovered by introducing terminating 
particles at the left and right edges, transforming projectively under the
symmetry with factor systems $\omega^{-1}$ and $\omega$ respectively (see Figure
\ref{fig_terminating_particles}). This is because,
loosely speaking, these terminating particles can couple to the edge modes, with
the composite system at each edge transforming under a linear representation
(and therefore, in the case of an abelian symmetry group, generically
having a non-degenerate ground state). For example, the ground state of a spin
chain in the Haldane phase can be made non-degenerate through coupling to
spin-1/2 particles at the edges. Conversely, if the
terminating particles do not transform with the cohomology classes
$[\omega^{-1}]$ and $[\omega]$ respectively, then the degeneracy cannot be removed
completely because there is still a non-trivial projective symmetry
transformation at each edge.

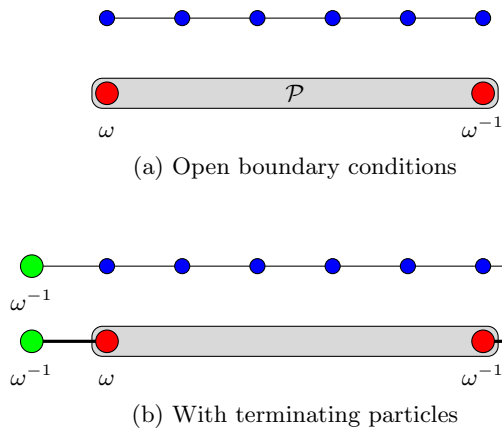
\begin{figure}
\iftwocolumn
\begin{tikzpicture}
\draw (2.5,-2) node{(a) Open boundary conditions};
\draw (0,0) -- (5,0);
\foreach \x in {0,...,5}
{
   \draw[fill=blue] (\x,0) circle(0.1cm);
}


\begin{scope}
\tikzset{yshift=-1cm}
\draw[black,fill=gray!30,rounded corners] (-0.2,-0.2) rectangle (5.2,0.2);
\draw[fill=red] (0,0) circle(0.15cm);
\draw[fill=red] (5,0) circle(0.15cm);
\draw (0,-0.7) node[above] {$\omega$};
\draw (5,-0.7) node[above] {$\omega^{-1}$};
\draw (2.5,0) node {$\mathcal{P}$};
\end{scope}

\begin{scope}
\tikzset{yshift=-3.3cm}
\draw (2.5,-2) node{(b) With terminating particles};
\draw (-1,0) -- (6,0);
\foreach \x in {0,...,5}
{
   \draw[fill=blue] (\x,0) circle(0.1cm);
}
\draw[fill=green] (-1,0) circle(0.15cm);
\draw[fill=green] (6,0) circle(0.15cm);
\draw (-1,-0.7) node[above] {$\omega^{-1}$};
\draw (6,-0.7) node[above] {$\omega$};

\begin{scope}
\tikzset{yshift=-1cm}

\draw[black,fill=gray!30,rounded corners] (-0.2,-0.2) rectangle (5.2,0.2);

\draw[irrep] (-1,0) -- (0,0);
\draw[irrep] (5,0) -- (6,0);

\draw[fill=red] (0,0) circle(0.15cm);
\draw[fill=red] (5,0) circle(0.15cm);
\draw (0,-0.7) node[above] {$\omega$};
\draw (5,-0.7) node[above] {$\omega^{-1}$};
\draw[fill=green] (-1,0) circle(0.15cm);
\draw[fill=green] (6,0) circle(0.15cm);
\draw (-1,-0.7) node[above] {$\omega^{-1}$};
\draw (6,-0.7) node[above] {$\omega$};

\end{scope}
\end{scope}
\end{tikzpicture}
\else
\subfloat[Open boundary conditions]{
\begin{tikzpicture}
\draw (0,0) -- (5,0);
\foreach \x in {0,...,5}
{
   \draw[fill=blue] (\x,0) circle(0.1cm);
}


\begin{scope}
\tikzset{yshift=-1cm}
\draw[black,fill=gray!30,rounded corners] (-0.2,-0.2) rectangle (5.2,0.2);
\draw[fill=red] (0,0) circle(0.15cm);
\draw[fill=red] (5,0) circle(0.15cm);
\draw (0,-0.7) node[above] {$\omega$};
\draw (5,-0.7) node[above] {$\omega^{-1}$};
\draw (2.5,0) node {$\mathcal{P}$};
\end{scope}
\end{tikzpicture}
}
\subfloat[With terminating particles]
{
\begin{tikzpicture}
\draw (-1,0) -- (6,0);
\foreach \x in {0,...,5}
{
   \draw[fill=blue] (\x,0) circle(0.1cm);
}
\draw[fill=green] (-1,0) circle(0.15cm);
\draw[fill=green] (6,0) circle(0.15cm);
\draw (-1,-0.7) node[above] {$\omega^{-1}$};
\draw (6,-0.7) node[above] {$\omega$};

\begin{scope}
\tikzset{yshift=-1cm}

\draw[black,fill=gray!30,rounded corners] (-0.2,-0.2) rectangle (5.2,0.2);

\draw[irrep] (-1,0) -- (0,0);
\draw[irrep] (5,0) -- (6,0);

\draw[fill=red] (0,0) circle(0.15cm);
\draw[fill=red] (5,0) circle(0.15cm);
\draw (0,-0.7) node[above] {$\omega$};
\draw (5,-0.7) node[above] {$\omega^{-1}$};
\draw[fill=green] (-1,0) circle(0.15cm);
\draw[fill=green] (6,0) circle(0.15cm);
\draw (-1,-0.7) node[above] {$\omega^{-1}$};
\draw (6,-0.7) node[above] {$\omega$};

\end{scope}
\end{tikzpicture}
}
\fi
\caption{\label{fig_terminating_particles}(a) The low-lying energy subspace $\mathcal{P}$ of a 1-D chain with
open boundary conditions, in an SPT
phase characterized by the cohomology class $[\omega]$, decomposes as a tensor
product of ``emergent edge modes'' (red) associated with each end, transforming
projectively under the symmetry. (b) The degeneracy can be removed by
coupling terminating particles (green) at each end, leading to an effective
coupling to the edge
modes.}
\end{figure}

Thus, we have arrived at yet another characterization of SPT
order, which we state as a conjecture in the absence of a rigorous proof:
\begin{conjecture}
\label{a_conjecture}
A 1-D chain respecting an on-site representation of an abelian symmetry group
$G$ is in the SPT phase characterized by cohomology class $[\omega]$ if
and only if the following condition is satisfied:
\end{conjecture}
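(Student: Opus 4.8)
The plan is to prove both directions of the equivalence by reducing everything to the transformation properties of the emergent edge modes, whose cohomology class $[\omega]$ I will argue is the unique invariant distinguishing the phases. The first step is to make precise the decomposition \eqnref{projective_fractionalization} of the low-lying subspace $\cP = \cP_l \otimes \cP_r$ into independent left- and right-edge degrees of freedom, together with the factorized symmetry action $U_{\cP}(g) = V_l(g) \otimes V_r(g)$. The natural rigorous tool is a Lieb--Robinson / quasi-adiabatic continuation argument: in a gapped, symmetry-respecting, translation-invariant bulk, exponential clustering of correlations should force any low-energy (edge) physics to localize near the two ends on a length scale set by the bulk correlation length, so that for sufficiently long chains the two edges decouple up to exponentially small corrections. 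Given this decoupling, the invariance of $\cP$ under $U(g)$ (which holds because $\cP$ is a union of energy eigenspaces and $U(g)$ commutes with $H$) yields the tensor-product form of the representation, and the requirement that $U_{\cP}(g)$ be an honest linear representation of $G$ forces the left and right edge factor systems to be mutually inverse.

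For the forward direction, I would assume the chain lies in the phase labelled by $[\omega]$ and show the stated condition (removal of the ground-state degeneracy by terminating particles of the complementary classes) holds. By the edge decomposition the left edge carries a projective representation of class $[\omega]$; attaching terminating particles transforming with the \emph{inverse} classes $[\omega^{-1}]$ (left) and $[\omega]$ (right) then permits a symmetry-respecting coupling under which each composite edge-plus-terminator system transforms by a genuine linear representation of $G$. Since $G$ is abelian and the total representation is linear, it admits one-dimensional invariant subspaces, so a generic such coupling lifts the edge degeneracy and produces a unique gapped ground state. The converse is the contrapositive of the same computation: if terminators of class $[\omega^{-1}]$, $[\omega]$ succeed in removing all degeneracy, the bare edge modes must have carried precisely the inverse classes, placing the chain in the phase $[\omega]$; whereas terminators of any other class leave a residual non-trivial projective action at one edge, whose representation cannot be one-dimensional (a non-trivial cohomology class forbids a one-dimensional representation, as is reflected in the $\sqrt{|G|} > 1$ dimension count of Lemma \ref{uniqueness_lemma}), so the degeneracy cannot be fully removed.

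The hard part --- and the reason this is stated only as a conjecture --- is the very first step: rigorously establishing the clean factorization \eqnref{projective_fractionalization}, including a canonical definition of the edge subspace $\cP$ and its exact tensor-product structure $\cP_l \otimes \cP_r$ with a strictly factorized symmetry action. For a generic gapped Hamiltonian (as opposed to a pFCS with explicit finite bond dimension, where the boundary representation is simply read off from the tensor) there is no basis-independent way to split the low-energy manifold into ``left'' and ``right'' halves: the decoupling holds only up to corrections that are exponentially small but nonzero, so $U_{\cP}(g)$ is only \emph{approximately} of tensor-product form, and promoting this to an exact cohomological statement demands a stability argument showing that the discrete invariant $[\omega]$ cannot be spoiled by the small corrections. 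Controlling this would presumably require the full machinery of quasi-adiabatic evolution and spectral-flow stability for gapped phases; the absence of such control in this general setting is precisely why the statement is left as a conjecture, with the rigorous development instead proceeding in \secref{sec_general} by adopting an analogous condition as an assumption and building the dual-state construction on top of it.
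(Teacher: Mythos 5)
The paper does not actually prove this statement: it is explicitly left as a conjecture, supported only by the physically motivated (and avowedly non-rigorous) discussion of edge-mode fractionalization in \secref{spto_boundary_conditions}, and the rigorous development instead adopts Condition \ref{terminated_condn} as a standing assumption. Your proposal retraces essentially that same discussion --- the decomposition $\cP = \cP_l \otimes \cP_r$ with factorized symmetry action \eqnref{projective_fractionalization}, the mutually inverse factor systems at the two edges, and the lifting of the degeneracy by terminating particles of the complementary classes via the existence of one-dimensional invariant subspaces for linear representations of an abelian $G$ --- and you correctly single out the same unproven step (a rigorous, non-perturbative derivation of the exact tensor-product structure of the low-energy edge subspace for a general gapped ground state) as the reason the statement remains conjectural. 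So your account matches the paper's reasoning and its honest assessment of the gap; neither constitutes a proof.
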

\begin{condn}
\label{terminated_condn}
The finite-chain ground state can be made
non-degenerate and gapped 
by the inclusion of symmetry-respecting interactions
coupling the left and right edges of the chain to terminating particles transforming
projectively under the symmetry, with factor systems $\omega$ and $\omega^{-1}$
respectively.
\end{condn}
\noindent In any case, in the remainder of this section, we will consider
systems satisfying Condition \ref{terminated_condn}. Specifically, all the
results will apply to finite chains with the appropriate edge couplings imposed
to ensure a non-degenerate gapped ground state. This will prove convenient for our
analysis, but the properties of the system in the bulk should not, of course, depend on the boundary
conditions.

Note also that, in the case of a system with the interactions governed by the parent Hamiltonian
of a pFCS [generated by an MPS tensor satisfying the symmetry condition
\eqnref{symmetry_fractionalization}
corresponding to the symmetry-protected phase], Condition
\ref{terminated_condn} can easily be established directly. Furthermore, the
stability theorem of \cite{frustration_free_stability} ensures that Condition
\ref{terminated_condn} remains true for sufficiently small symmetry-respecting perturbations of such
models, regardless of the validity of Conjecture \ref{terminated_condn}.

\subsection{The general construction for the dual state; exact MPS representation of SPT-ordered ground states}
\label{exact_mps}
Recall that in Section {\ref{sec_dual}}, we defined the dual state in the context of pFCS.
Here, we will give an analogous construction for the dual state corresponding to  a general
ground state within a 
 symmetry-protected phase, provided that the phase is characterized by a finite
abelian group $G$ and a maximally
non-commutative cohomology class $[\omega]$. The construction applies to a
finite chain, with the appropriate boundary conditions as
discussed in Section
{\ref{spto_boundary_conditions}}. This construction will then allow us to
express the original ground state as an MPS, with the MPS tensors satisfying an
appropriate symmetry condition.

We
consider a finite chain coupled to terminating particles, such that the overall system
is invariant under the symmetry
$U(g) = V^{*}(g) \otimes [u(g)]^{\otimes N} \otimes V(g)$. 
Here we have taken the right
terminating particle to transform under $V(g)$, the unique irreducible projective
representation with factor system $\omega$; and the left terminating particle under
$V^{*}(g)$ [$V^{*}(g)$ is the operator obtained from $V(g)$ by complex
conjugation of the matrix elements in some
basis; observe that $V^{*}(g)$ is a projective representation of $G$ with factor system $\omega^{-1}$].

\begin{figure}
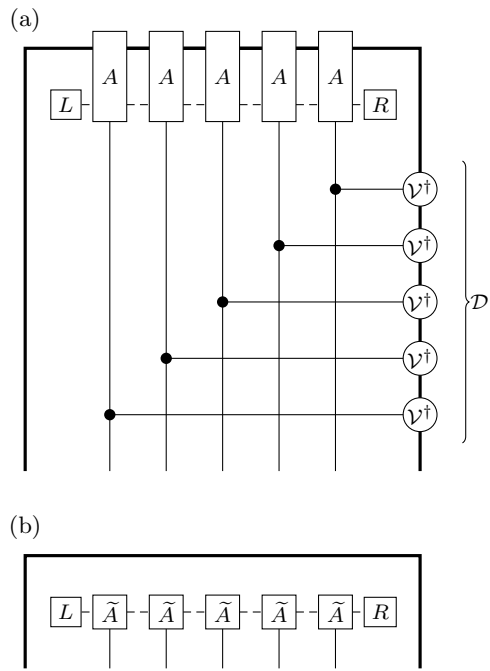

\iftwocolumn
    \input{composite_3.fig}
\else
    \subfloat[]{\input{mps_with_ends_20.fig}}
    \subfloat[]{\input{mps_with_ends_21.fig}}
\fi
\caption{\label{disentangling} The ``topological disentangler'' $\mathcal{D}$ applied
to the original ground state (a) turns it into the dual state (b), leaving the terminating particles
maximally entangled.}
\end{figure}

The natural analogues in the current setting (finite chains, with the specific
choice of boundary conditions) of the pFCS ground states which we considered in Section
\ref{section_mps} are states of the form
\begin{equation}
\label{mps_with_ends_19}
\ket{\Psi} = \input{mps_with_ends_19.fig} \; ,
\end{equation}
for some end vectors $\ket{L}$ and $\bra{R}$, and
where the MPS tensor $A$ satisfies the symmetry condition
\eqnref{A_symmetry_condition} [which can
be shown to ensure the invariance of the state under $U(g)$].
Given the
decomposition \eqnref{formula_for_A} for the MPS tensor $A$, it follows that the dual state
can be obtained from the original ground state by a sequence of unitary
interactions between individual sites and the terminating particle on the right
(see \figref{disentangling}); we can think of the overall unitary transformation
$\mathcal{D}$ as a ``topological disentangler''. 
Specifically, we have
$\mathcal{D} \ket{\Psi} = \ket{\widetilde{\Psi}} \otimes \ket{I}$, where
$\ket{I} = \sum_{k=1}^D \ket{k}\ket{k}$ is the canonical maximally-entangled
state between the terminating particles.

We will now show that, for a general
gapped symmetry-respecting ground state $\ket{\Psi}$ [not necessarily in the MPS
form \eqnref{mps_with_ends_19}], it
remains the case that $\mathcal{D} \ket{\Psi} = \ket{\widetilde{\Psi}} \otimes
\ket{I}$ for some state $\ket{\widetilde{\Psi}}$ on the non-terminating sites; this will serve as the definition of the dual state
$\ket{\widetilde{\Psi}}$
for general ground states.

We observe that the original ground state $\ket{\Psi}$ must be invariant under the global symmetry
operation $U(g)$, i.e.\
\begin{equation}
U(g) \ket{\Psi} = \alpha(g) \ket{\Psi}.
\end{equation}
(Without loss of generality, we can set $\alpha(g) = 1$ by absorbing it into
into the definition of the symmetry\footnote{Specifically, we replace the action
of the symmetry $V(g)$ on the right terminating particle with its
equivalent under a unitary transformation (by Lemma 1 of {Ref.\ \cite{else_schwarz_bartlett_doherty_symmetry}}),
$\alpha^{-1}(g) V(g)$.}.) This implies that the state
$\mathcal{D} \ket{\Psi}$ is invariant under $\mathcal{D} U(g)
\mathcal{D}^{\dagger}$. Let us examine
what this `dual' symmetry looks like.
We observe that
\begin{align}
\cD_1 [u(g) \otimes V(g)] \cD_1^{\dagger}
&= \sum_i \chi_i(g) \ket{i} \bra{i} \otimes V(g_i)^{\dagger} V(g) V(g_i) \\
&= \sum_i \ket{i}\bra{i} \otimes V(g) \label{foobar1} \\
&= \mathbb{I} \otimes V(g)
\label{symrep},
\end{align}
where
\begin{equation}
\cD_1 = 
\begin{tikzpicture}[yscale=-1,baseline=-0.5cm]
\draw[irrep] (-0.5,0) -- (0.5,0);
\draw (-0.5,1) -- (0.5,1);
\correctionop{(0,1)}{(0,0)}{$\mathcal{V}^{\dagger}$}
\end{tikzpicture}
= \sum_i \ket{i} \bra{i} \otimes V(g_i)^{\dagger}
\end{equation}
is the interaction from which $\mathcal{D}$ is built;
to get to \eqnref{foobar1}, we made use of \eqnref{commutation_condition}. From
this, one can show that
\begin{equation}
\label{dual_symmetry}
\mathcal{D} U(g) \mathcal{D}^{\dagger} = V^{*}(g)
\otimes \mathbb{I}^{\otimes N} \otimes V(g).
\end{equation}
It it straighforward to show (using the irreducibility of $V$) that invariance of a state under the right-hand side of
\eqnref{dual_symmetry} implies that it must be of the form $\mathcal{D}
\ket{\Psi} = \ket{\widetilde{\Psi}} \otimes \ket{I}$ for some state
$\ket{\wtPsi}$, as required.

It is now straightforward to construct an appropriate MPS representation for a
general ground state. Indeed, let us consider an MPS representation for the dual
state $\ket{\widetilde{\Psi}}$ of
the form
\begin{equation}
\ket{\widetilde{\Psi}} = \input{mps_with_ends_17.fig}.
\end{equation}
We choose this representation to be \emph{exact}; this
may require the bond dimension to be very large (scaling exponentially in the
system size), but that is of no importance to
us. Then we have
\begin{align}
\mathcal{D} \ket{\Psi} &= \ket{I} \otimes \ket{\widetilde{\Psi}} \\ &=
\input{mps_with_ends_3.fig}.
\end{align}
Now we can apply the inverse transformation $\mathcal{D}^{\dagger}$ to obtain
\begin{multline}
\label{mps_with_ends_9}
\ket{\Psi} = \\ \input{mps_with_ends_9.fig}.
\end{multline}
This is a representation of $\ket{\Psi}$ as an MPS, with each of the shaded
regions corresponding to an MPS tensor $A$ of the
form \eqnref{formula_for_A}, and hence satisfying the symmetry condition
corresponding to the symmetry-protected phase. In addition, we should take note
of the boundary conditions at the right edge. These boundary conditions ensure
that the arguments of Section {\ref{sec_dual}} apply without any need to invoke an
infinite-system limit.

\subsection{The dual state as the ground state of a local Hamiltonian}
\label{sec_dual_hamiltonian}
In the previous subsection, we have constructed the dual state for any ground state
in the symmetry-protected phase. The original ground state is, by assumption, the
gapped ground state of a local Hamiltonian.
In this subsection we will show that this is also true of the dual
state. That is, starting from the original Hamiltonian $H$, we construct another
local Hamiltonian $\widetilde{H}$ for which the dual state is the gapped ground state.

We start by proving a useful fact about the unitary transformation
$\mathcal{D}$ introduced in the previous section:
although it is in general non-local, it maps symmetry-respecting local
observables (i.e.\ those supported on a small set of sites of finite size) 
to local observables.
Indeed, let us consider some local
observable $h$; we will show that $\mathcal{D} h \mathcal{D}^{\dagger}$ is
also local.
For concreteness, we suppose that $h$ acts on two
adjacent sites somewhere in the bulk. Now, observe that
\begin{equation}
\label{hdual}
\begin{tikzpicture}[yscale=-1,baseline=-1cm]
\def \n {2};
\draw[irrep] (-0.6,0) -- (\n+0.6,0);
\foreach \y in {1,...,\n}
{
   \draw (-0.6,\y) -- (\n+0.6,\y);
   \correctionop{(\y,\y)}{(\y,0)}{$\cV^\dagger$}
}

\tensor[height=2]{(0,1)}{$h$}
\end{tikzpicture}
\; = \;
\begin{tikzpicture}[yscale=-1,baseline=-1cm]
\def \n {2};
\draw[irrep] (0.4,0) -- (\n+1.6,0);
\foreach \y in {1,...,\n}
{
   \draw (0.4,\y) -- (\n+1.6,\y);
   \correctionop{(\y,\y)}{(\y,0)}{$\cV^\dagger$}
}

\tensor[height=3]{(3,0)}{$\widetilde{h}$}
\end{tikzpicture},
\end{equation}
where
\begin{equation}
\label{hdual2}
\begin{tikzpicture}[yscale=-1,baseline=-1cm]
\def \n {2};
\draw[irrep] (-0.6,0) -- (0.6,0);
\foreach \y in {1,...,\n}
{
   \draw (-0.6,\y) -- (0.6,\y);
}

\tensor[height=3]{(0,0)}{$\widetilde{h}$}
\end{tikzpicture}
\; = \;
\begin{tikzpicture}[yscale=-1,,baseline=-1cm]
\def \n {2};
\draw[irrep] (-\n-0.6,0) -- (\n+0.6,0);
\foreach \y in {1,...,\n}
{
   \draw (-\n-0.6,\y) -- (\n+0.6,\y);
   \correctionop{(\y,\y)}{(\y,0)}{$\cV^\dagger$}
   \correctionop{(-\y,\y)}{(-\y,0)}{$\cV$}
}

\tensor[height=2]{(0,1)}{$h$}
\end{tikzpicture}.
\end{equation}
By means of Eqs. (\ref{hdual2}) and (\ref{symrep}), it can be verified that if $h$ commutes with the
symmetry, i.e.\
\begin{equation}
\begin{tikzpicture}[yscale=-1,baseline=-1.5cm]
\def \n {2};
\foreach \y in {1,...,\n}
{
   \draw (-2,\y) -- (1,\y);
}
\tensor{(-1,1)}{\footnotesize $u(g)$}
\tensor{(-1,2)}{\footnotesize $u(g)$}

\tensor[height=2]{(0,1)}{$h$}
\end{tikzpicture}
\; = \;
\begin{tikzpicture}[yscale=-1,baseline=-1.5cm]
\def \n {2};
\foreach \y in {1,...,\n}
{
   \draw (-1,\y) -- (2,\y);
}
\tensor{(1,1)}{\footnotesize $u(g)$}
\tensor{(1,2)}{\footnotesize $u(g)$}

\tensor[height=2]{(0,1)}{$h$}
\end{tikzpicture},
\end{equation}
then 

\begin{equation}
\label{hdual_commutation}
\begin{tikzpicture}[yscale=-1,,baseline=-1cm]
\def \n {2};
\draw[irrep] (-2,0) -- (1,0);
\foreach \y in {1,...,\n}
{
   \draw (-2,\y) -- (1,\y);
}

\tensor[height=3]{(0,0)}{$\widetilde{h}$}
\tensor[fattening=0.1]{(-1,0)}{\footnotesize $V(g)$}
\end{tikzpicture}
\; = \;
\begin{tikzpicture}[yscale=-1,baseline=-1cm]
\def \n {2};
\draw[irrep] (-1,0) -- (2,0);
\foreach \y in {1,...,\n}
{
   \draw (-1,\y) -- (2,\y);
}

\tensor[height=3]{(0,0)}{$\widetilde{h}$}
\tensor[fattening=0.1]{(1,0)}{\footnotesize $V(g)$}.
\end{tikzpicture}.
\end{equation}
Since $V(g)$ is an irreducible projective representation,
\eqnref{hdual_commutation}
implies (by Schur's Lemma) that $\widetilde{h}$ acts
trivially on the terminating particle, i.e.
\begin{equation}
\begin{tikzpicture}[yscale=-1,baseline=-1cm]
\label{hdual_nocouple}
\def \n {2};
\draw[irrep] (-1,0) -- (1,0);
\foreach \y in {1,...,\n}
{
   \draw (-1,\y) -- (1,\y);
}

\tensor[height=3]{(0,0)}{$\widetilde{h}$}
\end{tikzpicture}
\; = \;
\begin{tikzpicture}[yscale=-1,baseline=-1cm]
\def \n {2};
\draw[irrep] (-1,0) -- (1,0);
\foreach \y in {1,...,\n}
{
   \draw (-1,\y) -- (1,\y);
}

\tensor[height=2]{(0,1)}{$\widetilde{h}$}
\end{tikzpicture}.
\end{equation}
Now, using Eqs.\ (\ref{hdual}) and (\ref{hdual_nocouple}), we find that
$\mathcal{D} h \mathcal{D}^{\dagger} = \widetilde{h}$, where $\widetilde{h}$
acts on the same two sites as $h$ (see \figref{hcommute}). Thus, although the duality transformation
$\mathcal{D}$ is non-local, we have shown that it maps local symmetry-respecting operators to
local operators, as promised. The exception is operators $h$ at
the left edge, which act non-trivially on the left terminating particle; in that case,
the above argument breaks down, but we can observe directly from the structure
of $\mathcal{D}$ that $\mathcal{D} h \mathcal{D}^{\dagger}$ is supported on the
union of the support of $h$ and the right terminating particle. For operators $h$
acting non-trivially on the \emph{right} terminating particle, the argument must be
adjusted, but the conclusion that $\mathcal{D} h \mathcal{D}^{\dagger}$ is
supported on the support of $h$ still holds.

\begin{figure}
\iftwocolumn
\begin{tikzpicture}[yscale=-0.75, xscale=0.75,baseline=-1.5cm]
\draw[irrep] (-2,-1) -- (7,-1);
\foreach \x in {0,...,5}
{
   \draw (-2,\x) -- (7,\x);
   \correctionop{(\x,\x)}{(\x,-1)}{\footnotesize $\cV^{\dagger}$}
}
\tensor[height=2]{(-1,2)}{$h$}
\begin{scope}
\tikzset{yshift=8cm}
\draw (2.5,-2) node {\large $=$};
\draw[irrep] (-2,-1) -- (7,-1);
\foreach \x in {0,...,5}
{
   \draw (-2,\x) -- (7,\x);
   \correctionop{(\x,\x)}{(\x,-1)}{\footnotesize $\cV^{\dagger}$}
}
\tensor[height=2]{(6,2)}{$\widetilde{h}$}
\end{scope}
\end{tikzpicture}
\else
\begin{tikzpicture}[yscale=-0.75, xscale=0.75,baseline=-1.5cm]
\draw[irrep] (-2,-1) -- (7,-1);
\foreach \x in {0,...,5}
{
   \draw (-2,\x) -- (7,\x);
   \correctionop{(\x,\x)}{(\x,-1)}{\footnotesize $\cV^{\dagger}$}
}
\tensor[height=2]{(-1,2)}{$h$}
\end{tikzpicture}
=
\begin{tikzpicture}[yscale=-0.75, xscale=0.75, baseline=-1.5cm]
\draw[irrep] (-1,-1) -- (7,-1);
\foreach \x in {0,...,5}
{
   \draw (-1,\x) -- (7,\x);
   \correctionop{(\x,\x)}{(\x,-1)}{\footnotesize $\cV^{\dagger}$}
}
\tensor[height=2]{(6,2)}{$\widetilde{h}$}
\end{tikzpicture}
\fi
\caption{\label{hcommute}
From Eqs.\ (\ref{hdual}) and (\ref{hdual_nocouple}), we get the
pictured equality. This shows that $\mathcal{D}h = \widetilde{h} \cD$, or
equivalently $\cD h \cD^{\dagger} = \widetilde{h}$.}
\end{figure}
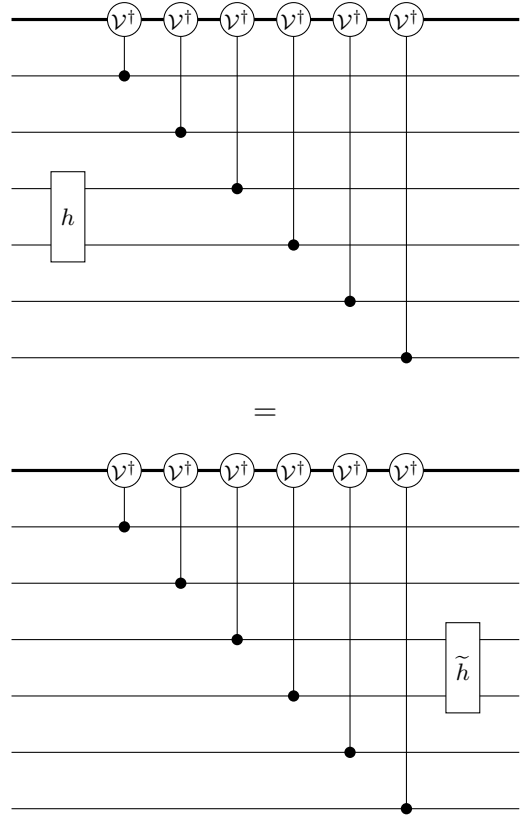

We are now in a position to construct the Hamiltonian for which the dual state
$\ket{\widetilde{\Psi}}$ is
the gapped ground state. We observe that $\mathcal{D} H \mathcal{D}^{\dagger}$
has $\ket{I} \otimes \ket{\widetilde{\Psi}}$ as its gapped ground state;
however, it includes terms acting non-trivially on the terminating particles. We define
a Hamiltonian acting only on the intermediate sites according to $\widetilde{H} =
\bra{I} \mathcal{D} H \mathcal{D}^{\dagger} \ket{I} \equiv \mathcal{F}(H)$; by the
locality result proven above, each local interaction term in $H$ corresponds to a local term in
$\widetilde{H}$ supported on the same set of sites. It can be
shown that $\ket{\widetilde{\Psi}}$ is the unique ground state of $\widetilde{H}$, and
that the gap is at least as large as that of $\mathcal{D} H
\mathcal{D}^{\dagger}$, or equivalently $H$. 

\subsection{The factorization condition for general ground states}
\label{sec_factorization_general}
Recall that the other condition that needed to be satisfied in order to apply
the arguments of Section \ref{section_mps} for general ground states was that the factorization
condition for the reduced density operator $\wtrho$ on the non-trivial sites in
the dual state,
\begin{equation}
\wtrho \approx \bigotimes_k \wtrho_k \equiv \wtrho_{\mathrm{prod}}
\end{equation}
should be satisfied when the non-trivial sites are sufficiently separated from each
other.
Recall that, for the case of pFCS, one can prove the bound
\begin{equation}
\label{pfcs_bound}
\| \wtrho - \wtrho_{\mathrm{prod}} \|_1 \leq m f(R),
\end{equation}
with $m$ the number of non-trivial sites, and $f(R)$ a function scaling asymptotically as $f(R) =
\mathcal{O}[\exp(-R/\widetilde{\xi})]$, where $\widetilde{\xi}$ is the
correlation length in the dual state. We conjecture that \eqnref{pfcs_bound}
should be a general property of all gapped ground states of local Hamiltonians. However, we have
only been able to rigorously prove the weaker bound
\begin{equation}
\label{weaker_general_bound}
\| \wtrho - \wtrho_{\mathrm{prod}} \|_1 \leq m d^{2m} f(R),
\end{equation}
where $f(R)$ is as before, and $d$ is the dimension of the Hilbert space at each
site; see Appendix \ref{appendix_factorization} for the proof.

Note that if we assume only the weaker bound \eqnref{weaker_general_bound}, then the separation
between non-trivial sites will need to scale more rapidly with the number of gates $m$; we
find that the minimum separation $R_{\mathrm{min}}$ required for an accuracy
$\epsilon$ scales like 
\begin{equation}
R_{\mathrm{min}}/\widetilde{\xi} =
\mathcal{O}(m) + \mathcal{O}[\log(1/\epsilon)].
\end{equation}
This still implies that the number of measurements need scale only as a
polynomial in
the number of non-trivial gates.

\subsection{Nonzero temperature}
\label{sec_finite_temperature}
The formulation of the dual state as the ground state of a dual Hamiltonian
extends naturally to nonzero temperature: under the topological disentangler
$\mathcal{D}$, the thermal state of the original
Hamiltonian $H$ maps to the thermal state of a dual Hamiltonian $\widetilde{H}$.
Furthermore, it can be shown
that an appropriate adaptive measurement protocol acting on
the thermal state of the original Hamiltonian is equivalent to a non-adaptive
dual process (of the same form as in the zero-temperature case),
acting on the thermal state of the dual Hamiltonian.

However, it does not appear possible to construct a Markovian effective noise model
for nonzero temperature using the same techniques as for zero temperature.
The reason is that our arguments were based on the assumption that the
reduced state $\widetilde{\rho}_k$ on each of the non-trivial sites in the dual
state does not differ greatly from its value in the dual of the unperturbed
resource state.
This is indeed the case for small local perturbations to the Hamiltonian (as we prove
in Appendix \ref{appendix_smallperturbations}), but it need not be true for
nonzero temperature. For example,
consider the one-dimensional Ising model, with Hamiltonian
\begin{equation}
-\sum_i Z_i Z_{i+1} + Z_1
\end{equation}
(we include the $Z_1$ term to select out a unique ground state). In this model,
it can be shown (e.g.\ using the transfer matrix method) that the reduced state on a single spin changes
discontinuously as soon as the temperature is switched on (this is closely related to the
disappearance of the magnetic order in the 1-D Ising model at nonzero
temperature).
Given the structure of the dual Hamiltonian as discussed in
Appendix \ref{kt_connection},
there is good reason to believe that it will exhibit a
similar phenomenon.

The difficulty of treating thermal states in our framework should
not be surprising, as the dual process has the perfect operation of the identity
gate built in, whereas the cluster model is not expected to have a long-range
identity gate at nonzero
temperature. On
the other hand, there exists a measurement protocol for a
\emph{three}-dimensional cluster model
which retains the perfect operation of the identity gate at sufficiently small
nonzero temperatures \cite{raussendorf_long_range}.
Therefore, if the dual process description could be extended to
measurement protocols such as this one, then it might be expected that the dual
Hamiltonian would possess an ordered phase that persists at nonzero temperature,
such that the local reduced state varies continuously with
temperature up to the phase transition.

\section{Two-dimensional systems and fault tolerance}
\label{section_2d}
The equivalence we demonstrated in Sections
\ref{section_mps} and \ref{sec_general}, between MBQC on perturbed resource states
and noisy quantum circuits, opens up the possibility of exploiting the results in
the literature on fault-tolerant quantum computation with noisy quantum circuits.
Here, we will extend the results of the previous sections to
the 2D cluster model, which, unlike the 1D models considered previously, is a
universal resource for quantum computation. We will again find that, provided
the perturbation to the Hamiltonian respects a certain symmetry, MBQC using the
perturbed ground state as a resource is equivalent to a noisy quantum circuit. We will show
that the noise in this effective circuit description has no correlations in time
(as in the previous section), nor any correlations in space. This reduction to
local, Markovian noise will allow us
to invoke the threshold theorem to deduce that, provided the perturbation
respects the symmetry and is sufficiently small, the perturbed ground state
remains a universal resource for MBQC.

It should be emphasized that, although we make use of the theory of
fault-tolerant quantum computation, our final result cannot be described as a
fault tolerance result for MBQC, since it applies only to symmetry-respecting
perturbations, and we assume noiseless operation of the
measurement protocol.

\subsection{The `quasi-1D' model}
\label{sec_quasi_1d}
Here, we make a first attempt at generalizing the 1D results to a 2D
model which is universal for quantum computation. The ground state of the model
we introduce here
is not strictly a universal resource for MBQC unless we allow non-single-qubit measurements; 
however the discussion here will serve as a
stepping stone for consideration of the 2D cluster model in
\secref{sec_2d_cluster}.

\begin{figure}
\input{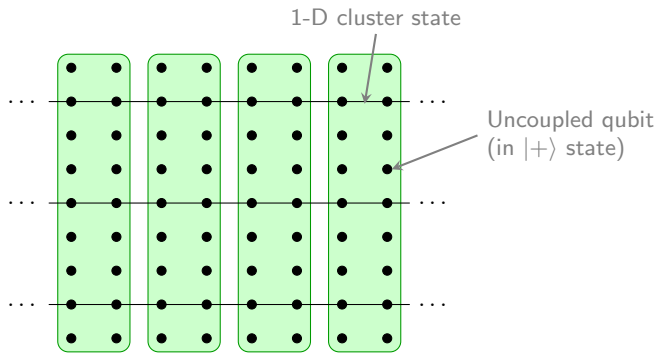}
\caption{\label{cluster_state_2d}  The first step in generalizing the 1D results
to 2D models involves consideration of a
`quasi-1D' model, which consists of a 1D cluster Hamiltonian acting on each of
$N$ qubit chains arranged in the vertical dimension, as well as a term
favouring the $\ket{+}$ states on the uncoupled qubits.
The model has a $(Z_2 \times
Z_2)^{\times N}$ symmetry, arising from the $Z_2 \times Z_2$ symmetry associated with
each of the $N$ chains.
We can treat this model as `quasi-1D' by defining our sites (shown as green
shaded areas) so that they span the vertical dimension.}
\end{figure}

In the absence of perturbations, the 2D model we consider involves $N$ uncoupled 1D cluster states arranged in
the second dimension, as shown in \figref{cluster_state_2d}.
The Hamiltonian acting on each chain is simply the 1D cluster
Hamiltonian. For generality we also assume the existence
of some uncoupled qubits, each with an associated term $-X$ in
the Hamiltonian (i.e.\ the ground state is $\ket{+}$). In order to treat this 2D
model within the same framework which we have developed for 1D systems, we will
consider an entire $N_v \times 2$ block (where $N_v$ is the extent in the vertical
direction) to be a single `site', as shown in \figref{cluster_state_2d}(a); hence we can consider the lattice to comprise a
1D chain of such `sites'. The unperturbed ground
state, which we denote $\ket{\Psi_{\mathcal{Q}}}$, then has an MPS representation which is essentially a
tensor product of several copies of the 1D cluster state MPS representation,
with a correlation system comprising $N$ qubits.
Each chain contributes a separate $Z_2 \times Z_2$
symmetry, so that the model is invariant under a symmetry group $G = (Z_2 \times
Z_2)^{\times N} = \{ (g_1, \cdots, g_N) | g_1, \cdots, g_N \in Z_2 \times Z_2 \}$.
The projective representation of this symmetry in correlation space is the
$N$-qubit generalization of the Pauli representation, namely
\begin{equation}
\label{N_qubit_pauli}
V\bigl( (g_1, \cdots, g_N) \bigr) = V_\cP(g_1) \otimes \cdots \otimes
V_\cP(g_N),
\end{equation}
where $V_\cP$ is the single-qubit Pauli representation of $Z_2 \times Z_2$,
given by
\eqnref{pauli_projective}. It can be checked that this projective representation
is maximally noncommutative, and because it has dimension $2^N = \sqrt{|G|}$, it
must be the unique irreducible projective
representation corresponding to its factor system (by Lemma
\ref{uniqueness_lemma} from \secref{sec_dual_state}).

\begin{figure}
\input{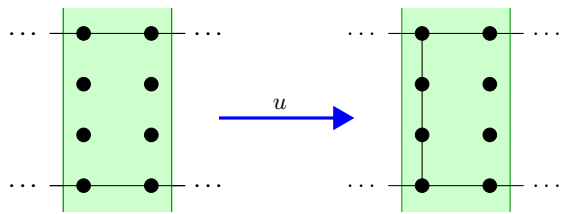}
\caption{\label{cluster_state_2d_5} In the `quasi-1D' model, two-qubit gates in
correlation space cannot be done with single-qubit measurements. However, after
applying controlled-Z gates between neighbouring qubits in order to couple two chains, an
entangling gate can be performed in
correlation space by means of
single-qubit measurements.}
\end{figure}

Now, it is easy to see that, where $\mathcal{S}$ is the set of gates which can be
executed in correlation space by a single-site measurement (up to Pauli
byproducts) in the 1D cluster state, we can execute any
tensor product
\begin{equation}
s_1 \otimes s_2 \otimes \cdots \otimes s_N, \quad s_1, \cdots, s_N \in \mathcal{S}
\end{equation}
(up to Pauli byproducts) in correlation space by a single-site measurement in our
2D model; we just do the corresponding measurements on each chain separately. We
can also find a measurement basis for a columnar site which induces entangling
gates between two qubits in correlation space; however, this measurement basis
clearly cannot correspond to single-qubit measurements, since the two chains
would then remain uncoupled.
For reasons that will become clear when we consider the relation of the
present model to the 2-D cluster state in \secref{sec_2d_cluster}, we will only consider
entangling gates between nearest-neighbour qubits in correlation space, for which we construct the
measurement basis in a particular way, as follows.

We define the on-site unitary $u$, which involves applying controlled-Z
gates between neighbouring qubits to turn our original resource state
$\ket{\Psi_{\mathcal{Q}}}$ into another graph state
$\ket{\Psi_{\mathcal{Q}}^{\prime}}$ in which the two chains of
interest are coupled, as shown in Figure \ref{cluster_state_2d_5}. If 
$A[\cdot]$ is the MPS tensor
for $\ket{\Psi_{\mathcal{Q}}}$ at the given site, then $A^{\prime}[\cdot] = A[u^{\dagger}(\cdot)]$ is the MPS tensor
for $\ket{\Psi_{\mathcal{Q}}^{\prime}}$. Using the measurement sequences described in
\cite{raussendorf_et_al_2003}, it can be shown that there exists a measurement
basis $\{ \ket{\alpha} \}$ for a columnar site, corresponding to
\emph{single-qubit} measurements, such that $A^{\prime}[\alpha] = B_\alpha U$,
where $U$ is an entangling two-qubit gate, and the $B_\alpha$ are
outcome-dependent Pauli byproducts. It follows that this two-qubit gate can be
performed in correlation space (up to the same Pauli byproducts) by measuring in
the basis $\{ u^{\dagger} \ket{\alpha} \}$.

From the above considerations, we see that the model we are discussing can be considered as a generalization of the
1-D cluster state in which $N$ qubits can be propagated in correlation space, acted on
by entangling gates between nearest neighbour qubits as well as single-qubit gates.
In the presence of symmetry-respecting perturbations to the Hamiltonian,
the arguments of Sections \ref{section_mps} and
\ref{sec_general} can still be applied for any finite $N$. However, if we
want to exploit the locality of the perturbation in the vertical direction as
well as the horizontal, we need to make some additional arguments.
First, we observe that (by Lemma \ref{uniqueness_lemma}) the protected subsystem of
correlation space (which corresponds to the ancilla system appearing in the dual
picture of MBQC)
will have dimension $2^N$, and
by identifying the action of the symmetry within the protected subsystem with
\eqnref{N_qubit_pauli}, we can decompose the protected subsystem into $N$ qubits, one
associated with each chain.

Our argument now hinges on two observations.
First, the dual Hamiltonian of which the dual state is the gapped
ground state, as constructed in \secref{sec_dual_hamiltonian}, is in fact a sum of interactions that are local on the original
two-dimensional lattice. Second, the unitary couplings $\cG_k$ appearing in the
dual process, which \emph{a priori} could couple an entire columnar site to the entire
$N$-qubit ancilla system, in fact acts trivially outside an appropriately
localized area (see \figref{tcG}). 
These observations both follow from the form of the interaction
\begin{equation}
\mathcal{D}_1 = \sum_i \ket{i}
\bra{i} \otimes V(g_i)
\end{equation}
between a columnar site and the ancilla system.
(Recall that $\mathcal{D}_1$ and its inverse appeared in the development of the
dual picture in \secref{sec_dual}, as well as in the construction of the duality
transformation $\cD$ from which the dual Hamiltonian $\cH$ was obtained in
\secref{sec_dual_hamiltonian}.)
It is easily seen that in the present quasi-1D setup, $\mathcal{D}_1$
simply comprises a number of applications of
the corresponding operator $\mathcal{D}^{(1)}_1$ for the one-dimensional cluster
chain (see \figref{cluster_state_2d_7}). 

\begin{figure}
\includegraphics{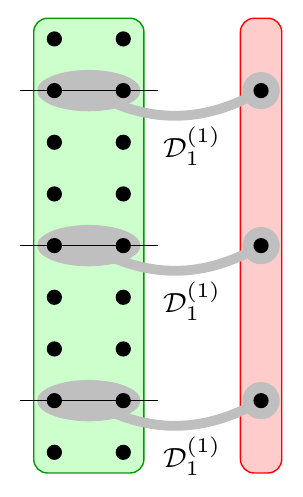}
\caption{\label{cluster_state_2d_7}The unitary operator $\mathcal{D}_1$, which
couples a columnar site and the $N$-qubit ancilla system.}
\end{figure}

\begin{figure}
\subfloat[Single-qubit gate]{\includegraphics{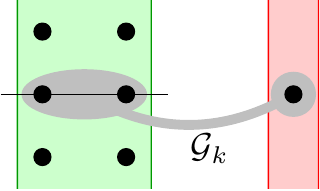}}
\hspace{0.7cm}
\subfloat[Two-qubit gate]{\includegraphics{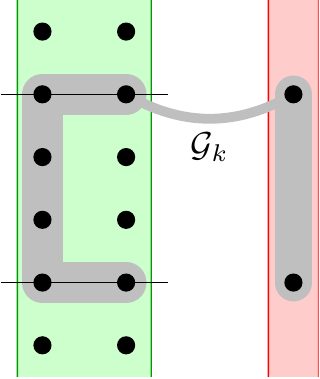}}
{\caption{\label{tcG} The unitary couplings $\cG_k$
appearing in the dual process corresponding to single- and two-qubit gates.}}
\end{figure}

Now, consider a quantum circuit comprising a sequence of gates, and let
$Q_k$ be the sets of physical (not ancilla) qubits acted on by the corresponding
couplings $\tcG_k$ in the dual process. Just as in the one-dimensional case, we
expect that if $R \equiv \min_{k_1, k_2}
\mathrm{dist}(Q_{k_1}, Q_{k_2})$ is much larger than the correlation length
$\widetilde{\xi}$ for the dual state,
then the reduced state $\Tr_{(\bigcup_k Q_k)^c}
\ket{\widetilde{\Psi}}\bra{\widetilde{\Psi}}$ on $\bigcup_k Q_k$ should be approximately a product state over the
$Q_k$'s.
Thus, arguing as in the one-dimensional case (see
\secref{subsec_effective_noise_model}), we find that performing
the measurement sequence on a perturbed resource state 
    corresponds to a noisy quantum circuit, with the noise described
by a noise superoperator $\mathcal{E}_k$ following each gate. Furthermore,
$\mathcal{E}_k$ acts non-trivially only on the same qubits that were acted on by
the
corresponding gate in the
original noiseless quantum circuit. 
The strength of the noise, as given by $\|
\mathcal{E}_k - \mathcal{I}\|_{\Diamond}$, is determined by the deviation (in
the trace norm) of the reduced
density operator on $Q_k$ from its unperturbed value [see \eqnref{Ek_inequality} in
\secref{subsec_effective_noise_model}], which should be small for
small perturbations.

Let us now estimate the required scale-up in the size of the resource state. We
only consider in detail the case of local quantum circuits (i.e.\ containing only gates
acting between nearest-neighbour qubits). As in the one-dimensional
case (\secref{sec_factorization_general}), according to the rigorous factorization result proved in
Appendix \ref{appendix_factorization}, 
the minimum separation $R_{\mathrm{min}}$ between any of the $Q_k$'s required for an accuracy
$\epsilon$ scales like
\begin{equation}
R_{\mathrm{min}}/\widetilde{\xi} = \mathcal{O}(m) + \mathcal{O}[\log(1/\epsilon)].
\end{equation}

The required scale-up can be expressed in terms of $R_{\mathrm{min}}$, as
follows. First, we must ensure that, at each time step,
all non-trivial gates are separated
by a distance of at least $R_{\mathrm{min}}$. This leads to a scale-up by a factor of $\sim
R_{\mathrm{min}}$ in the number of time steps.
Then,
the buffering between horizontal locations at which nontrivial gates take
place implies another factor
of $R_{\mathrm{min}}$ scale-up. Hence, the total scale-up factor is $s \sim
R_{\mathrm{min}}^2$.
On the other hand,
if the quantum circuit that we want to simulate is
not already local, then translating it into a local circuit will introduce
additional overhead (still scaling at worst polynomially in the number of
qubits in the quantum circuit).

\subsection{The 2D cluster model}
\label{sec_2d_cluster}
Now we will return to the model we are actually interested in: the 2D cluster
model on a square lattice. Investigations of the
effect on this model of perturbations 
\cite{doherty-bartlett-prl-2008,transitions_computational,klagges_constraints,cluster_fate} have demonstrated a variety
of results depending on the perturbation. Here, we will focus on perturbations
respecting an appropriate
symmetry. When this symmetry is enforced, the cluster model lies in a robust
SPT phase, within which the identity gate is protected and the effective noise
model construction of this paper can be applied.

In order to achieve our goal,
we will establish an equivalence between the 2D cluster model
and a `quasi-1D' model of the type considered in the previous section. The basic idea is
to define a (local) duality transformation $\mathcal{U}$ (not the same as the duality
transformation $\mathcal{D}$ which we have considered previously) which relates
the two models.
Specifically, we define
\begin{equation}
\mathcal{U} = \prod_{(i,j) \in L} (CZ)_{ij},
\end{equation}
where $(CZ)_{ij}$ is the controlled-Z gate acting on qubits $i$ and $j$, and the
product is over an appropriate set $L$ comprising nearest-neighbour pairs of
qubits. By an appropriate choice of $L$, we can ensure that 
 applying $\mathcal{U}$ to the 2D cluster Hamiltonian turns it into a model of the
 type we considered in the previous section.

Now, let $H$ be a perturbation to the 2-D cluster Hamiltonian. Then $\mathcal{U}
H \mathcal{U}^{\dagger}$ is a perturbation to the quasi-1D model, and the
arguments of the preceding section can be applied provided that the perturbation
respects the appropriate symmetry.
Furthermore, the result (in terms of statistics of measurement
outcomes) of performing the adaptive measurement
protocol described in Section \ref{sec_quasi_1d} on the ground state of $\cU H \cU^{\dagger}$, involving measuring the
observables $\hat{o}$, must be the same as the
effect of performing the same protocol on the ground state of $H$, but measuring
the observables $\mathcal{U}^{\dagger} \hat{o} \mathcal{U}$. We will now examine
in detail this corresponding measurement protocol for perturbations of the 2-D
cluster model.

In the quasi-1D resource state, there is a set of `redundant' qubits which
never need to be measured. It turns out to be convenient to assume, however,
that we do measure those qubits, in the $z$ basis, and that we do this before
any other measurements.
We observe that all the
measurements that are performed on the quasi-1D state (after applying
controlled-Z gates
to couple chains where we want to perform a two-qubit gate) are all single-qubit,
and are either in the $z$ basis, i.e.\ measuring $Z$, or in the $x$-$y$
plane, i.e.\ measuring $\sigma_\theta = (\cos \theta) X + (\sin \theta) Y$ for
some angle $\theta$. Hence,
the corresponding observables to measure in the 2D cluster state are either of the form $Z_j$ (for some qubit
$j$), or
\begin{equation}
\sigma_{\theta_j}^{(j)} \prod_{k \in
\mathcal{N}_j} Z_k
\end{equation}
for some qubit $j$ and angle $\theta_j$, and where $\mathcal{N}_j$ is some set of neighbouring
redundant qubits. But, since we measured the redundant qubits first, they are
all now in
eigenstates of $Z$. Therefore, labelling the measured values of $Z$ on the
redundant qubits by $\{ z_j \}$, we see that measuring $\cU \hat{o}_j \cU^{\dagger}$ is
equivalent to measuring $\sigma_{\theta_j}^{(j)} \left(\prod_{k \in
\mathcal{N}_j} z_k\right)$, which in turn is equivalent to measuring
$\sigma_{\theta_j}^{(j)}$ and reinterpreting the measurement outcomes based on
the value of $\prod_{k \in \mathcal{N}_j} z_k$.
Therefore, we have shown that the measurement protocol on the 2D cluster state
can be implemented using
only single-qubit measurements and adaptivity. It can be checked that the
measurement protocol so constructed is essentially the same as the usual one
for the 2D cluster state on a square lattice, which is described, e.g.\ in
\cite{raussendorf-prl-2001,raussendorf_et_al_2003}.

\begin{figure}
  \subfloat[Horizontal]{
  \includegraphics[scale=0.6]{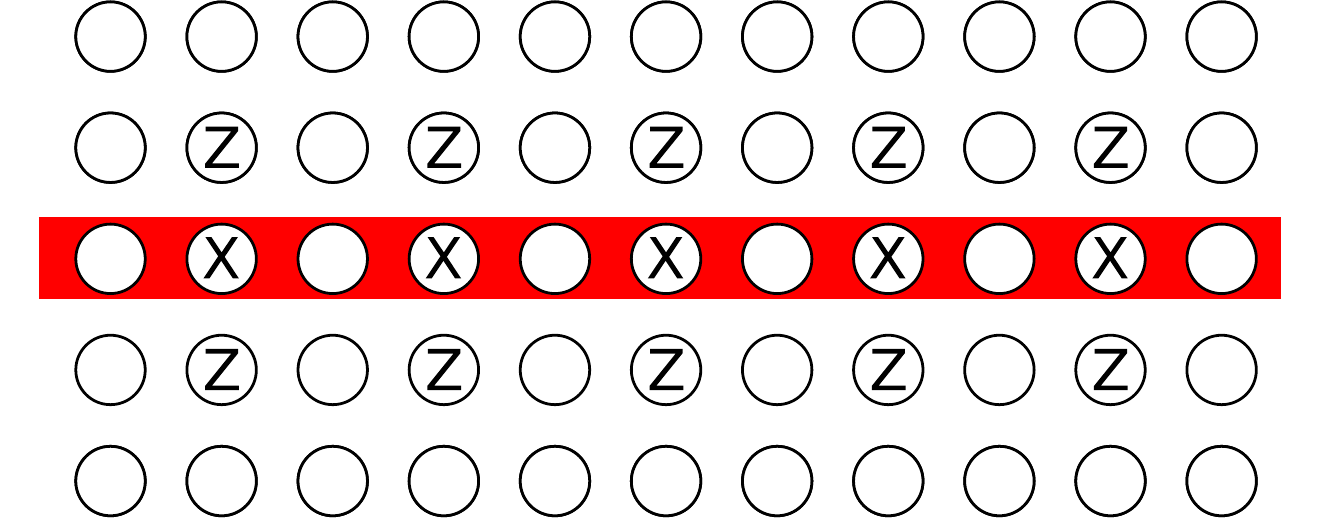}
  }
  \hspace{5cm}
  \subfloat[Diagonal]{
  \includegraphics[scale=0.6]{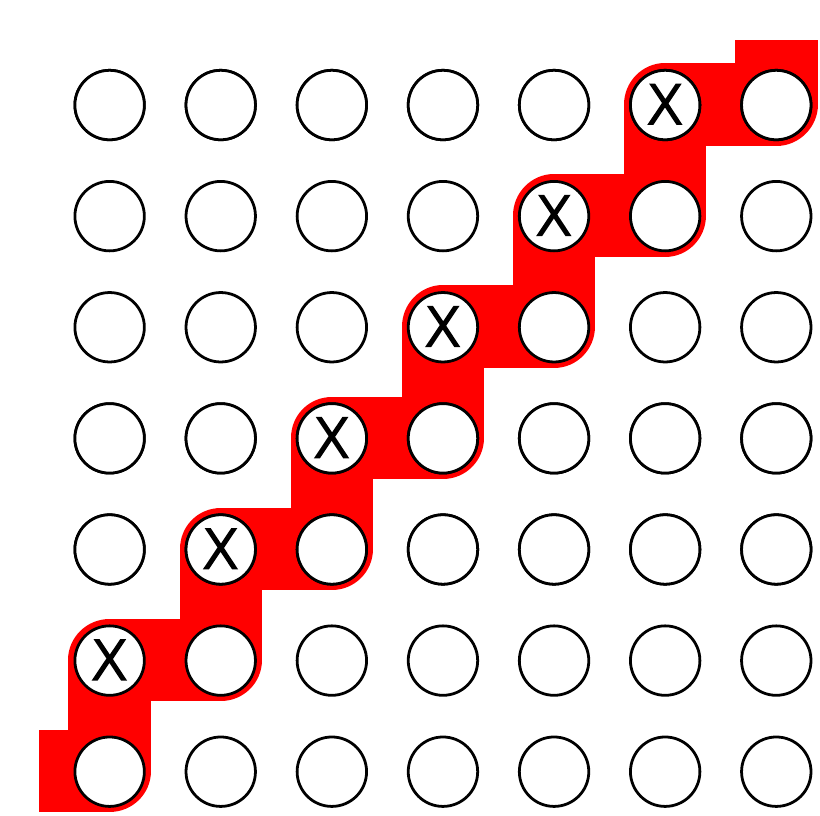}
  }
\caption{\label{clusterstrips} 
Two possible layouts for the 1D chains (red lines) on a 2D lattice.
Each layout
is associated with a measurement protocol for MBQC on the 2D cluster state, and
with a symmetry group (a representative generator of which is shown in each case).
For any layout, we can construct an effective noise
model corresponding to performing the associated measurement protocol on a perturbed
cluster state, provided that the perturbation respects the associated symmetry.
}
\end{figure}

Finally, let us discuss the required symmetry. The duality transformation
$\mathcal{U}$ can be used to relate the $(Z_2 \times Z_2)^{\times N}$ symmetry
which protects the quasi-1D model to a corresponding one in the 2D cluster
model.
The form of the generators of the latter symmetry is shown in
Figure \ref{clusterstrips}(a).
Let us remark that we can also make similar arguments in the case that the 1D
chains are arranged on the 2D square lattice in an unconventional way, for example
diagonally \cite{chung_string,doherty-bartlett-prl-2008} as shown in Figure \ref{clusterstrips}(b). The advantage of the diagonal layout is that
the symmetry [one of the generators of which is depicted in Figure
\ref{clusterstrips}(b)]
takes a particularly simple form, due to the fact that every non-chain qubit
neighbours an
even number of chain qubits, and so the $Z$'s that would normally appear on
non-chain qubits all cancel.
In particular, this symmetry
commutes with an especially simple and physically meaningful
perturbation, namely a uniform magnetic field in the $x$ direction, i.e.\ $V = B \sum_i
X_i$. (The effect of such a perturbation has been studied numerically in
\cite{aqocm,cluster_fate}; the SPT cluster phase persists up to a first-order phase transition at
$|B| = 1$.)

\subsection{Perturbed ground states are universal resources}
\label{sec_invoking_threshold}
Let us summarize the conclusions which we obtain from the
considerations in Sections \ref{sec_quasi_1d} and \ref{sec_2d_cluster} by
stating them as a theorem. In combination with the threshold theorem of
fault-tolerant quantum computation, this theorem will allow us to deduce that sufficiently
small symmetry-respecting perturbations to the 2-D cluster Hamiltonian retain
ground states which can serve as universal resources for MBQC.

We consider perturbations to the exact cluster Hamiltonian
$H_{\mathcal{C}}$ on the 2-D square lattice, which we can take to be a sum of local commuting terms,
with an energy gap to the first excited state of 2. Suppose now we consider a
perturbed Hamiltonian $H = H_{\mathcal{C}} + V$,
and $V$ is a perturbation of the form
\begin{equation}
V = \sum_{u \in \Lambda} V_u,
\end{equation}
where $\Lambda$ is the set of all lattice sites, and each $V_u$ is an
interaction term supported on the set $\mathcal{B}(u,r)$ of
sites within some fixed distance $r$ (more generally,
interactions decaying exponentially with distance would not present an obstacle to
our arguments).
We define the local strength of the perturbation by
\begin{equation}
J \equiv \max_u \|V_u\|.
\end{equation}
The cluster Hamiltonian $H_{\mathcal{C}}$ belongs to a class of Hamiltonians
for which it has been shown
\cite{topological_stability,*topological_stability_short} that the gap is
stable to local perturbations, i.e.\ there exists a
threshold $\eta > 0$ (depending only on $r$),
such that the gap of the perturbed Hamiltonian is at least $1$, provided
that $J \leq \eta$.

Let us assume that the perturbation $V$ respects an appropriate symmetry group,
constructed according to the procedure described in \secref{sec_2d_cluster}
(such as the one of the symmetry groups depicted in \figref{clusterstrips}).
Suppose then we want to use the perturbed ground state $\ket{\Psi}$ to simulate a local quantum
circuit containing $N$ qubits, $T$ time steps, and $m$ gates, with the gates drawn from the gate
set $\mathcal{S}$ comprising single-qubit rotations, a two-qubit entangling
gate (as constructed in \secref{sec_quasi_1d}), and the non-unitary RESTART gate (which
corresponds to the reinitialization of a qubit). We obtain the result

\begin{thm}
\label{noise_model_theorem}
Provided $J \leq \eta$,
we can find
an appropriate measurement protocol on the ground state
$\ket{\Psi}$ such that the final reduced state on the output qubits is
$\epsilon$-close in the trace norm to the outcome of the quantum circuit, with
added noise.
In each time step $t$ of the equivalent circuit process,
the appropriate gates are applied, followed by a noise process described by a superoperator
$\mathcal{E}_t$.
This superoperator can be written as a tensor product
$\mathcal{E}_t = \bigotimes_A \mathcal{E}_{t,A}$, where the product is over
`locations', i.e.\ sets of qubits coupled by a gate in the time step
$t$ (each qubit not coupled by a gate in the time step $t$ also counts as a
location, but $\mathcal{E}_{t,A} = \mathcal{I}$ in that case). Thus, the noise has no correlations in space (other than those due to gates
acting between qubits) or time.
Furthermore, the noise operator $\mathcal{E}_{t,A}$ at each location and time 
is close to the identity superoperator in the
diamond norm:
\begin{equation}
\label{Et_bound}
\| \mathcal{E}_{t,A} - \mathcal{I} \|_{\Diamond} \leq c J,
\end{equation}
for some constant $c$ (dependent only on $r$).
The number of qubits measured $n$ satisfies
\begin{equation}
n = NT \left\{\mathcal{O}(m) +
\mathcal{O}[\log(1/\epsilon)]\right\}.
\end{equation}
\end{thm}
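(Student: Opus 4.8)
The plan is to prove Theorem \ref{noise_model_theorem} by assembling the machinery of the preceding sections, reducing the 2D problem to the quasi-1D setting and then invoking the dual-process description together with the factorization and perturbation bounds. First I would apply the local duality transformation $\mathcal{U}$ of \secref{sec_2d_cluster} to map the perturbed 2D cluster Hamiltonian $H = H_{\mathcal{C}} + V$ to the perturbed quasi-1D Hamiltonian $\mathcal{U} H \mathcal{U}^{\dagger}$; since $V$ respects the symmetry by hypothesis, $\mathcal{U} V \mathcal{U}^{\dagger}$ respects the $(Z_2 \times Z_2)^{\times N}$ symmetry protecting the quasi-1D model, and the locality of $\mathcal{U}$ ensures that $\mathcal{U} V \mathcal{U}^{\dagger}$ remains a sum of local terms with local strength comparable to $J$. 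As established in \secref{sec_2d_cluster}, the prescribed 2D measurement protocol corresponds exactly to the quasi-1D adaptive protocol, so it suffices to prove the claimed noise model in the quasi-1D picture.

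Next I would invoke the gap stability result of \cite{topological_stability} to conclude that, for $J \leq \eta$, the perturbed ground state remains gapped (gap $\geq 1$) and hence stays within the SPT phase; by the dual-Hamiltonian construction of \secref{sec_dual_hamiltonian}, the dual state $\ket{\wtPsi}$ is then the gapped ground state of a local dual Hamiltonian with some finite correlation length $\widetilde{\xi}$. Applying the dual-process equivalence of Sections \ref{sec_dual} and \ref{sec_general}, the adaptive protocol becomes a non-adaptive dual process in which each non-trivial gate corresponds to a coupling $\tcG_k$ between a localized set $Q_k$ of physical qubits and the associated ancilla qubits; here the crucial input is the 2D locality of $\tcG_k$ (\figref{tcG}), which confines both the spatial support of each coupling and its action on the ancilla system to the qubits participating in the gate.

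I would then establish the local Markovian structure by choosing the protocol so that the sets $Q_k$ are mutually separated by at least $R_{\mathrm{min}}$, and applying the rigorous factorization bound \eqnref{weaker_general_bound} (Appendix \ref{appendix_factorization}) to conclude that the reduced dual state on $\bigcup_k Q_k$ is $\epsilon$-close in trace norm to $\wtrho_{\mathrm{prod}} = \bigotimes_k \wtrho_k$, provided $R_{\mathrm{min}}/\widetilde{\xi} = \mathcal{O}(m) + \mathcal{O}[\log(1/\epsilon)]$. In the exact product case the dual process decouples into independent interactions, one per location, yielding a noise superoperator of the tensor-product form $\mathcal{E}_t = \bigotimes_A \mathcal{E}_{t,A}$ with $\mathcal{E}_{t,A} = \mathcal{I}$ on trivial locations; the contractivity argument of \secref{subsec_effective_noise_model}, applied through the overall CPTP map $\mathcal{B}$ taking $\wtrho$ to the output state, transfers this conclusion to the general case with only an $\epsilon$ error. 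The strength bound \eqnref{Et_bound} follows by combining \eqnref{Ek_inequality} with the small-perturbation estimate of Appendix \ref{appendix_smallperturbations}, giving $\| \wtrho_k - \ket{\phi}\bra{\phi} \|_1 \leq cJ$, while the qubit count follows from the resource scale-up analysis of \secref{sec_quasi_1d} together with the factorization scaling above.

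The main obstacle I expect is establishing the spatial locality of the noise with a clean tensor-product structure: one must simultaneously control (i) that each dual coupling $\tcG_k$ genuinely acts only on the gate's qubits rather than spreading across the full $N$-qubit ancilla system, which rests on the factorized form of $\mathcal{D}_1$ in the quasi-1D setup, and (ii) that the dual reduced state factorizes across distinct locations, which is the content of the technically demanding bound \eqnref{weaker_general_bound} with its exponential dependence on $m$. Reconciling these two localities --- one arising from the structure of the disentangler and one from the clustering of correlations in the gapped dual state --- is the delicate step, since it is precisely what upgrades the temporal Markovianity of the 1D result to the full space-time locality required to invoke the threshold theorem.
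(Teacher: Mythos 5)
Your proposal is correct and follows essentially the same route as the paper: the paper's own proof is a brief assembly that defers everything except the bound \eqnref{Et_bound} to the constructions of Sections \ref{section_mps}--\ref{section_2d}, and obtains \eqnref{Et_bound} exactly as you do, by combining the analogue of \eqnref{Ek_inequality} with the gap-stability threshold and the continuity estimate of Appendix \ref{appendix_smallperturbations} (with the reference state being the unperturbed dual reduced state on the relevant set of qubits, which in the quasi-1D model is the product state you describe).
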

\begin{proof}
The only aspect that we have not previously discussed is the bound
\eqnref{Et_bound}. Following the same argument as in the one-dimensional case
(\secref{subsec_effective_noise_model}), we find [using the analogue of
\eqnref{Ek_inequality}] that the deviation $\| \mathcal{E}_{t,A} - \mathcal{I}
\|_{\Diamond}$ is
bounded above by $\Delta_X \equiv \| \wtrho_X - \wtrho_{X,0} \|_1$, where $X$ is
the set of qubits in the 2-D lattice that affect the operation of the gate in
question, and $\wtrho_X$ and $\wtrho_{X,0}$ are the reduced
states on $X$ of the perturbed dual state $\ket{\widetilde{\Psi}}$ and the
unperturbed dual state $\ket{\widetilde{\Psi_0}}$ respectively. Physically, it
should be clear that $\Delta_X$ will be small for small perturbations;
in Appendix \ref{appendix_smallperturbations} we demonstrate that, so long as $J \leq \eta$, the inequality $\Delta_X \leq
cJ$ holds for some constant $c$ depending only on $r$.
\end{proof}

Now that we have shown that perturbations in the Hamiltonian correspond to
noisy quantum circuits, we can invoke the threshold theorem of
fault-tolerant quantum computation
\cite{aharonov_ben_or,aliferis_long_range,aharonov_long_range}. For our
purposes, the most suitable version is Theorem 13 of Ref.\
\cite{aharonov_ben_or}, which we can state as follows:
\begin{thm}
\label{aharonov_ben_or_theorem}
Let us assume a noise model as described in Theorem \ref{noise_model_theorem}.
Then there exists a threshold $\eta^{\prime} > 0$ and a constant $\alpha$ such that, so long as 
$\|\mathcal{E}_{t,A} - \mathcal{I}\|_{\Diamond} \leq \eta^{\prime}$
for all $A$,$t$, then the following propeties hold. For any $\epsilon > 0$, and
any local quantum circuit $\mathcal{C}$ made from gates drawn from
$\mathcal{S}$ (with $N$ qubits, $T$ time steps, and $m$ gates),
there exists another local circuit $\mathcal{C}^{\prime}$ with gates drawn from
$\mathcal{S}$, such
that $\mathcal{C}^{\prime}$ \emph{with} noise produces the same result (in terms
of the probability distribution for the final readout, and up to an
error $\epsilon$) as $\mathcal{C}$ \emph{without} noise. The scale-up
factors for the number of qubits, the number of time steps, and the number of
gates are all bounded  by
$\mathrm{(const.)} \times \log^{\alpha}(m/\epsilon)$.
\end{thm}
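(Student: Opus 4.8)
The plan is straightforward, since Theorem \ref{aharonov_ben_or_theorem} is simply a restatement of Theorem 13 of Ref.~\cite{aharonov_ben_or} recast in the language of the effective noise model of Theorem \ref{noise_model_theorem}. The content is therefore not to reprove the threshold theorem from scratch, but to verify that the noise model we have constructed satisfies exactly the hypotheses under which the cited result holds. First I would recall the precise noise model assumed in Ref.~\cite{aharonov_ben_or}: noise that is \emph{local}, in the sense that at each time step it factorizes as a tensor product over the locations (gates and idle qubits), and \emph{Markovian}, in the sense that it carries no correlations between distinct time steps, together with a uniform bound on the strength of the noise at each location.

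Next I would match these requirements against the conclusions of Theorem \ref{noise_model_theorem}. That theorem already delivers a noise superoperator of the form $\mathcal{E}_t = \bigotimes_A \mathcal{E}_{t,A}$, with the product taken over locations and with no temporal correlations; this is precisely the tensor-product, Markovian structure required. The only remaining ingredient is the strength bound, supplied by \eqnref{Et_bound}: each factor satisfies $\| \mathcal{E}_{t,A} - \mathcal{I} \|_{\Diamond} \leq cJ$. I would then observe that imposing the threshold condition $cJ \leq \eta^{\prime}$ guarantees $\| \mathcal{E}_{t,A} - \mathcal{I}\|_{\Diamond} \leq \eta^{\prime}$ uniformly, so that the hypothesis of the threshold theorem is met for all $A$ and $t$; combining this with the gap-stability requirement $J \leq \eta$ from Theorem \ref{noise_model_theorem} amounts to demanding that $J$ lie below the smaller of the two thresholds.

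Finally I would confirm the compatibility of the gate set: the fault-tolerant construction must supply a universal scheme built from the available gates $\mathcal{S}$ (single-qubit rotations, the two-qubit entangling gate, and the non-unitary RESTART gate corresponding to reinitialization), which is exactly the setting of the concatenated fault-tolerance construction underlying Ref.~\cite{aharonov_ben_or}. Given the match of noise structure, noise strength, and gate set, the cited theorem applies and yields the stated scale-up factors bounded by $\mathrm{(const.)} \times \log^{\alpha}(m/\epsilon)$.

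I expect the main obstacle to be the precise identification of the noise model rather than any new estimate. Different versions of the threshold theorem assume different notions of noise---for example local stochastic Pauli noise versus general completely positive noise controlled in a particular operator norm---and one must be careful to invoke the version that handles arbitrary local CPTP noise via a diamond-norm bound. The diamond norm is the natural quantity here, since it bounds the worst-case deviation of the channel independently of the input state and is stable under tensoring with ancillas, which is exactly what the recursive simulation in the fault-tolerance analysis requires. Verifying that the reference's notion of ``location'' coincides with ours---gates and idle qubits, with the noise $\mathcal{E}_{t,A}$ supported only on the qubits acted on by the corresponding gate---is thus the key compatibility check, and once it is in place the result follows directly from the cited theorem.
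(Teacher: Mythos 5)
Your proposal is correct and takes essentially the same route as the paper: Theorem \ref{aharonov_ben_or_theorem} is not proved in the paper but is imported directly as a restatement of Theorem 13 of Ref.~\cite{aharonov_ben_or}, with the only real work being the compatibility check you describe (local tensor-product Markovian noise over locations, a uniform diamond-norm bound on each $\mathcal{E}_{t,A}$, and a gate set $\mathcal{S}$ supporting the fault-tolerant construction). Your additional remarks on matching the noise model and the notion of location are exactly the right points of care, and the combination of thresholds you describe is what the paper does in Theorem \ref{mainthm} via $\eta^{\prime\prime} = \min\{\eta,\eta^{\prime}/c\}$.
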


Combining Theorems \ref{noise_model_theorem} and \ref{aharonov_ben_or_theorem},
we obtain:
\begin{framed}
\begin{rmthm}
\label{mainthm}
Consider the perturbed model $H = H_{\mathcal{C}} + V$ as described above.
Then there exists some threshold $\eta^{\prime\prime}  =
\min\{\eta,\eta^{\prime}/c\} > 0$ (depending only on $r$) with the following property. Provided that $J < \eta^{\prime\prime}$, then for any local
quantum circuit $\mathcal{C}$ (with $N$ qubits, $T$ time steps, and $m$ gates),
with gates drawn from $\mathcal{S}$,
we can find 
an appropriate measurement protocol on the perturbed ground state $\ket{\Psi}$
such that the result is equivalent (in terms of the probability distribution for
the final readout, 
and up to an error $\epsilon$ which can be made arbitrarily small) to the outcome of
the original quantum circuit. As $m \to \infty$ with $\epsilon$ held fixed, 
the number of measured qubits $n$ satisfies
\begin{equation}
n \leq NT \times \mathcal{O}\left(m \log^{3\alpha} m\right).
\end{equation} 
\end{rmthm}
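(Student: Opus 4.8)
The plan is to obtain Theorem \ref{mainthm} by composing the two results already established—the effective-noise-model Theorem \ref{noise_model_theorem} and the fault-tolerance Theorem \ref{aharonov_ben_or_theorem}—and then carefully accounting for the overheads. First I would fix the threshold. Theorem \ref{noise_model_theorem} guarantees that, whenever $J \leq \eta$, the measurement protocol realizes the target circuit together with a local Markovian noise process whose strength obeys the uniform bound $\|\mathcal{E}_{t,A} - \mathcal{I}\|_{\Diamond} \leq cJ$ at every location and time step. To be able to invoke Theorem \ref{aharonov_ben_or_theorem}, this noise must lie below the fault-tolerance threshold $\eta'$, which forces $cJ \leq \eta'$, i.e.\ $J \leq \eta'/c$. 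Setting $\eta'' = \min\{\eta, \eta'/c\}$ therefore ensures that, for $J < \eta''$, both the gap-stability hypothesis of Theorem \ref{noise_model_theorem} and the sub-threshold hypothesis of Theorem \ref{aharonov_ben_or_theorem} hold simultaneously; crucially, the bound $cJ$ is independent of the circuit being simulated, so the noise stays sub-threshold no matter how large the fault-tolerant circuit becomes.

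Next I would run the two theorems in sequence. Given the target circuit $\mathcal{C}$ and a desired total accuracy $\epsilon$, I split the error budget as $\epsilon = \epsilon_1 + \epsilon_2$ with $\epsilon_1 = \epsilon_2 = \epsilon/2$. Applying Theorem \ref{aharonov_ben_or_theorem} produces a fault-tolerant circuit $\mathcal{C}'$, still built from gates in $\mathcal{S}$, which when run \emph{with} the sub-threshold noise reproduces the noiseless output of $\mathcal{C}$ up to error $\epsilon_1$; the numbers of qubits, time steps, and gates are each scaled up by a factor of at most $\mathrm{(const.)} \times \log^{\alpha}(m/\epsilon_1)$. I would then apply Theorem \ref{noise_model_theorem} to $\mathcal{C}'$: this yields a single-qubit-measurement protocol on the perturbed ground state $\ket{\Psi}$ whose output is $\epsilon_2$-close in trace norm to $\mathcal{C}'$-with-noise. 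A triangle inequality (using contractivity of the trace distance under the readout map) then gives that the measurement protocol reproduces $\mathcal{C}$ up to $\epsilon_1 + \epsilon_2 = \epsilon$, which is the desired equivalence, and $\epsilon$ may be taken arbitrarily small.

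Finally I would carry out the resource counting, which is where the bookkeeping lives. Feeding the scaled-up parameters $N' = N\,\mathcal{O}(\log^{\alpha}(m/\epsilon))$, $T' = T\,\mathcal{O}(\log^{\alpha}(m/\epsilon))$, $m' = m\,\mathcal{O}(\log^{\alpha}(m/\epsilon))$ of $\mathcal{C}'$ into the count $n = N'T'\{\mathcal{O}(m') + \mathcal{O}[\log(1/\epsilon_2)]\}$ of Theorem \ref{noise_model_theorem}, and holding $\epsilon$ fixed as $m \to \infty$ (so that each $\log(m/\epsilon)$ becomes $\mathcal{O}(\log m)$ and the $\mathcal{O}[\log(1/\epsilon_2)]$ term is subdominant to $\mathcal{O}(m')$), one obtains $N'T' = NT\,\mathcal{O}(\log^{2\alpha} m)$ and $\mathcal{O}(m') = \mathcal{O}(m \log^{\alpha} m)$, hence $n \leq NT \times \mathcal{O}(m \log^{3\alpha} m)$, matching the claim. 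The argument involves no genuinely hard step beyond these two black-box inputs; the only point requiring care—the main ``obstacle''—is to keep the two distinct roles of the accuracy parameter separate in the error budget (the internal trace-norm approximation error in Theorem \ref{noise_model_theorem} versus the fault-tolerance error in Theorem \ref{aharonov_ben_or_theorem}), and to note that it is precisely the uniform, circuit-independent noise bound $cJ$ that licenses applying fault tolerance to the already-scaled-up circuit $\mathcal{C}'$.
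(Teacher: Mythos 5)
Your proposal is correct and follows essentially the same route as the paper, which simply obtains Theorem \ref{mainthm} by composing Theorems \ref{noise_model_theorem} and \ref{aharonov_ben_or_theorem} with the threshold $\eta'' = \min\{\eta,\eta'/c\}$ chosen exactly as you describe. Your error-budget split and the resource count $N'T'\{\mathcal{O}(m')+\mathcal{O}[\log(1/\epsilon)]\} = NT\,\mathcal{O}(\log^{2\alpha}m)\cdot\mathcal{O}(m\log^{\alpha}m)$ reproduce the paper's stated bound $n \leq NT\times\mathcal{O}(m\log^{3\alpha}m)$.
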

\end{framed}
This is sufficient to show that the perturbed ground states remain universal
resources, which is Theorem \ref{mainresult} as stated in
\secref{summary_of_results}; it is the main result of this paper.

\section{Conclusion}
In this paper, we have developed a framework to characterize
the effectiveness of measurement protocols for MBQC with SPT-ordered ground
states of quantum spin systems. This has allowed us to prove the universality
for MBQC of the ground states of perturbed versions of the 2-D cluster
Hamiltonian, provided that the perturbation is sufficiently small and respects
an appropriate symmetry.

The type of SPT order that we have presented here is that present in
one-dimensional systems, which is related to a nontrivial factor system (also
known as a 2-cocycle).
It is for
this reason that, in order to establish universality in two-dimensional systems,
we had to treat them as `quasi-one-dimensional' and assume an extensive
symmetry group $(Z_2 \times Z_2)^{\times N}$, which grows with the vertical
extent of the system. For standard, non-extensive symmetries in two dimensions,
SPT orders can be related to 3-cocycles \cite{spto_2d,spto_higher}, but it
remains to be seen whether
similar connections can be drawn between such two-dimensional SPT order and
MBQC.

Finally, we note that if
MBQC in ground states of quantum spin systems is to be a robust form of quantum
computation, then it must be possible in the presence of arbitrary (not
necessarily symmetry-respecting) local perturbations to the Hamiltonian,
as well as at nonzero temperature.
Non-symmetry-respecting perturbations break the symmetry that is essential to our argument; the
difficulty of extending our treatment to nonzero temperature was discussed in
\secref{sec_finite_temperature}. Nor have we considered the
effect of non-ideal measurements, or of decoherence of the resource state taking
place during the course of the measurement protocol.
Therefore, it remains an open question whether fault-tolerant MBQC is 
possible with such imperfections.

\section*{Acknowledgements}
We thank I.~Schwarz for bringing to our attention the connection between SPT
order and the nature of the adaptive measurement protocol to correct for ``wrong''
outcomes (see \secref{sec_dual}).
We
acknowledge support from the ARC via the Centre of Excellence in Engineered
Quantum Systems (EQuS), project number CE110001013.

\ifdraft
\begin{appendices}
\else
\appendix
\fi
\section{The dual finitely correlated state}
\label{dual_mps}
A pure finitely-correlated state (pFCS) \cite{fcs,pure_fcs} is the thermodynamic limit of the
translationally invariant MPS generated by a fixed MPS tensor $A$. The
nature of the correlations can be expressed through the \emph{transfer channel}
\begin{equation}
\cA(\sigma) = \sum_k A[k] \sigma A[k]^{\dagger}
\end{equation}
(here the sum is over some basis $\{ \ket{k} \}$ for the site Hilbert space; it
can be shown that this definition of $\cA$ is independent of the choice of
basis). In its canonical form, a pure FCS is further characterized by the
following properties:
\begin{enumerate}[(a)]
\item $\mathcal{A}$ is unital, i.e. $\cA(\mathbb{I}) =
\mathbb{I}$.
\item There exists a density operator $\Lambda$ such that $\mathcal{A}^{\dagger}(\Lambda)
= \Lambda$.
\item Defining $a$ to be the largest
magnitude
eigenvalue of $\mathcal{A}^{\dagger}$ other than the one corresponding to the eigenvector
$\Lambda$, we have that $|a|$ is strictly less than 1.
\end{enumerate}
The correlation length is then defined by $\xi \equiv -1/\log |a|$, and
the eigenvalues of $\Lambda$ correspond to the entanglement
spectrum obtained from a cut in an infinite chain.

Now we restrict ourselves to pFCS generated by tensors $A$ satisfying the
decomposition \eqnref{formula_for_A}.
We define the CPTP superoperator $\mathcal{V}_g$ according to $\mathcal{V}_g(\sigma) = [V(g)
\otimes \mathbb{I}] \sigma [V(g) \otimes \mathbb{I}]^{\dagger}$; it can be checked
that $\mathcal{V}_g$ is a linear representation (in the space of superoperators)
of the symmetry group $G$, and that it commutes with $\mathcal{A}^{\dagger}$ for all $g \in G$.
Therefore, since $\Lambda$ is the unique eigenvector of $\mathcal{A}^{\dagger}$ with
eigenvalue 1, it must satisfy $\mathcal{V}_g(\Lambda) = \chi(g) \Lambda$ for
some scalars $\chi(g)$. The fact that $\mathcal{V}_g$ is trace preserving
ensures that $\chi(g) = 1$. Hence we find that $\Lambda$ commutes with $V(g)
\otimes \mathbb{I}$; it follows by Schur's Lemma that $\Lambda$ factorizes as
$\Lambda = \Omega \otimes \widetilde{\Lambda}$ for some density operator
$\widetilde{\Lambda}$, and where $\Omega = \mathbb{I}/\sqrt{|G|}$ is the
maximally-mixed state on the protected subsystem (recall that the $\sqrt{|G|}$
is the dimension of the protected subsystem). It
follows that there is a $\sqrt{|G|}$-fold
degeneracy in the entanglement spectrum throughout the SPT phase, generalizing
the 2-fold degeneracy in the $(Z_2 \times Z_2)$-protected Haldane phase \cite{pollmann-prb-2010}.

We can define the transfer channel corresponding to the dual FCS (generated by
$\widetilde{A}$) according to
\begin{equation}
\widetilde{\cA}(\rho) = \sum_k \widetilde{A}[k] \rho \widetilde{A}[k]^{\dagger}.
\end{equation}
Observe that $\cA$ unital implies that $\widetilde{\cA}$ is also unital,
and that 
\begin{equation}
\cA^{\dagger}(\Omega \otimes \sigma) = \Omega \otimes \widetilde{\cA}^{\dagger}(\sigma)
\end{equation}
for any operator $\sigma$ acting on the junk subsystem. Hence, any eigenvalue of
$\widetilde{\cA}$ must also be an eigenvalue of $\cA$.
From this we can see that the dual FCS is also a pure FCS, and
$\widetilde{\Lambda}$ is the unique fixed point of $\widetilde{\cA}^{\dagger}$; thus, the
entanglement spectrum of the dual state is the same as that of the original
state, but with the $\sqrt{|G|}$-fold degeneracy removed\footnote{A similar
property was found numerically for the Kennedy-Tasaki transformation in 
\cite{topological_disentangler}. We will discuss the connection between that
transformation and our ``dual state'' in Appendix \ref{kt_connection}.}.
The respective correlation lengths
obey the inequality $\widetilde{\xi} \leq \xi$.

\begin{figure}
\subfloat[]{\input{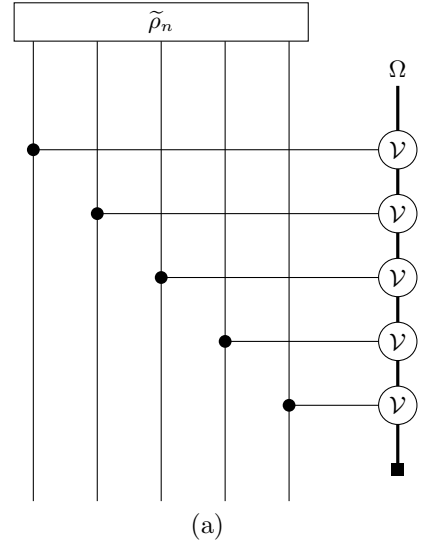}}
\hspace{2cm}
\subfloat[]{\input{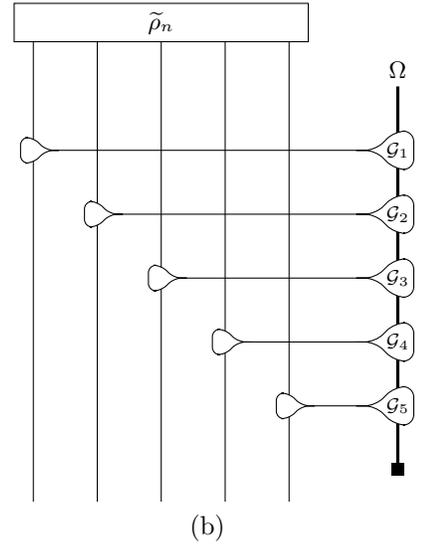}}
\caption{\label{mps_with_ends_10}(a) The quantum circuit generating the reduced
state on $n$ sites for the
original FCS. The ancilla particle is initially in the maximally-mixed state
$\Omega$,
and the symbol \protect\tikz{\protect \traceout{(0,0)}}
indicates that it should be traced out at the end. (b) The adaptive measurement
protocol acting on the original FCS is equivalent to a dual process acting on
the dual state, comprising a sequence of unitary interactions $\mathcal{G}_k$
coupling the dual state to an ancilla particle.}
\end{figure}

We now outline how the arguments of Section \ref{sec_dual} can be put 
on a rigorous footing within the pFCS formalism.
Given everything that we have established so far, it can be shown that the reduced state $\rho_n$ of the original FCS on a
block of $n$ adjacent sites can be obtained from the corresponding reduced state
$\widetilde{\rho}_n$ of the dual FCS according to the quantum circuit shown in
\figref{mps_with_ends_10}(a). Assuming that we choose $n$ large enough that all
measurements take place within this block of $n$ sites, we can then make
arguments similar to those of \secref{sec_dual}, and we find that the result of the
adaptive measurement protocol is equivalent to a sequence of interactions
between the dual state and an ancilla particle as shown in
\figref{mps_with_ends_10}(b), with the unitary interactions $\mathcal{G}_k$ defined as
they were previously.

\section{Connection with the Kennedy-Tasaki transformation}
\label{kt_connection}
The Kennedy-Tasaki (KT) transformation \cite{kt,kt2} is a non-local unitary transformation which
transforms a spin-1 chain in the $Z_2 \times Z_2$ symmetry-protected Haldane
phase into a system where the symmetry is spontaneously broken in the bulk. In this
section, adapting Ref.~\cite{topological_disentangler}, we will define a generalized version of the Kennedy-Tasaki
transformation, which can be applied to any system in the non-trivial SPT phase with
respect to an on-site representation $U(g) = [u(g)]^{\otimes N}$ of the symmetry
group $G = Z_2 \times Z_2$ (like the original KT transformation, our
generalization is defined for finite chains with open boundary conditions and no
terminating particles). We will show that, when the ground state of the original
system can be expressed as a pFCS, the ground state of the KT-transformed
system is essentially the same as the state which we have referred to throughout
this paper as the ``dual state''. We expect that for general ground states the
situation should be qualitatively similar.

Observe that the symmetry group is generated
by two commuting elements $x$ and $z$; hence for any $g \in G$, we can write $g
= x^{m(g)} z^{n(g)}$ for some $m(g)$ and $n(g)$ taking values of 0 or 1. We will
choose to write the unique non-trivial irreducible projective representation as
$V(g) = X^{m(g)} Z^{n(g)}$, where $X$ and $Z$ are the appropriate Pauli
operators. This is a rephasing of the Pauli representation $V_\cP$ defined in
\eqnref{pauli_projective}; thus the factor system is different to, but in the same cohomology
class as, that of $V_\cP$. (The construction of the dual state does depend
on the specific choice of representative factor system for a cohomology class, although in a
fairly trivial way; the present choice is the one that will ensure that the
Kennedy-Tasaki transformation reproduces the dual state exactly.)

The generalized Kennedy-Tasaki transformation $\cD_{KT}$ is then defined as
follows:
\begin{equation}
\label{DKT_product}
\cD_{KT} = \prod_{k < l} D_{kl},
\end{equation}
where $D_{kl}$ is a unitary coupling the two sites $k$ and $l$ according to
\begin{equation}
\label{Dkl}
D_{kl} \equiv \sum_i \ket{i} \bra{i} \otimes u\left(x^{m(g_i)}\right).
\end{equation}
Notice that all the operators $D_{kl}$ appearing in the product
\eqnref{DKT_product} commute.
In
the case that the particles are spin-1's, with the $Z_2 \times Z_2$ rotation
symmetry consisting of $\pi$ rotations about the $x$, $y$, and $z$ particles
(the simultaneous eigenbasis of the symmetry for a single site is then $\{ \ket{x},
\ket{y}, \ket{z} \}$,
where $\ket{\alpha}$, $\alpha = x,y,z$ is the $0$ eigenstate of the spin
component operator $S_\alpha$), the transformation $\mathcal{D}_{KT}$ reduces to
the standard Kennedy-Tasaki transformation.
Specifically, Eqs.\ (\ref{DKT_product}) and (\ref{Dkl}) correspond to Eqs.~(4)
and (5) in Ref.~\cite{topological_disentangler}.

For open boundary conditions, we expect there to be an approximate (becoming
exact in the limit as the length of the chain goes to infinity) four-fold
degeneracy, due to the two-fold degeneracy associated with each edge. An
appropriate analogue of the SPT pFCS for this choice of boundary conditions is as follows:
the low-energy subspace is spanned by states of the form
\begin{multline}
\label{mps_with_ends_23}
\ket{\Psi(L,R)} = \\
\input{mps_with_ends_23.fig},
\end{multline}
where the MPS tensor $A$ obeys the usual symmetry condition,
where $\ket{L_{*}}$ and $\bra{R_{*}}$ are \emph{fixed} end vectors, while
$\ket{L}$ and $\bra{R}$ are allowed to vary (within the two-dimensional irrep
space) in order to generate the four-dimensional low-energy subspace.

Now we want to examine what happens to a state of the form
\eqnref{mps_with_ends_23} under
$\mathcal{D}_{KT}$. Suppose we set $\bra{R} = \bra{+}$, $\ket{L} = \ket{0}$.
It is then straightforward to show [using the symmetry condition
\eqnref{A_symmetry_condition} and
the decomposition \eqnref{mps_tensor_decomposition}]
that applying all the pairwise interactions
$D_{1l}$ involving the first site gives the result
\begin{multline}
\left(\prod_{l > 1} D_{l1}\right) \ket{\Psi(0,+)} = \\
\input{mps_with_ends_24.fig}.
\end{multline}
Continuing this process, we find that
\begin{multline}
\label{Dkt_gives_dual}
\mathcal{D}_{KT} \ket{\Psi(0,+)} = \\
\input{mps_with_ends_25.fig},
\end{multline}
which is the dual state. To obtain the other states within the low-energy subspace,
it is sufficient to observe that, using the symmetry condition on the tensor
$A$, we have
\begin{align}
\ket{\Psi(1,+)} &= [u(x)]^{\otimes N} \ket{\Psi(0,+)}, \\
\ket{\Psi(0,-)} &= [u(z)]^{\otimes N} \ket{\Psi(0,+)}, \\
\ket{\Psi(1,-)} &= [u(xz)]^{\otimes N} \ket{\Psi(0,+)},
\end{align}
and that $\mathcal{D}_{KT}$ commutes with $u(g)$ for all $g \in Z_2 \times Z_2$.
Hence, the other basis states for the Kennedy-Tasaki transformed system can be
obtained from the dual state by application of a symmetry operation $[u(g)]^{\otimes N}$ for some $g
\in G$. The dual state thus represents one of the four distinct
symmetry-breaking states within the low-energy subspace of the transformed
system.

Let us also note that, although they relate to different boundary conditions, the unitary transformation $\mathcal{D}$ that we introduced in Sec.~\ref{exact_mps} is equivalent to the KT transformation $\mathcal{D}_{KT}$, in the following sense.
Indeed, an important property of $\mathcal{D}_{KT}$ is that for any local
symmetry-respecting observable $\hat{o}$, $\mathcal{D}_{KT} \hat{o}
\mathcal{D}_{KT}^{\dagger}$ remains local and symmetry-respecting. It turns out
that this also
holds for $\mathcal{D}$, in the case of observables $\hat{o}$ acting in the bulk
(the fact that
$\mathcal{D} \hat{o} \mathcal{D}^{\dagger}$ is local was established in
\secref{sec_dual_hamiltonian}; it
can be shown that $\mathcal{D} \hat{o} \mathcal{D}^{\dagger}$ still respects the
on-site symmetry as well). Thus, like $\mathcal{D}_{KT}$, the transformation
$\mathcal{D}$ can be applied to yield a local, symmetry-respecting Hamiltonian
$\widetilde{H}$ in the bulk. It can be shown $\widetilde{H}$ is precisely the KT transformed Hamiltonian~\cite{Else_forthcoming}.  Thus, in line with the results for pFCS described above, we expect $\widetilde{H}$ to have
four degenerate, locally distinguishable symmetry-breaking ground states in the
bulk. On the
other hand, when the transformation $\mathcal{D}$ is applied
to the edge interactions (those which couple the ends of the chain
to the terminating particles), the result need not respect the symmetry.
Therefore, including the edge interactions favours one of the symmetry-breaking
ground states over the others
and leads to the
non-degeneracy of the dual state $\ket{\widetilde{\Psi}}$.

\section{The factorization condition for ground states of a local Hamiltonian}
\label{appendix_factorization}
In this Appendix, we will show how to derive the approximate factorization result
\eqnref{weaker_general_bound} for a
non-degenerate gapped ground state $\ket{\Psi}$ of a local Hamiltonian. Our main tool is the
existing theorem on the exponential decay of correlation functions for such a
ground state \cite{hastings_lsm,nachtergaele_exponential,hastings_locality}. This theorem
states that there exists a correlation length $\xi$ and a function $f(x)$, with
$f(x) = \mathcal{O}[\exp(-x/\xi)]$ as $x \to \infty$, such that for any sets of
lattice sites $X$ and $Y$, and observables $A_X$ and $B_Y$ supported on $X$ and
$Y$ respectively, we have
\begin{multline}
\frac{|\langle A_X B_Y \rangle_\Psi - \langle A_X \rangle_\Psi \langle B_Y
\rangle_\Psi|}{\| A_X \| \| B_Y \|} \\
\leq f\bigl(\mathrm{dist}(X,Y)\bigr)
\min \{ |X|, |Y| \},
\end{multline}
where $\langle \cdot \rangle_\Psi$ denotes the expectation value of an
observable with respect to $\ket{\Psi}$, and $|X|$ denotes the number of
sites contained in the set $X$.

Suppose that instead of two sets of lattice sites, we have $m$ disjoint sets
$X_1, \ldots, X_m$. Let $N = \sum_{k=1}^m |X_k|$ be the total number of lattice
sites contained within all of the $X_k$'s. We can obtain the following
corollary:
\begin{lemma}
\label{exponential_corollary}
For any observables $A_{X_1}, \ldots, A_{X_m}$ supported on the respective
sets,
\begin{equation}
\label{cor}
\frac{\left|\langle A_{X_1} \cdots A_{X_m} \rangle_\Psi - \langle A_{X_1}
\rangle_\Psi \cdots \langle A_{X_m} \rangle_\Psi \right|}
{\| A_{X_1} \| \cdots \| A_{X_m} \|}
\leq  f(R) N,
\end{equation}
where $R$ is the smallest distance between any two of the $X_k$'s, i.e.\ $R
\equiv \min_{k \neq l} \dist(X_k,X_l)$.
\begin{proof}
Without loss of generality, we can assume that $\| A_{X_k} \| = 1$ for all
$k$.
Observing that $| \langle A_{X_1} \rangle | \leq \|A_{X_1}\| = 1$, we have that
\begin{multline}
\label{induction}
\left|\langle A_{X_1} \cdots A_{X_m} \rangle_\Psi - \langle A_{X_1}
\rangle_\Psi \cdots \langle A_{X_m} \rangle_\Psi \right|
\\ \leq |\langle A_{X_1} \cdots A_{X_m} \rangle_\Psi - 
\langle A_{X_1} \rangle_\Psi
\langle A_{X_2} \cdots A_{X_m} \rangle_\Psi|
\\+ |\langle A_{X_2} \cdots A_{X_m} \rangle_\Psi - \langle A_{X_2} \rangle_\Psi
\cdots \langle A_{X_m} \rangle_\Psi|.
\end{multline}
The two-body result implies that the first term in the right-hand side of \eqnref{induction} is bounded by $|X_1|
f(R)$. Continuing in this way, the lemma follows by induction.
\end{proof}
\end{lemma}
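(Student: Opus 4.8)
The plan is to prove the bound in \eqnref{cor} by induction on the number of sets $m$, peeling off one observable at a time and invoking the two-body correlation-decay theorem stated just above at each step. First I would reduce to the normalized case: since both sides of \eqnref{cor} are homogeneous of degree one in each $A_{X_k}$, I may assume $\| A_{X_k} \| = 1$ for all $k$, so that the denominator equals $1$ and it suffices to bound the numerator by $f(R) N$.

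For the inductive step I would group the last $m-1$ observables into a single operator $B = A_{X_2} \cdots A_{X_m}$ supported on the composite region $Y = X_2 \cup \cdots \cup X_m$, and use the triangle inequality to split off the first factor:
\begin{multline*}
| \langle A_{X_1} \cdots A_{X_m} \rangle_\Psi - \langle A_{X_1} \rangle_\Psi \cdots \langle A_{X_m} \rangle_\Psi |
\leq | \langle A_{X_1} B \rangle_\Psi - \langle A_{X_1} \rangle_\Psi \langle B \rangle_\Psi | \\
+ | \langle A_{X_1} \rangle_\Psi | \, | \langle A_{X_2} \cdots A_{X_m} \rangle_\Psi - \langle A_{X_2} \rangle_\Psi \cdots \langle A_{X_m} \rangle_\Psi |.
\end{multline*}
The first term on the right is exactly a two-body correlator, so the decay theorem bounds it by $f\bigl(\dist(X_1,Y)\bigr)\, \min\{ |X_1|, |Y| \} \, \|A_{X_1}\| \, \|B\|$. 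Here I would use three facts: that $\dist(X_1, Y) \geq R$, so $f(\dist(X_1,Y)) \leq f(R)$ (taking $f$ non-increasing without loss of generality); that $\min\{|X_1|,|Y|\} \leq |X_1|$; and that submultiplicativity of the operator norm gives $\|B\| \leq \prod_{k \geq 2} \|A_{X_k}\| = 1$. Hence the first term is at most $f(R)\,|X_1|$. For the second term I would use $|\langle A_{X_1}\rangle_\Psi| \leq \|A_{X_1}\| = 1$ and apply the induction hypothesis to the $m-1$ sets $X_2, \ldots, X_m$, whose pairwise minimum distance is still at least $R$, yielding $f(R) \sum_{k \geq 2} |X_k|$. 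Adding the two contributions gives $f(R)\bigl(|X_1| + \sum_{k\geq 2}|X_k|\bigr) = f(R) N$, completing the induction; the base case $m=1$ is immediate, since for a single set the difference of expectations vanishes.

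The argument is essentially bookkeeping, so I do not expect a deep obstacle; the one point requiring care is the grouping step, where the product $A_{X_2}\cdots A_{X_m}$ is treated as a single observable on the region $Y$. I must check that the two-body theorem still applies (it does, as $Y$ is merely another set of sites), that the factor $\min\{|X_1|,|Y|\}$ can be traded for the single-set size $|X_1|$, and that the operator norm of the grouped product does not exceed $1$. Keeping these three estimates uniform across the successive groupings is precisely what makes the bound telescope cleanly into the extensive factor $N = \sum_k |X_k|$, rather than accumulating a worse dependence on $m$.
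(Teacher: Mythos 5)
Your proposal is correct and follows essentially the same route as the paper's proof: induction on $m$ via the triangle inequality, grouping $A_{X_2}\cdots A_{X_m}$ as a single observable on $X_2\cup\cdots\cup X_m$ and applying the two-body decay bound to the first term. You simply spell out the implicit details (normalization, $\|B\|\leq 1$ by submultiplicativity, $\dist(X_1,Y)\geq R$, and $\min\{|X_1|,|Y|\}\leq|X_1|$) that the paper leaves to the reader.
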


Now we want to show that the reduced state on the union of
the $X_k$'s is close to a product state. To do this, we make use of the
following lemma. We remind the reader that here we make use of both the trace norm $\| A \|_1 =
\Tr\sqrt{A^{\dagger} A}$ and the spectral norm $\| A \| =
\max_{\braket{\psi}{\psi} = 1} \| A \ket{\psi} \|$. They are both distinct from
the norm induced by the Hilbert-Schmidt
inner product.
\begin{lemma}
\label{basis_lemma}
Consider a vector space of dimension $D$.
Then there exists a
basis $\{ E_i | i = 1,...,D^2\}$ for the space of linear operators supported on the site,
orthonormal with respect to the Hilbert-Schmidt inner product $\langle A, B
\rangle = \Tr(A^{\dagger} B)$, and comprising Hermitian
operators $E_i$ such that such that $\|E_i\|_1 \| E_i \| = 1$.
\begin{proof}
Given a basis $\{\ket{m}\}$, $m = 1,\cdots,d$, such a set can be constructed
comprising the operators $\ket{m}\bra{m}$, and $(1/\sqrt{2})(\ket{m} \bra{n} +
\ket{n}\bra{m})$ and $(i/\sqrt{2})(\ket{m} \bra{n} - \ket{n} \bra{m})$ for
$m \neq n$.
\end{proof}
\end{lemma}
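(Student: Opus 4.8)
The plan is to prove the lemma constructively, exhibiting an explicit basis rather than arguing abstractly. The organizing observation is a criterion for the norm-product condition: for any operator $E$ normalized in the Hilbert--Schmidt norm, $\|E\|_{\mathrm{HS}}=1$, one has $\|E\|_1\,\|E\|=1$ \emph{precisely} when all nonzero singular values of $E$ coincide. Indeed, if $E$ has $r$ equal nonzero singular values, then $\|E\|_{\mathrm{HS}}=1$ forces each to equal $1/\sqrt{r}$, so that $\|E\|=1/\sqrt{r}$ and $\|E\|_1=r/\sqrt{r}=\sqrt{r}$, giving product $1$. Thus it suffices to produce an orthonormal (in the Hilbert--Schmidt inner product) basis of Hermitian operators, each of which has a flat, fully degenerate spectrum on its support.

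First I would fix an orthonormal basis $\{\ket{m}\}_{m=1}^{D}$ of the $D$-dimensional site space and take the $D$ diagonal projectors $\ket{m}\bra{m}$ together with, for each pair $m<n$, the two Hermitian off-diagonal combinations $E^{+}_{mn}=\tfrac{1}{\sqrt{2}}(\ket{m}\bra{n}+\ket{n}\bra{m})$ and $E^{-}_{mn}=\tfrac{i}{\sqrt{2}}(\ket{m}\bra{n}-\ket{n}\bra{m})$. Counting gives $D+2\binom{D}{2}=D^{2}$ operators, the full dimension of the operator space, and Hermiticity is immediate in each case.

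Next I would check Hilbert--Schmidt orthonormality by direct trace computation. The projectors are mutually orthonormal and are orthogonal to every off-diagonal element, since the latter have vanishing diagonal; two off-diagonal elements with distinct index pairs have products supported off the diagonal (e.g.\ sharing one index gives $\ket{m}\bra{q}$ with $m\neq q$), hence traceless; and within a single pair, $(\ket{m}\bra{n}+\ket{n}\bra{m})(\ket{m}\bra{n}-\ket{n}\bra{m})=\ket{n}\bra{n}-\ket{m}\bra{m}$ is traceless, while each $E^{\pm}_{mn}$ has unit Hilbert--Schmidt norm. The norm-product condition then follows from the flat-spectrum criterion above: each $\ket{m}\bra{m}$ is a rank-one projector with a single singular value $1$, and each $E^{\pm}_{mn}$ acts on the two-dimensional span of $\ket{m},\ket{n}$ as $1/\sqrt{2}$ times a Pauli $X$ or $Y$, hence carries the two degenerate singular values $1/\sqrt{2}$; in both cases $\|E_i\|_1\,\|E_i\|=1$.

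I do not anticipate a genuine obstacle, as the lemma is elementary; the only step requiring any care is the singular-value computation for the off-diagonal elements, and phrasing it through the flat-spectrum criterion renders even that routine. The one caveat worth flagging is a minor bookkeeping inconsistency in the statement (the index range is written with $d$ while the dimension is called $D$, and "such that" is duplicated), which I would reconcile so that the basis is indexed consistently by $D$.
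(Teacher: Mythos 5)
Your construction is exactly the one the paper uses (diagonal projectors $\ket{m}\bra{m}$ plus the Hermitian and anti-Hermitian-times-$i$ off-diagonal combinations), and your verification via the flat-singular-spectrum criterion is correct. The paper simply states the construction and leaves these checks implicit, so your proposal matches its proof with the details filled in.
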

Now we are ready to prove the main result.
\begin{thm}
Let $\ket{\Psi}$ be the non-degenerate gapped ground state of a local
Hamiltonian. Let $\rho$ be the reduced state of $\ket{\Psi}$ on
$\bigcup_{k=1}^m X_k$, and let $\rho_k$ be the reduced state on $X_k$. Then
\begin{equation}
\left\| \rho - \rho_{\mathrm{prod}} \right\|_1 \leq N f(R) d^{2N},
\end{equation}
where $\rho_{\mathrm{prod}} = \bigotimes_k \rho_k$.
\begin{proof}
Recall that for a linear operator $P$, $\| P \|_1 = \max_{\|A\| =
1}|\Tr(A P)|$,
where the maximization is over all linear operators $A$ with unit spectral norm.
Now, we can expand
\begin{equation}
A = \sum_{i_1, \cdots, i_m} \alpha_{i_1, \cdots, i_m} E_{i_1} \otimes
\cdots \otimes E_{i_m},
\end{equation}
where the tensor product is over the sets $X_1, \cdots, X_m$; the
the $E_i$ are as constructed in Lemma \ref{basis_lemma}; and the scalars
$\alpha_{i_1, \cdots, i_m}$ are determined by
$\alpha_{i_1, \cdots, i_m} = \Tr[A^{\dagger}(E_{i_1} \otimes \cdots \otimes E_{i_m})]$,
which implies (since $\| A \| = 1$) that
\begin{equation}
|\alpha_{i_1, \cdots, i_m}| \leq \| E_{i_1} \|_1 \cdots \| E_{i_m} \|_1.
\end{equation}
Now, notice that
\begin{align}
\Tr(A\rho) &= \sum_{i_1, \cdots, i_m} \alpha_{i_1, \cdots, i_m} \langle E_{i_1} \otimes \cdots
\otimes E_{i_m} \rangle_{\rho}, \\
\Tr(A\rho_{\mathrm{prod}}) &= \sum_{i_1, \cdots, i_m}
\alpha_{i_1, \cdots, i_m} \langle E_{i_1} \rangle_{\rho_1} \cdots \langle E_{i_m}
\rangle_{\rho_m}.
\end{align}
Hence, by Lemma \ref{exponential_corollary}, we find that
\begin{align}
& | \Tr[A(\rho - \rho_{\mathrm{prod}})] | \\
&\leq N f(R) \sum_{i_1, \cdots, i_m} |\alpha_{i_1,\cdots,i_m}| \| E_{i_1} \| \cdots \|
E_{i_m} \| \\
&\leq N f(R) \sum_{i_1, \cdots, i_m} (\| E_{i_1} \|_1 \cdots \| E_{i_m} \|_1)
(\| E_{i_1} \| \cdots \| E_{i_m} \|) \\
&= N f(R) d^{2N},
\end{align}
since $\| E_k \|_1 \| E_k \| = 1$ for all $k$.
\end{proof}
\end{thm}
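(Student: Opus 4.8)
The plan is to bound $\|\rho - \rho_{\mathrm{prod}}\|_1$ through the variational (dual) characterization of the trace norm, $\| P \|_1 = \max_{\|A\| = 1} |\Tr(A P)|$, where the maximum is over all operators $A$ of unit spectral norm supported on $\bigcup_{k=1}^m X_k$. Fixing such an $A$, the first step is to expand it in a product operator basis built region-by-region. For each $X_k$ I invoke Lemma \ref{basis_lemma} to obtain a Hermitian, Hilbert--Schmidt-orthonormal basis $\{ E_i \}$ with the crucial normalization $\|E_i\|_1 \|E_i\| = 1$, and write $A = \sum_{i_1,\dots,i_m} \alpha_{i_1,\dots,i_m}\, E_{i_1} \otimes \cdots \otimes E_{i_m}$, where the coefficients are recovered by orthonormality as $\alpha_{i_1,\dots,i_m} = \Tr[A^{\dagger}(E_{i_1} \otimes \cdots \otimes E_{i_m})]$.

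The second step is to control these coefficients and to supply the correlation-decay input. Because $\|A\| = 1$, a Hölder-type estimate gives $|\alpha_{i_1,\dots,i_m}| \leq \|E_{i_1}\|_1 \cdots \|E_{i_m}\|_1$, exactly the form needed to tame the sum. The essential analytic ingredient is Lemma \ref{exponential_corollary}, the $m$-region version of exponential clustering, which I would establish \emph{first}: starting from the standard two-point decay theorem for non-degenerate gapped ground states of local Hamiltonians \cite{hastings_locality}, one peels off a single factor at a time, using $|\langle A_{X_1}\rangle_\Psi| \leq \|A_{X_1}\| = 1$ to bound the cross term, and proceeds by induction to obtain a bound $f(R)\,N$ on the connected $m$-point function of unit-spectral-norm observables, with $R$ the minimum pairwise separation and $N = \sum_k |X_k|$.

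The third step assembles these pieces. Expanding $\Tr(A\rho) = \sum \alpha_{i_1,\dots,i_m} \langle E_{i_1} \otimes \cdots \otimes E_{i_m}\rangle_\rho$ and $\Tr(A\rho_{\mathrm{prod}}) = \sum \alpha_{i_1,\dots,i_m} \langle E_{i_1}\rangle_{\rho_1} \cdots \langle E_{i_m}\rangle_{\rho_m}$, the difference is controlled term-by-term by Lemma \ref{exponential_corollary}, so that $|\Tr[A(\rho - \rho_{\mathrm{prod}})]| \leq N f(R) \sum |\alpha_{i_1,\dots,i_m}|\, \|E_{i_1}\| \cdots \|E_{i_m}\|$. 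Inserting the coefficient bound and then using $\|E_i\|_1 \|E_i\| = 1$ collapses every summand's operator-norm prefactor to unity, leaving only a count of basis elements. Since region $X_k$ carries $d^{2|X_k|}$ basis operators, the total number of terms is $\prod_k d^{2|X_k|} = d^{2N}$, giving $|\Tr[A(\rho - \rho_{\mathrm{prod}})]| \leq N f(R) d^{2N}$ uniformly in $A$, which is the claim.

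I expect the main obstacle to be not any single step but the unavoidable $d^{2N}$ blow-up that this brute-force strategy incurs: bounding $A$ by a complete product basis and estimating each of exponentially many terms separately is wasteful, and it is precisely why only the weaker bound \eqnref{weaker_general_bound} (with its exponential prefactor) can be proven this way, rather than the $m f(R)$ scaling that holds for pFCS. Tightening the estimate to remove the exponential factor would require genuinely exploiting the structure of the ground state beyond two-point clustering, and I do not see how to do so within this operator-expansion framework; fortunately, the logarithmic scaling $R_{\mathrm{min}}/\widetilde{\xi} = \mathcal{O}(m) + \mathcal{O}[\log(1/\epsilon)]$ that the weaker bound forces is still good enough for the fault-tolerance argument, so the looseness is tolerable.
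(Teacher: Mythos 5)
Your proposal is correct and follows essentially the same route as the paper's proof: the dual characterization of the trace norm, the expansion of $A$ in the Hermitian product basis of Lemma \ref{basis_lemma}, the coefficient bound $|\alpha_{i_1,\dots,i_m}| \leq \|E_{i_1}\|_1 \cdots \|E_{i_m}\|_1$, the term-by-term application of Lemma \ref{exponential_corollary} (which the paper likewise proves by peeling off one factor at a time and inducting), and the final collapse via $\|E_i\|_1\|E_i\| = 1$ to a count of $d^{2N}$ basis elements. Your closing remark about the unavoidable exponential blow-up of this brute-force expansion also matches the paper's own assessment of why only the weaker bound \eqnref{weaker_general_bound} is obtained.
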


\section{Local perturbations perturb continuously}
\label{appendix_smallperturbations}
Physically, it should be clear that small perturbations in a gapped local Hamiltonian
lead to small variations in the reduced state obtained from the ground state on a finite
region of the lattice.
Here we will give a rigorous proof of this fact, as follows:
\begin{thm}
Let $H(s)$ be a differentiable path of Hamiltonians of the form
\begin{equation}
H(s) = \sum_{u \in \Lambda} H_u(s),
\end{equation}
where the sum is over all the lattice sites $u$ in a finite-dimensional lattice
$\Lambda$, and $H_u(s)$ is supported on the set $\mathcal{B}(u,r)$ of sites
within some fixed distance $r$ of $u$.
Suppose that for $0 \leq s \leq S$, the Hamiltonian $H(s)$ has a unique ground
state $\ket{\Psi(s)}$, and there is a uniform lower bound $\gamma > 0$
on the gap. 
Then there exists a constant $c$ (dependent only on the lattice
geometry, on $r$, and on $\gamma$) such that for any set $X$ of lattice sites,
we have
\begin{equation}
\|\rho_X(S) - \rho_X(0)\|_1 \leq  c |X| J^{\prime} S,
\end{equation}
where $\rho_X(s)$ is the reduced state on $X$, i.e.\ $\rho_X(s) = \Tr_{X^c}
\ket{\Psi(s)}\bra{\Psi(s)}$, and
$J^{\prime} \equiv \max_{u \in \Lambda, s \in [0,S]} \|\partial_s
H_u(s) \|$.
\begin{proof}
The proof relies on the following consequence of the theory of quasiadiabatic
continuation \cite{hastings_wen,osborne_quasiadiabatic}:
under the given assumptions, there exists a family of Hamiltonians
$\mathcal{H}(s)$ such that
\begin{equation}
i \frac{d}{ds}\ket{\Psi(s)} =  \mathcal{H}(s) \ket{\Psi(s)},
\end{equation}
where $\mathcal{H}(s)$ can be written as
\begin{equation}
\label{cH_sum}
\cH(s) = \sum_{u \in \Lambda} \cH_u(s),
\end{equation}
such that for any site $u$, $\cH_u(s)$ can be approximated by an observable
$\overline{\cH_u}(s)$ supported on $X^c$ (the complement of $X$), with error
\begin{equation}
\label{cH_commutator}
\| \overline{\cH_u}(s) - \cH_u(s) \| \leq J^{\prime} f\left(\frac{\mathrm{dist}(u,X)}{\gamma}\right)
\end{equation}
where $f$ is a rapidly decaying function (dependent only on $r$).

Now, for $s \in [0,S]$, we have (where $\rho(s) \equiv \ket{\Psi(s)}\bra{\Psi(s)}$)
\begin{equation}
\partial_s \rho_X(s) = i \Tr_{X^c} [\cH(s),\rho(s)].
\end{equation}
Hence
\begin{align}
\left\|\partial_s \rho_X(s)\right\|_1 &= \max_{\|A_X\|} \left|\Tr\bigl(
A_X \Tr_{X^c}
[\cH(s),\rho(s)] \bigr)\right| \\
&= \max_{\|A_X\| = 1} \left|\Tr\bigl( A_X [\cH(s),\rho(s)] \bigr)\right| \\
&= \max_{\|A_X\| = 1} \left|\Tr \bigl( [A_X, \cH(s)] \rho(s) \bigr)\right| \\
&\leq \max_{\|A_X\| = 1} \| [A_X, \cH(s)] \|.  \\
&\leq \max_{\|A_X\| = 1} \sum_{u \in \Lambda} \| [A_X, \cH_u(s)] \|
\label{the_sum}
\end{align}
Here the maximization is over all linear operators $A_X$ supported on $X$ with
unit spectral norm. We
have made use of the fact that for any linear operator $P$, $\| P\|_1 =
\max_{\|A\| = 1} |\Tr(A P)|$, with $\|A\|$ the spectral norm.

Now, for any operator $A_X$ supported on $X$, we have that
$[A_X, \overline{\cH_u}(s)] = 0$ since $A_X$ and $\overline{\cH_u}(s)$ are
supported on disjoint subsets. Hence, using \eqnref{cH_commutator} (and the fact
that $\| A_X \| = 1$), we find that
\begin{align}
\|[ A_X, \cH_u(s) ]\| &\leq 2 \| A_X \| \| \overline{\cH_u}(s) - \cH_u(s) \| \\ &\leq 2J^{\prime}
f\left(\frac{\mathrm{dist}(u,X)}{\gamma}\right),
\end{align}
so that
\begin{equation}
\| \partial_s \rho_X(s) \|_1 \leq 2J^{\prime} \sum_{u \in \Lambda} f\left( \frac{\min_{x \in
X}\mathrm{dist}(u,x)}{\gamma} \right).
\end{equation}
We can bound the sum according to
\begin{align}
& \sum_{u \in \Lambda}
 f\left(\frac{\dist(u,X)}{\gamma}\right) \\
&= \sum_{u \in \Lambda} f\left( \frac{\min_{x \in
X}\mathrm{dist}(u,x)}{\gamma} \right) \\
&\leq \sum_{u \in \Lambda} \sum_{x \in X} f\left( \frac{\mathrm{dist}(u,x)}{\gamma}
\right) \\
&= \sum_{x \in X} \sum_{u \in \Lambda} f\left(\frac{\mathrm{dist}(u,x)}{\gamma}\right) \\
&\leq |X| \max_{x \in X} \sum_{u \in \Lambda} f\left(\frac{\mathrm{dist}(u,x)}{\gamma}\right) \\
&\leq |X| c/2,
\end{align}
where in the last step the rapid decay of $f$ ensures that the sum is bounded by
a constant $c/2$ dependent on the lattice geometry, $\gamma$, and $r$.

Hence, by the triangle inequality for the trace norm, we have
\begin{align}
\| \rho_X(S) - \rho_X(0) \|_1 &\leq \int_0^S \left\| \partial_s \rho_X(s) \right\|_1 ds \\
&\leq c |X| J^{\prime} S,
\end{align}
as required.
\end{proof}
\end{thm}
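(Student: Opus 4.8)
The plan is to bound the rate of change of the reduced state along the path $H(s)$ and then integrate in $s$. The one external ingredient I would bring in is the theory of \emph{quasiadiabatic continuation} \cite{hastings_wen,osborne_quasiadiabatic}. Because the gap is bounded below by $\gamma$ and $\ket{\Psi(s)}$ stays non-degenerate along the path, there is a Hermitian generator $\mathcal{H}(s)$ implementing the ground-state evolution, $i\,\tfrac{d}{ds}\ket{\Psi(s)}=\mathcal{H}(s)\ket{\Psi(s)}$, and crucially $\mathcal{H}(s)$ inherits quasilocality from $H(s)$: it decomposes as $\mathcal{H}(s)=\sum_{u\in\Lambda}\mathcal{H}_u(s)$ where each $\mathcal{H}_u(s)$ is approximable by an operator supported outside the ball of radius $D$ about $u$, with error decaying rapidly in $D/\gamma$ and overall strength set by $J^{\prime}=\max_{u,s}\|\partial_s H_u(s)\|$. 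Setting up this quasilocal form with explicit control of the tail in terms of $\gamma$ and $r$ is the part I expect to be the genuine obstacle; everything downstream is routine manipulation of operator norms.

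Given the generator, I would write $\rho(s)=\ket{\Psi(s)}\bra{\Psi(s)}$ and differentiate the reduced state to get $\partial_s\rho_X(s)=i\,\Tr_{X^c}[\mathcal{H}(s),\rho(s)]$ (the overall sign is immaterial, since only magnitudes enter the bound). To control its trace norm I would use the duality $\|P\|_1=\max_{\|A\|=1}|\Tr(AP)|$, maximizing over operators $A_X$ supported on $X$ with unit spectral norm, and then shuffle the commutator onto the test operator:
\[
\Tr\bigl(A_X\,\Tr_{X^c}[\mathcal{H},\rho]\bigr)=\Tr\bigl([A_X,\mathcal{H}]\,\rho\bigr)\le \|[A_X,\mathcal{H}]\|\le \sum_{u\in\Lambda}\|[A_X,\mathcal{H}_u(s)]\|.
\]
Here quasilocality is exactly what pays off: since $A_X$ lives on $X$ and the approximating part of $\mathcal{H}_u(s)$ lives on $X^c$, those two pieces commute, so each term is bounded by twice the approximation error, $\|[A_X,\mathcal{H}_u(s)]\|\le 2J^{\prime}f\bigl(\dist(u,X)/\gamma\bigr)$.

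It then remains to sum over $u$ uniformly in $X$. I would write $\dist(u,X)=\min_{x\in X}\dist(u,x)$, use positivity of $f$ to dominate the min by the full double sum $\sum_{x\in X}\sum_{u\in\Lambda}f(\dist(u,x)/\gamma)$, and invoke the rapid decay of $f$ so that each inner lattice sum converges to a constant depending only on the lattice geometry, $r$, and $\gamma$. This gives $\|\partial_s\rho_X(s)\|_1\le c\,|X|\,J^{\prime}$ uniformly in $s$, after which the triangle inequality for the trace norm and integration yield $\|\rho_X(S)-\rho_X(0)\|_1\le\int_0^S\|\partial_s\rho_X(s)\|_1\,ds\le c\,|X|\,J^{\prime}S$, as claimed. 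The subtle point beyond the quasiadiabatic input is precisely the geometric summation step, which is what guarantees $c$ is independent of the size and shape of $X$.
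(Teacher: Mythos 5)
Your proposal follows the paper's proof essentially step for step: the same appeal to quasiadiabatic continuation for a quasilocal generator $\mathcal{H}(s)=\sum_u \mathcal{H}_u(s)$ with tails controlled by $J'$ and $f(\dist(u,X)/\gamma)$, the same trace-norm duality and commutator-shuffling onto a test operator $A_X$, the same bound $\|[A_X,\mathcal{H}_u(s)]\|\le 2J'f(\dist(u,X)/\gamma)$ via the approximant supported on $X^c$, the same domination of the lattice sum by $|X|$ times a convergent geometric sum, and the same final integration in $s$. The argument is correct and matches the paper's.
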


\ifdraft
\end{appendices}
\fi

\ifdraft
   \bibliographystyle{apsrev4-1}
\fi
\bibliography{references}

\end{document}